\theoremstyle{plain}
\newtheorem{thm}{Theorem}[section]
\newtheorem{lem}[thm]{Lemma}
\newtheorem{cor}[thm]{Corollary}
\definecolor{darkgrn}{rgb}{0, 0.8, 0}
\newcommand{\R}{ {\mathbb R} }
\newcommand{\tG}{\tilde{G}}
\begin{document}

\author{Prashant Gupta, Yiran Guo, Narasimha Boddeti, Bala Krishnamoorthy\\
Washington State University}

\date{}

\title{SFCDecomp: Multicriteria Optimized Tool Path Planning in 3D Printing using Space-Filling Curve Based Domain Decomposition}

\maketitle

\begin{abstract}
  We explore efficient optimization of toolpaths based on multiple criteria for large instances of 3D printing problems.
  We first show that the minimum turn cost 3D printing problem is NP-hard, even when the region is a simple polygon.  
  We develop \emph{SFCDecomp}, a space filling curve based decomposition framework to solve large instances of 3D printing problems efficiently by solving these optimization subproblems independently.
  For the Buddha model, our framework builds toolpaths over a total of 799,716 nodes across 169 layers, and for the Bunny model it builds toolpaths over 812,733 nodes across 360 layers.
  Building on SFCDecomp, we develop a multicriteria optimization approach for toolpath planning.
  We demonstrate the utility of our framework by maximizing or minimizing tool path edge overlap between adjacent layers, while jointly minimizing turn costs.
  Strength testing of a tensile test specimen printed with tool paths that maximize or minimize adjacent layer edge overlaps reveal significant differences in tensile strength between the two classes of prints.

\noindent {\bfseries Keywords:}  
  Space-filling curve, domain decomposition, continuous tool path, 3D printing.

\end{abstract}

\section{Introduction}
  We study {\it dense infill} 3D printing problems, where a given region is completely covered by depositing material with an extruder.
  Design of the tool path, i.e., the sequence in which the extruder moves while depositing material, has crucial implications on print quality as well as mechanical properties of the printed object.
  The extruder can go over non-print or previously printed regions with idle movements.
  Two problems closely related to 3D printing are milling and lawn mowing.
  But in the milling problem, the cutter cannot exit the region (pocket) that it has to cover.  
  The lawn mowing problem is similar to 3D printing problem since the cutter can mow over non grass as well as already mowed regions.
  But one wants to minimize non-print movement in 3D printing in order to improve efficiency.
  
  Various geometric tool path patterns are used such as zigzag, spiral, and contour parallel, but most of them suffer from directional bias.
  For instance, spiral and contour parallel tool paths do not allow cross weaving between adjacent layers.
  More generally, aspects of tool path design across multiple layers and their effects on mechanical properties of the printed objects have not been studied in detail.
  This motivated the development of our framework for optimization based tool path planning, where we can optimize the tool path based on multiple criteria.
  At the same time, we show that the 3D printing tool path optimization problem is NP-hard, and hence large instances become much harder to solve.
  One approach to handle large instances involves decomposition into subdomains, where the subproblems can be solved in parallel and the overall tool path designed by combining solutions for the subdomains.

\subsection{Our Contributions}
  We focus on the minimum turn, minimum edge cost, as well as combinations of these two 3D printing problems.  
  \begin{itemize}
    \item We show that minimum turn cost 3D printing problem is NP-hard, even when the region is a simple polygon.  
    \item We develop \emph{SFCDecomp}, a space filling curve based decomposition framework to solve large instances of 3D printing problems efficiently by solving subproblems from the decomposition independently.
      Our framework builds toolpaths over a total of 799,716 nodes across 169 layers for the Buddha model, and over 812,733 nodes across 360 layers for the Bunny model.
      See Figures \ref{fig:introimage} and \ref{fig:bunnybuddha} for sample layers and a print.

    \item Building on SFCDecomp, we develop a multicriteria optimization approach for toolpath planning.
      We demonstrate the utility of this approach by maximizing or minimizing tool path edge overlap between adjacent layers, while jointly minimizing turn costs.
    \item We measure the mechanical strength of prints with varying tool path edge overlaps across adjacent layers.
      Strength testing of tensile test specimens printed with tool paths that respectively maximize and minimize adjacent layer edge overlaps reveal significant differences in tensile strength between the two classes of prints.      
  \end{itemize}

\begin{figure}[ht!] 
  \centering
  \includegraphics[height=1.98in]{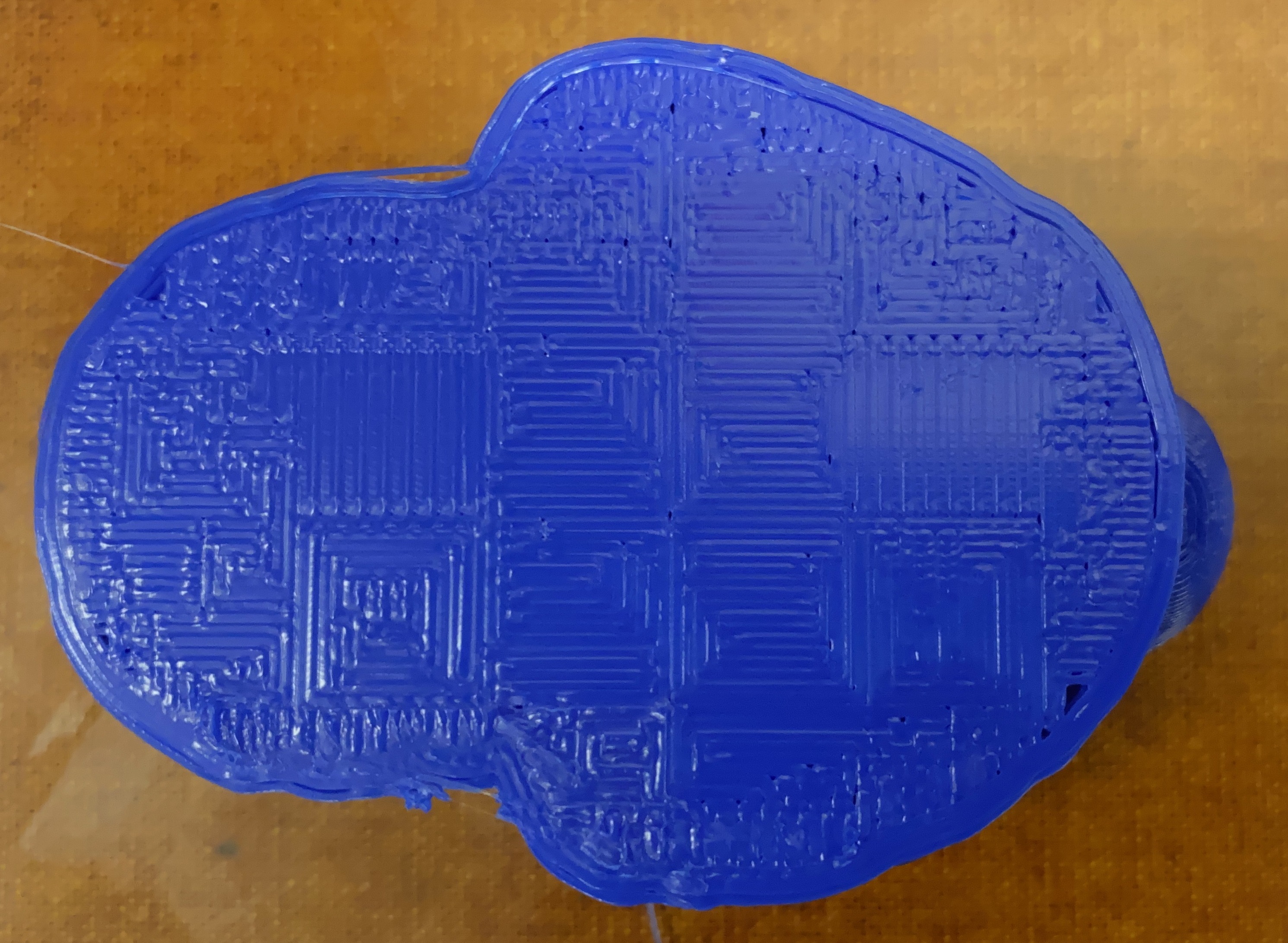}
  \includegraphics[height=1.98in]{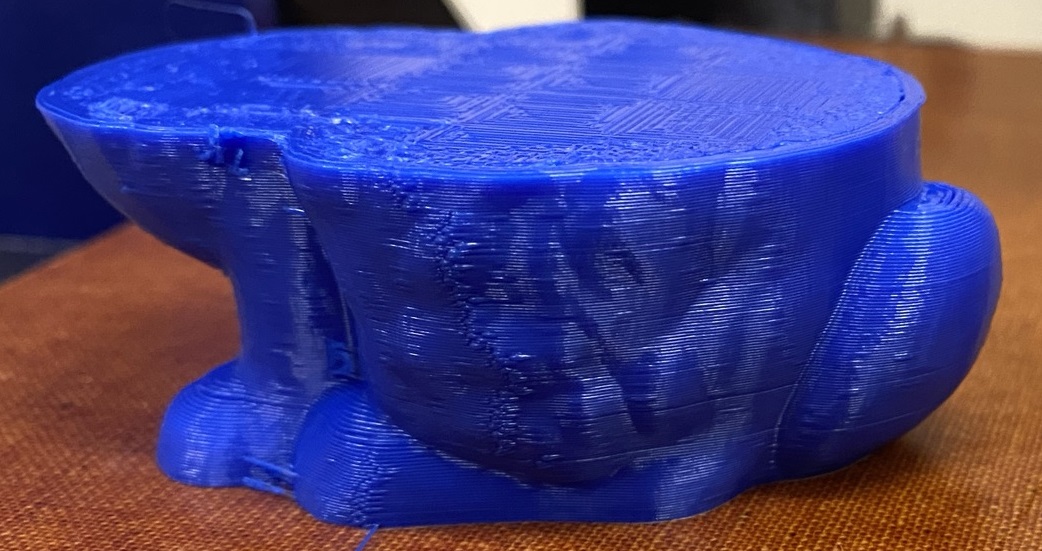}
  \caption{\label{fig:introimage} Top and side views of a print of the Bunny at Layer 124.}  	
\end{figure}
  
\subsection{Related Work} \label{sec:previouswork}

The lawn mowing, milling, and 3D printing problems are closely related to the more general geometric traveling salesman problem (TSP).

  \paragraph{Geometric TSP} \label{sec:geometricTSP}
  In the geometric traveling salesman problem (GTSP) with mobile clients, the objective is to find an optimal tour of the salesman to visit a given set of clients, each of whom can travel up to a distance $r$ to meet the salesman.
  Variants of the GTSP include the milling and lawn mowing problems. 
  Arkin et al.~\cite{ArFeMi2000} showed that the minimum length lawn mowing problem for polygonal regions with or without holes is NP-hard, and the minimum length milling problem for polygon regions with holes is also NP-hard.
  Arkin et al.~\cite{ArBeDeFeMiSe2005} also showed that the minimum turn milling problem is NP-hard.  
  
  The TSP with turn costs is known as the \emph{angular metric TSP}.
  Aggarwal et al.~\cite{AgCoKhMoSc2000} proved it is NP-hard. 
  Integer programming formulations of this problem are hard to solve to optimality \cite{AiAnFrJUlAl2017}. 
  Finding a tour connecting a given set of points such that angles between two adjacent edges of the tour is constrained was studied by Fekete and Woeginger \cite{FeWo1997}. 
  Reif and Wang \cite{ReWa1998} showed that the angle restricted shortest path problem with obstacles is NP-hard.
  In 3D printing, apart from computational challenges (NP-hardness), the optimal print direction can change based on mechanical factors such as temperature gradient to minimize thermal residual stress and strain.
  This aspect is a direct motivation for developing our framework.
  
  \paragraph{Tool Path Geometry}\label{sec:extruderpath}
  A popular tool path generation method uses zigzag patterns. 
  Space filling curves (SFCs) are also used in generating infill.
  An SFC in 2D is a continuous curve with positive Jordan content (i.e., area $>0$). 
  In 3D printing, the extruded bead has finite thickness and hence length of the SFC is finite. 
  Zhao et al.~\cite{ZhGuHuGaYoChBeZhCoDaBa2016} developed a Fermat Spiral infill (SFC) based smooth tool path optimized for continuity. 
  Contour parallel tool paths follow the boundary of the polygon \cite{YaLoFuWa2002}.
  Although both these methods give curved tool paths, the Fermat Spiral infill \cite{ZhGuHuGaYoChBeZhCoDaBa2016} is smoother.
  Curved tool paths in FDM are usually approximated by piecewise linear line segments.
  Hence highly curved tool paths require increasingly short line segments in the linear approximation.
  Zhao et al.~\cite{ZhGuHuGaYoChBeZhCoDaBa2016} discussed this limitation of their method for lower end 3D printers.
  They also pointed out that both spiral and contour parallel infills suffer from directional bias due to which adjacent layers cannot cross-weave at an angle.
  This gives zigzag tool paths some advantages over spirals and contour parallel tool paths since zigzag consists of linear segments and allows cross weaving between adjacent layers.

  Kuipers et al.~\cite{KuWuWa2019} developed \emph{CrossFill}, an approach based on SFCs to generate continuous paths for \emph{sparse} infill 3D printing.  
  Wasser et al.~\cite{WaJaPi1999} suggested fractal-like SFC for infill based on a TSP heuristic, but their method cannot handle turn costs and have ambiguity on the input graph for general polygons.
  Bertoldi et al.~\cite{BeYaPiGu1998} proposed a method that uses domain decomposition with a classical SFC (Hilbert curve) to find the tool path.
  But the Hilbert curve imposes restrictions on print directions.
  In our framework, we use a quadtree for domain decomposition and classical SFC (Hilbert curve) to create sequences of these domains.
  We then employ optimization that could be based on multiple criteria to find the tool path in each decomposed subdomain. 

  \paragraph{Tool Path Optimization}\label{sec:optimization}
  The tool path in each layer can go straight or take turns (within the plane) to fill the layer, and its overall shape affects various quality factors.
  One often tries to optimize the tool path in each layer for its continuity, smoothness, or both.
  But optimizing for multiple quality factors at one time could be highly inefficient.
  For instance, spiral and contour parallel infills try to optimize smoothness and continuity of the tool path, but cannot consider cross weaving between layers due to directional bias \cite{GiRoSt2015}.

  Bedel et al.~\cite{BeCoMaNiWhLe2021} have recently studied the optimization of space-filling curves under orientation objectives.
  The orientations are modeled by a vector field that is typically created for the whole print domain.
  In the first step, a Hamiltonian tour is identified for the whole domain by solving a combinatorial optimization problem.
  This cycle is then modified using local stochastic optimization steps to optimize for alignment with the vector field as well as smoothness and coverage while maintaining Hamiltonicity.
  On the other hand, the SFCDecomp framework identifies Hamiltonian cycles only for the individual subdomains identified by the decomposition and hence has the potential to scale better for larger models.
  Further, all computations for the subdomains can be handled independently or in parallel.

    Fang et al.~\cite{FaZhZhChZhWa2020} presented a computational framework to generate spatially designed layers for 3D printing that are not necessarily planar.
    The toolpaths on each layer are designed for multi-axis 3D printing systems and optimized by aligning filaments along maximum principal stress directions obtained from finite element analysis of the structure with prescribed mechanical loads.
    This framework also handles the entire print as a single domain.
    For certain samples, surfaces printed using this non-planar framework reported higher capacities to withstand loads compared to samples printed using planar layer-based FDM approach.
  
  In the setting of graphs for metric TSP, we can find the initial tour using, e.g., the Christofides--Serdyukov algorithm \cite{Ch1976,Se1978,vBeSl2020} and use a $k$-OPT heuristic \cite{ApBiChCo2007} to further improve the solution.
  The $k$-OPT heuristic solves a decision problem where we ask if a given tour can be improved by replacing $k$ edges in the tour with $k$ new edges.
  But the Christofides--Serdyukov algorithm might not give us accurate results since the problem is non-metric.
  With the simplest choice of $k=2$, the $2$-OPT has $O(n^2)$ computational complexity to find a local optimal solution compared to initial solution \cite{JoMc1997}.
  To the best of our knowledge, most heuristic algorithms for TSP have $O(n^2)$ complexity.
  We could partition the domain into multiple subgraphs and apply some heuristics on each subgraph \cite{Wa2004}.
  At the same time, we seek exact solutions for each subgraph since the problem is non-metric, so that the overall quality of the tool path is maximized.
  Lensgraf, Mettu, and coauthors \cite{LeMe2016,LeMe2017,LeMe2018,YoLeFiClMe2020} have used graph frameworks based on search algorithms, but do not include costs for change in direction.
  In contrast, our graph and cost-based optimization framework for complete infill problems can model new quality factors by varying the edge costs, and also models turn costs.
  In fact, our graph optimization framework can employ user-defined costs that capture various quality factors.
  
  \paragraph{Domain Decomposition}\label{sec:domaindecomposition}
  The geometric TSP is NP-hard \cite{Pa1977}.
  An alternative approach is to use domain decomposition, solve a TSP for the infill in each subdomain, and then connect the individual subdomain paths. 
  Chazelle and Palios \cite{ChPa1994} presented a review of some decomposition strategies.
  Convex decomposition of polygons is a well studied problem, e.g., see the book by Keil \cite{Ke2000}. 
  Exact convex decomposition for simple polygons without holes can be computed efficiently \cite{ChDo1979,ChDo1985}, but is NP-hard for polygonal regions with holes \cite{AnMoEr1982}. 
  An alternative can be approximate convex decomposition \cite{LiAm2006}.
  
  Domain decomposition has been used in 3D printing applications \cite{DiPaCuLi2014,DwKo2004,JiHeFuZhDu2017} to subdivide the polygon into sub-polygonal regions and find a cycle for each sub-polygon using closed zigzag curves to cover most of the vertices in them, and then join these cycles to find a complete tour.
  But their geometric decomposition does not guarantee existence of feasible dual graph of each sub-polygon, whereas our decomposition approach guarantees existence of dual pixel graph of each sub-polygon.
  We will primarily focus on finding paths and connecting them to get a complete path, but our work can be extended to finding complete tour by joining cycles.
  In computation, we guarantee the existence of an optimal connected path in each sub-polygon.
  We can vary the decomposition between alternate layers.
  The complete path generated by our method can have discontinuities \emph{only} at the boundaries of the polygon.
  
  To summarize, in the paragraph on \emph{Tool Path Geometry} we motivated the use of rectilinear tool paths, and that on \emph{Tool Path Optimization} we motivated the use of optimization tools.
  But TSP and variants are NP-hard in general, and that motivated our use of \emph{Domain Decomposition}.

\section{Preliminaries}
Let the extruder $\xi$ be the axis aligned unit square.
$\xi(p)$ denotes the placement of $\xi$ at point $p \in \R^2$ as its center.
The \textit{ {\bfseries Geometric 3D printing problem} (3dPP) on a polygon $R$ is to find a path/tour $\pi$ such that every point in $R$ is covered by the placement of $\xi$ on $\pi$, i.e., $R \subseteq \cup_{p \in \pi} \xi(p)$, subject to total idle movement less than or equal to a positive constant $\epsilon$.}
Note that $\xi(p)$ can hit outside the region $R$ for some $p \in \pi$.
We consider minimizing total length (sum of edge weights, more generally) or total turn cost, or a combination of both. 
We restrict our attention to integral orthogonal polygonal regions with or without holes, and the extruder is taken as a unit square restricted to axis parallel motion.
All boundary turns are $90^{\circ}$ in an integral orthogonal region, and boundary vertices have integer coordinates.
It can be considered a union of pixels, i.e., unit squares with axis-parallel edges and integer vertices (see Figure \ref{fig:inteortho3dprintprob}).
Hence we name it the \emph{integral orthogonal $3d$-printing problem} (IO3dPP), and will also refer to it in short as 3dPP.
We prove that 3dPP is NP-hard. 

\begin{figure}[ht!]
  \bigskip
  \centering
  \includegraphics[scale=0.35]{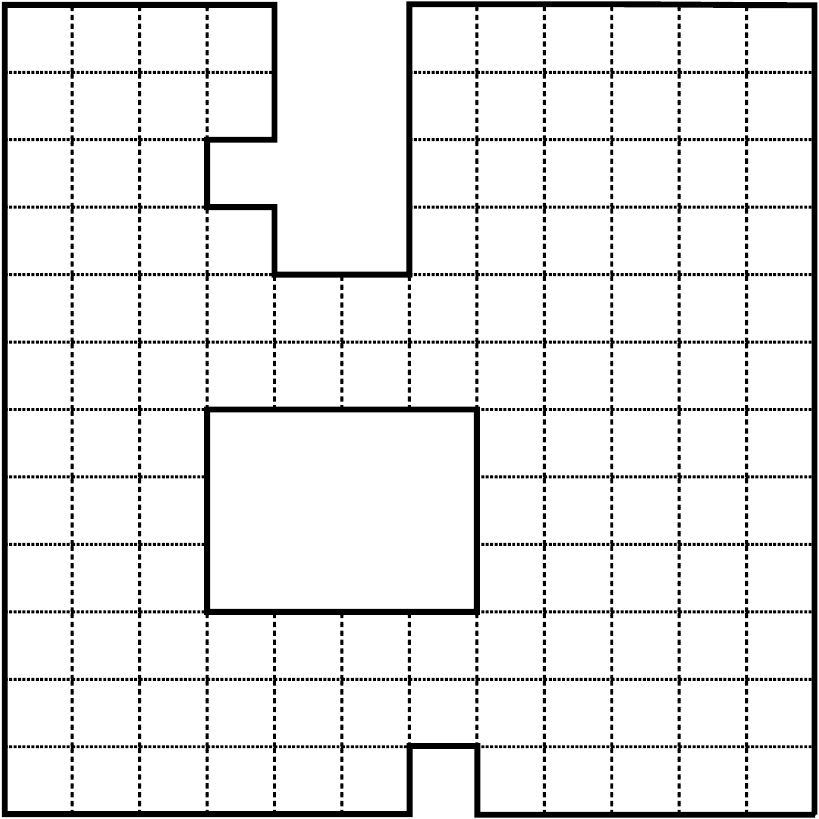}
  \hspace*{0.03in}
  \includegraphics[scale=0.35]{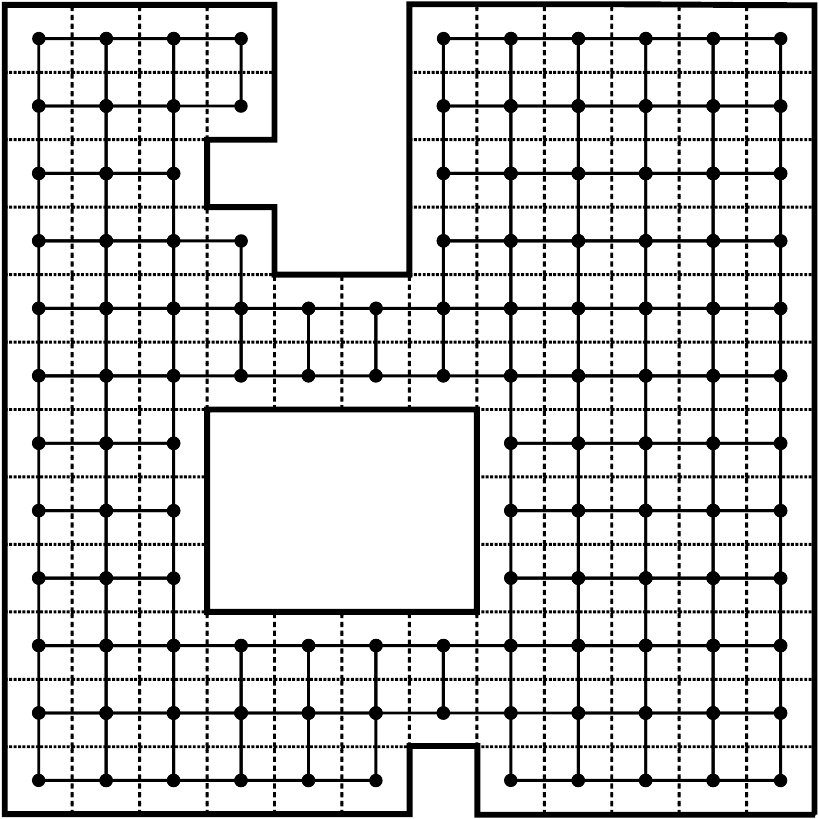}
  \caption{\label{fig:inteortho3dprintprob}
    An integral orthogonal region with a hole (left) and its dual graph (right).
  }
\end{figure}

We consider a connected, undirected, planar graph $G=(V,E)$ with $V= \{v_1, \dots, v_n\}$ and $E= \{(v_i, v_j): v_i, v_j \in V \text{ and } i \neq j\}$.
For each edge $(v_i, v_j) \in E$ there is a cost (or weight) $c_{ij}$ that depends on the Euclidean distance between vertices (if $(v_i, v_j) \not\in E$ then $c_{ij} = M$, some large positive value).
We take the \textit{turn cost} at vertex $v_i$ as $c_i = 0$ if the toolpath goes straight through $v_i$, and $c_i = 1$ if the path makes a turn at the vertex ($90^{\circ}$). 
For the default Euclidean minimum length problem, $c_{ij}$'s form a metric.
But $c_i$'s for the minimum turn problem do not form a metric \cite{ArBeDeFeMiSe2005}.

\section{NP-hardness proofs}

We show that NP-hardness of minimum length 3dPP follows directly from the known result on NP-hardness of the lawn mowing problem.
For the minimum turn cost 3dPP, we employ a two-step reduction from the problem of Hamiltonicity of square grid graphs to prove NP-hardness.

\subsection{Minimum Length 3dPP}
Arkin et al.~\cite{ArFeMi2000} showed that minimum length lawn moving problem for simple polygon or polygon with holes is NP-hard based on reduction from Hamiltonian circuit in planar bipartite graph with maximum degree 3 to Hamiltonian circuit in grid graphs.

\begin{lem} 
  Minimum length 3dPP is NP-hard for any connected polygon $R$ (with or without holes) and axis-aligned unit square extruder. 	
\end{lem}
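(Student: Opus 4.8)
The plan is to reduce from the minimum length lawn mowing problem with an axis-aligned unit square cutter, which Arkin et al.~\cite{ArFeMi2000} proved NP-hard for connected polygonal regions with or without holes (via Hamiltonicity of grid graphs, so their hard instances are unions of pixels on which an optimal covering curve may be taken rectilinear). The key observation is that ``covering $R$ with the unit square extruder $\xi$'' and ``mowing $R$ with the unit square cutter'' are literally the same covering condition, $R \subseteq \cup_{p \in \pi}\xi(p)$; the only structural difference between the two problems is the 3dPP side constraint that total idle movement be at most $\epsilon$, which I would neutralize by making $\epsilon$ large.

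Concretely, given a lawn mowing decision instance $(R, L)$ --- does $R$ admit a covering tour of length at most $L$? --- I would output the 3dPP instance $(R, \epsilon, L)$ with the same region and the same threshold $L$, where $\epsilon$ is any trivial polynomial-time-computable upper bound on the length of a covering curve of $R$ (for instance a boustrophedon sweep of the bounding box of $R$, so that in particular $\epsilon \ge L$). Then the idle-movement constraint is vacuous for every candidate curve of length at most $L$, so $R$ has a covering tour of length $\le L$ iff the 3dPP instance admits a feasible solution of length $\le L$. The construction is clearly polynomial, and since the 3dPP extruder is exactly the axis-aligned unit square of Arkin et al.'s construction, no rescaling of the region is needed; the restriction to axis-parallel motion is harmless because on a pixel-union region an optimal covering curve may be taken rectilinear.

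Finally I would dispose of two minor points. The geometric 3dPP as stated admits open paths, not only closed tours, but on the grid-graph-derived regions a covering curve corresponds to a traversal of the dual pixel graph, so allowing open paths changes the optimum by at most an additive constant and does not affect the reduction (one may equivalently reduce from the path variant of lawn mowing). And since we claim only NP-hardness, membership of 3dPP in NP --- which would require bounding the bit-complexity of an optimal covering curve --- is not needed. The one step that actually requires care is the first: confirming that the feasible solution sets of the two problems coincide once $\epsilon$ is unconstrained, i.e., that the unit-square covering condition is identical in both settings and that Arkin et al.'s optimal mowing tours are realizable under axis-parallel motion; everything after that is bookkeeping.
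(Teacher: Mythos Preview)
Your reduction from minimum-length lawn mowing is the paper's route: both observe that the covering condition $R \subseteq \bigcup_{p\in\pi}\xi(p)$ is identical in the two problems, so Arkin et al.'s hardness transfers once the idle-movement constraint is handled. The one substantive difference is how that constraint is neutralized. You take $\epsilon$ large (at least the length threshold $L$) so the constraint is vacuous for any candidate curve under the threshold; the paper instead notes that in Arkin et al.'s hard instances the optimal mowing tour never re-covers a point of $R$, so idle movement is zero and $\epsilon=1$ already suffices. Your choice avoids inspecting the internals of Arkin et al.'s construction; the paper's yields the slightly stronger statement that hardness persists even for a small fixed $\epsilon$ rather than an instance-dependent one. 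Your additional remarks on the path-versus-tour distinction and the axis-parallel restriction are more careful than the paper's one-line proof, which simply asserts that Arkin et al.'s argument adapts directly.
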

\begin{proof}
  Proof of Arkin et al.~for lawn moving problem~\cite[Theorem 1]{ArFeMi2000} can be directly adapted to that for 3dPP where there is no idle movement for connected polygon $R$ with or without holes.
  We note that no point in $R$ is mowed more than once in their proof, and the total idle movement bound can be set as $\epsilon=1$.    
\end{proof}

\subsection{Minimum Turn Cost 3dPP}

We use a reduction similar to one introduced by Arkin et al.~\cite{ArBeDeFeMiSe2005}.
Previously, Itai et al.~\cite{ItPaSz1982} showed that the Hamiltonian circuit problem in grid graphs is NP-complete.
We first show that Hamiltonicity of square grid graph is reducible to that of Hamiltonicity of unit segment perpendicular end point intersection graph (HUSPEPIG) of axis aligned unit segments.
Unit segment perpendicular end point intersection graphs consist of unit horizontal or vertical segments that intersect only at end points (see Figure \ref{fig:USPIGNC_graph}).

\begin{figure}[hb!]
  \bigskip
  \centering
  \begin{subfigure}[t]{1.8in}
    \centering
    \includegraphics[scale=0.35]{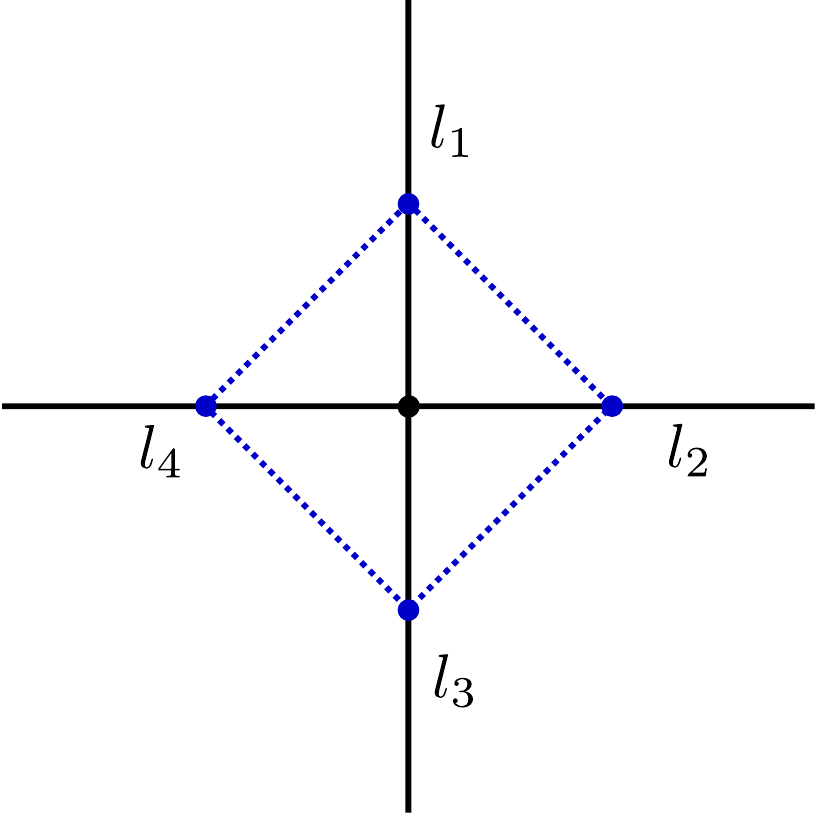}
    \caption{\label{fig:USPIGNC_grapha}}
  \end{subfigure}
  \quad
  \begin{subfigure}[t]{1.5in}
    \centering
    \includegraphics[scale=0.35]{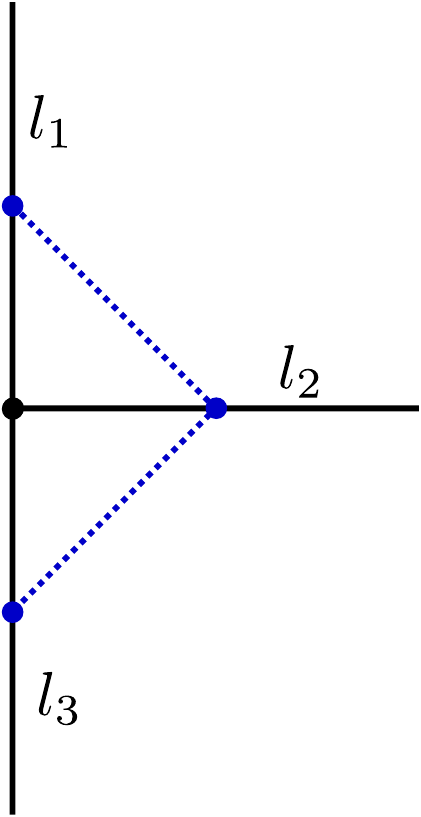}
    \caption{\label{fig:USPIGNC_graphb}}
  \end{subfigure}	
  \begin{subfigure}[t]{1.4in}
    \centering
    \vspace*{-1.4in}
    \includegraphics[scale=0.35]{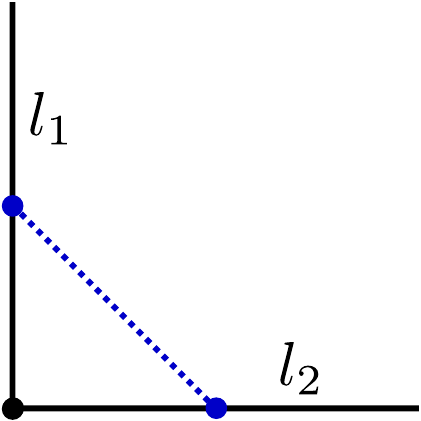}
    \vspace*{0.42in}
    \caption{\label{fig:USPIGNC_graphc}}
  \end{subfigure}
  \begin{subfigure}[t]{1.1in}
    \centering
    \includegraphics[scale=0.35]{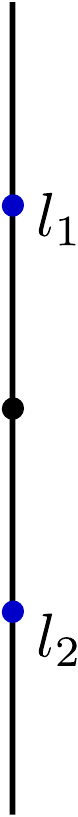}
    \caption{\label{fig:USPIGNC_graphd}}
  \end{subfigure}
  \caption{\label{fig:USPIGNC_graph} $l_1, l_2, l_3, l_4$ (black) are unit line segments.
    Figures (a), (b), (c), (d) show all possible intersections at the end points.
    Intersection graph (dotted blue lines) of corresponding  intersection of line segments are shown in Figures (a), (b), and (c).
    In Figure (d), unit segments intersect at an end point but are not perpendicular, so no edge is shown in the intersection graph.}
\end{figure}

\begin{lem} 
  Hamiltonicity of square grid graph is reducible to HUSPEPIG.	
\end{lem}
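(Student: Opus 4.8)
The plan is a polynomial-time gadget reduction from Hamiltonicity of square grid graphs to HUSPEPIG. Given a grid graph $G=(V,E)$, I would first blow $G$ up by a constant factor so that each vertex and each edge of $G$ gets its own region of the plane with ample slack, and then replace vertices and edges of $G$ by carefully chosen finite families of axis-aligned unit segments whose perpendicular-endpoint intersection graph $H$ is Hamiltonian if and only if $G$ is. One structural fact drives the whole design: in $H$ every edge joins a horizontal segment to a vertical one, so a Hamiltonian cycle of $H$ must strictly alternate between horizontal and vertical segments; hence $H$ is bipartite with equal sides, a ``chain'' of segments must zigzag (collinear runs are impossible, since collinear unit segments meeting at an endpoint are not perpendicular and so are non-adjacent in $H$), and the naive idea of drawing each grid edge as a single unit segment fails badly --- a grid vertex of degree $\ge 3$ already produces spurious $4$-cycles or degree-$1$ vertices in $H$. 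So a grid edge must become a zigzag chain of unit segments (a forced contiguous sub-path of any Hamiltonian cycle that enters it), and a grid vertex must become a small gadget with one ``port'' per incident grid edge.

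The vertex gadget $W_v$ is the core of the proof. For a grid vertex $v$ of degree $d_v$ I want a fixed, constant-size configuration of unit segments with $d_v$ ports such that (i) every Hamiltonian cycle of $H$ covers all segments of $W_v$; (ii) the cycle's restriction to $W_v$, together with its incident chains, connects exactly two of the ports ``through'' the gadget while locally absorbing the remaining structure; and (iii) every admissible such pairing corresponds to picking two of the $d_v$ incident grid edges at $v$. The degree-$1$ and degree-$2$ cases are routine; the degree-$3$ and degree-$4$ gadgets I would build by chaining a constant-size ``selection'' module and verify by hand. In parallel I must keep the global arrangement legal for HUSPEPIG: all segments are unit and axis-aligned, any two meet only at a common endpoint and are perpendicular there, and the number of horizontal segments equals the number of vertical segments (padding the gadgets with a few extra segments to restore the balance if needed).

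Given the gadgets, I would prove the two directions. Forward: from a Hamiltonian cycle $C$ of $G$, build a closed curve in $H$ that at each $W_v$ connects the two ports of the edges of $C$ at $v$, traverses the zigzag chain of each edge of $C$, and uses properties (i)--(ii) to absorb the rest of each $W_v$ and the chains of the grid-edges not used by $C$; then check it is a single cycle hitting every segment exactly once. Reverse: from a Hamiltonian cycle of $H$, read off at each $W_v$ which pair of ports it connects; property (iii) converts this into a choice of two edges of $G$ per vertex, i.e.\ a $2$-regular spanning subgraph of $G$, and the connectivity of the cycle in $H$ --- again invoking the gadget properties to forbid short-circuits --- forces this subgraph to be a single Hamiltonian cycle of $G$. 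Polynomiality is immediate since every gadget has constant size and there are $O(|V|+|E|)$ of them.

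The hard part will be step two, and specifically the tension it must resolve: a Hamiltonian cycle of $H$ is obliged to cover \emph{every} unit segment, including those on the chains of grid-edges that turn out to be unused in $C$, yet it must not be able to cheat by connecting an illegal set of ports at a vertex or by short-circuiting a vertex gadget. The delicate interaction is between the local degree-$2$ behavior of a Hamiltonian cycle and the mandatory coverage of these unused chains: the gadget has to route such a chain back into the tour without spawning a separate cycle and without creating a forbidden (collinear or non-perpendicular) incidence. I expect the completeness (``every legal configuration is realizable'') and soundness (``no illegal configuration is'') of the degree-$3$ and degree-$4$ modules to require a careful but finite case analysis of how the alternating horizontal/vertical cycle can thread a constant-size gadget.
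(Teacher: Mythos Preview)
Your plan is a heavyweight gadget reduction, and you explicitly leave the crucial part --- the degree-$3$ and degree-$4$ vertex gadgets and the absorption of unused chains --- as a promise to ``build and verify by hand.'' That is a real gap: nothing in the proposal guarantees such gadgets exist with the required soundness and completeness, and the tension you yourself identify (covering every segment of an unused edge-chain without creating a short cycle or an illegal port pairing) is exactly where such constructions tend to fail.

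More to the point, the paper's proof bypasses all of this with a single observation you missed: map \emph{vertices}, not edges, to segments, and let the bipartition of the grid graph choose the orientation. Concretely, rotate the grid by $45^{\circ}$ and rescale so that adjacent vertices are at diagonal offset $(\pm\tfrac12,\pm\tfrac12)$; then place a horizontal unit segment centered at every white vertex and a vertical unit segment centered at every black vertex. Two such segments meet perpendicularly at an endpoint if and only if their centers are at distance $\tfrac12\sqrt{2}$ along a diagonal, i.e., if and only if the corresponding grid vertices were adjacent; same-color segments are parallel and hence never adjacent in the intersection graph. Thus the resulting HUSPEPIG instance is \emph{isomorphic} to $G$, and Hamiltonicity transfers trivially in both directions. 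Your worry that ``a grid vertex of degree $\ge 3$ already produces spurious $4$-cycles or degree-$1$ vertices'' evaporates once orientation is tied to the $2$-coloring rather than chosen ad hoc. No edge chains, no vertex gadgets, no case analysis --- one segment per vertex and an isomorphism.
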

\begin{proof}
  Consider a bipartite grid graph $G$ with vertices having integer coordinates.
  Then $G$ can be represented as $2$-color graph.
  Rotate $G$ by $45^{\circ}$ and scale down edges in $G$ by $\sqrt{2}$.
  The length of each edge in this arrangement is $1/2$, the coordinates of each vertex are integer multiples of $1/2$, and the smallest distance between vertices of same color is $1$.
  Assign each white vertex a horizontal unit line segment and each black vertex a vertical unit line segment centered at the vertex to obtain the instance of HUSPEPIG (see Figure \ref{fig:Grid_to_USPIGNC}).
\end{proof}

\begin{figure}[ht!]
  \centering
  \includegraphics[scale=0.3]{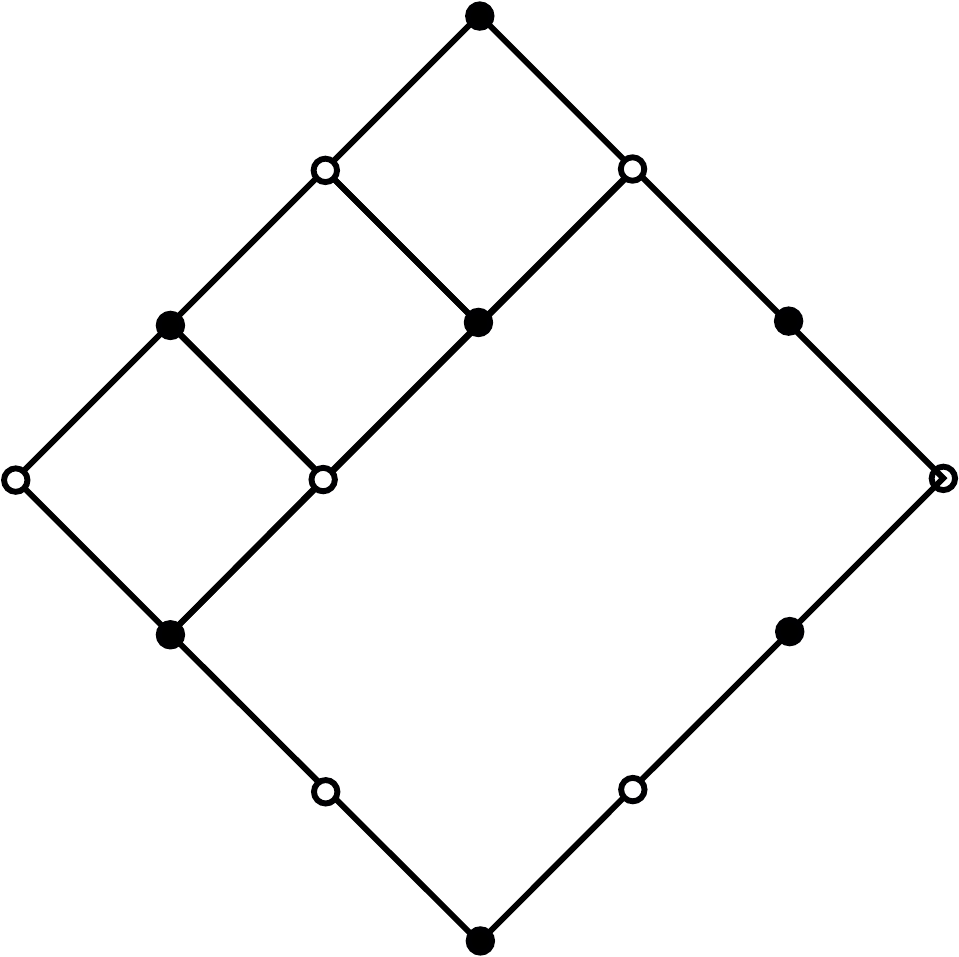}
  \quad\quad
  \includegraphics[scale=0.3]{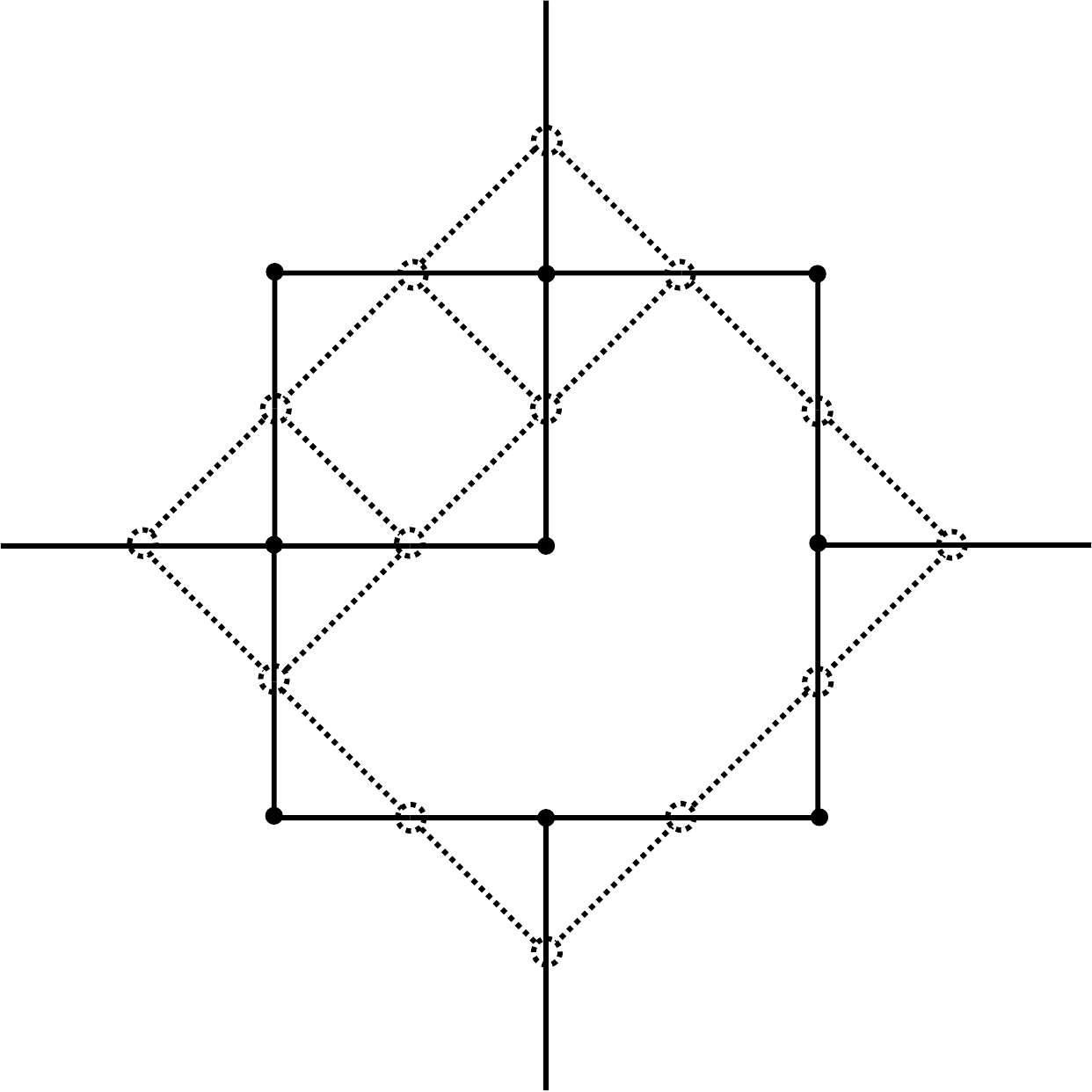}
  \caption{\label{fig:Grid_to_USPIGNC} Square grid graph $G$ (left).
    Dotted black graph (right) is the intersection graph of axis aligned unit line segments (black) intersecting at end points.
  }
\end{figure}

We now show that HUSPEPIG can be reduced to minimum turn cost 3dPP. 
Let each unit line segment be represented by a square block (Figure \ref{fig:Segment_to_Blocka}), which is the union of $9$ unit squares each representing the extruder.
Let the $9$-cluster $C$ be the dual graph of such a square block, consisting of $9$ vertices. 
With unit line segments represented by square blocks, there are $3$ types of intersection (Figures \ref{fig:Segment_to_Blockb}, \ref{fig:Segment_to_Blockc}, and \ref{fig:Segment_to_Blockd}).
Each $C$ has four corner vertices.
We can clearly find a Hamiltonian path that starts and ends at distinct corner vertices in $C$ (Figure \ref{fig:Turn_Cost}).
If the start and end vertices are on the same side of $C$, then the turn cost is $5$, else it is $4$.
We refer to these two traversals of $C$ as type-$1$ and type-$2$, and incur additional turn costs of $1$ and $0$ for entering and exiting $C$.

Figure \ref{fig:proofoutline} shows the outline of the argument.
We start with the set of axis parallel unit segments (Figure \ref{fig:proofoutlinea}), and its corresponding perpendicular intersection graph $G$ (Figure \ref{fig:proofoutlineb}) with its Hamiltonian cycle (in bold).
Assume without loss of generality that $G$ is connected and has $n>1$ vertices.
We replace unit line segments with square blocks (Figure \ref{fig:proofoutlinec}), which provides the connected polygonal region $R$.
Figure \ref{fig:proofoutlined} shows square blocks replaced by corresponding clusters, and Figure \ref{fig:proofoutlinee} shows the 3D printing tour to cover $R$.

\begin{figure}[ht!] 
  \centering
  \begin{subfigure}[t]{0.65in}
    \vspace*{-0.72in}
    \includegraphics[scale=0.22]{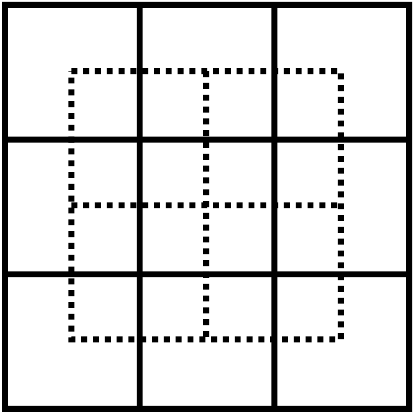}
    \vspace*{0.1in}
    \caption{\label{fig:Segment_to_Blocka}}
  \end{subfigure}
  \quad
  \begin{subfigure}[t]{1.9in}
    \centering
    \includegraphics[scale=0.22]{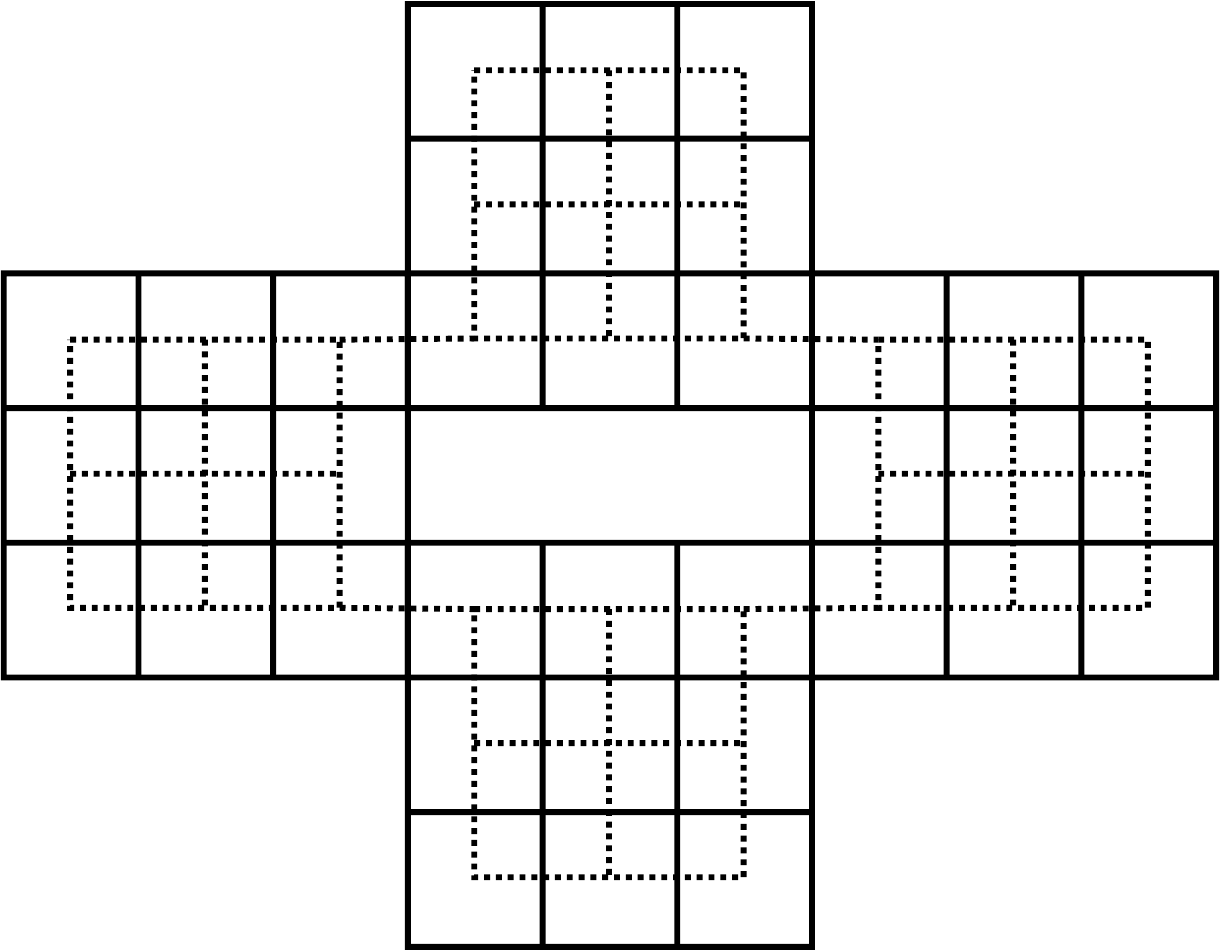}
    \caption{\label{fig:Segment_to_Blockb}}
  \end{subfigure}	
  \quad
  \begin{subfigure}[t]{1.9in}
    \centering
    \includegraphics[scale=0.22]{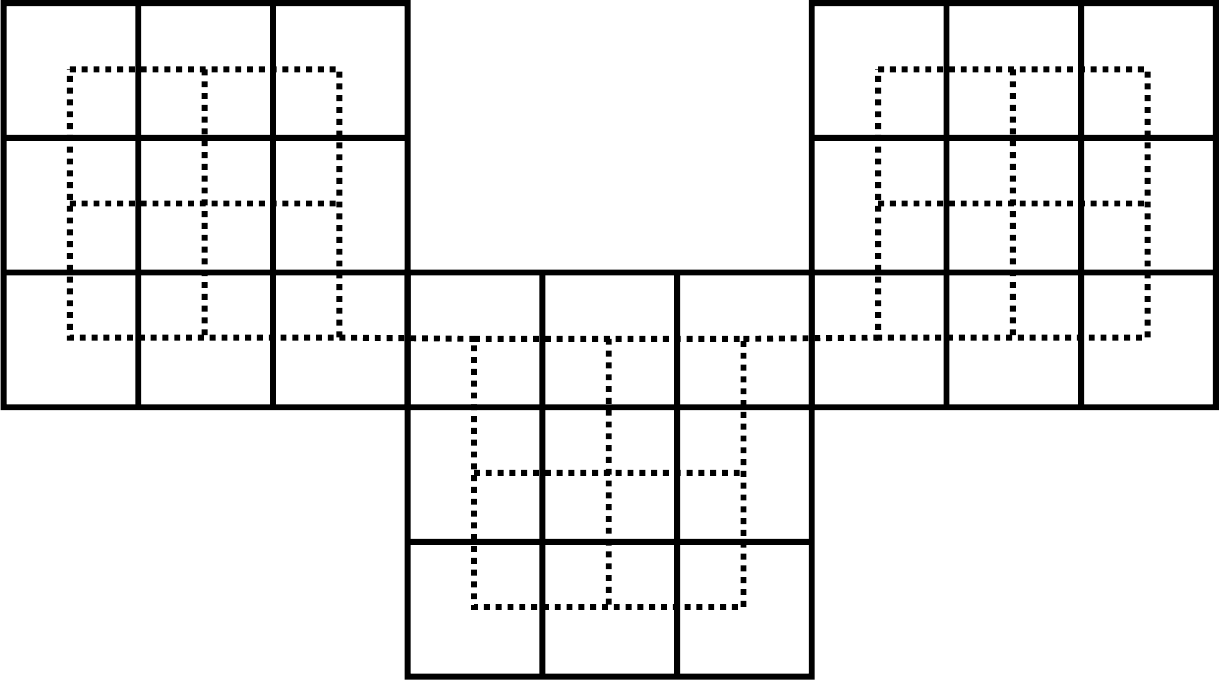}
    \caption{\label{fig:Segment_to_Blockc}}
  \end{subfigure}
  \quad
  \begin{subfigure}[t]{1.4in}
    \centering
    \includegraphics[scale=0.22]{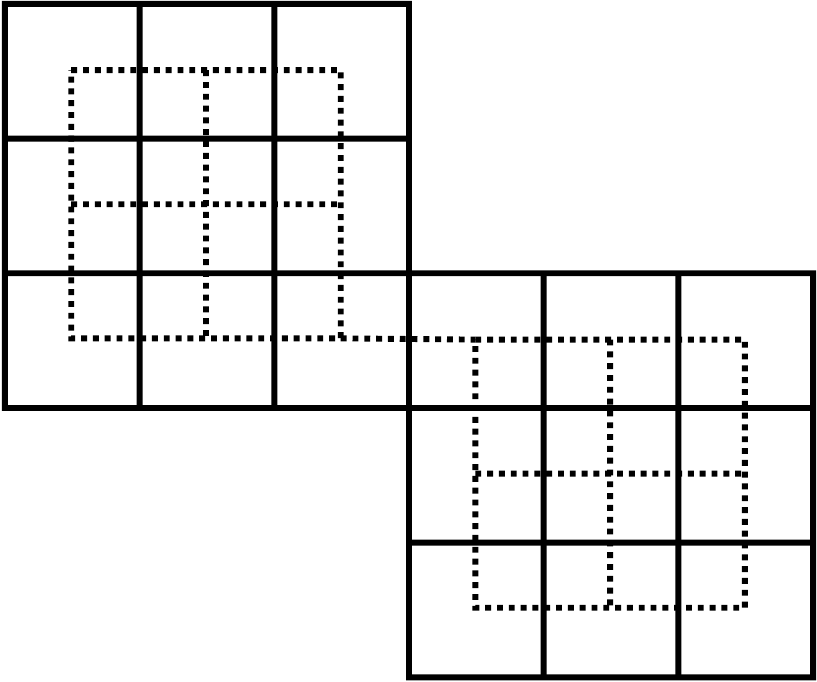}
    \caption{\label{fig:Segment_to_Blockd}}
  \end{subfigure}
  \caption{\label{fig:Segment_to_Block}
    (a): Each unit line segment is represented by a Square Block (solid black) and dual of the Square Block, i.e., its $9$-cluster (dotted black).
    (b), (c), (d): Connectivity based on corresponding unit line segment intersection shown in Figure \ref{fig:USPIGNC_grapha}, \ref{fig:USPIGNC_graphb}, and \ref{fig:USPIGNC_graphc}.
  }
\end{figure}

\begin{figure}[htp!] 
  \centering
  \includegraphics[scale=0.30]{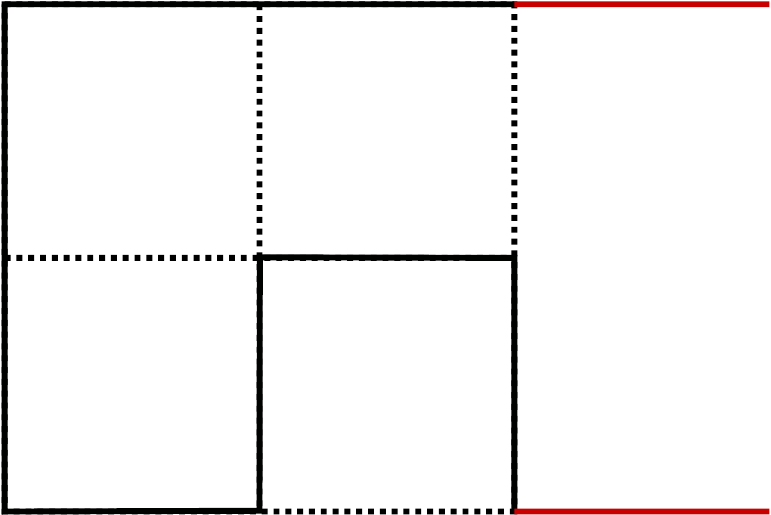}
  \quad\quad
  \includegraphics[scale=0.30]{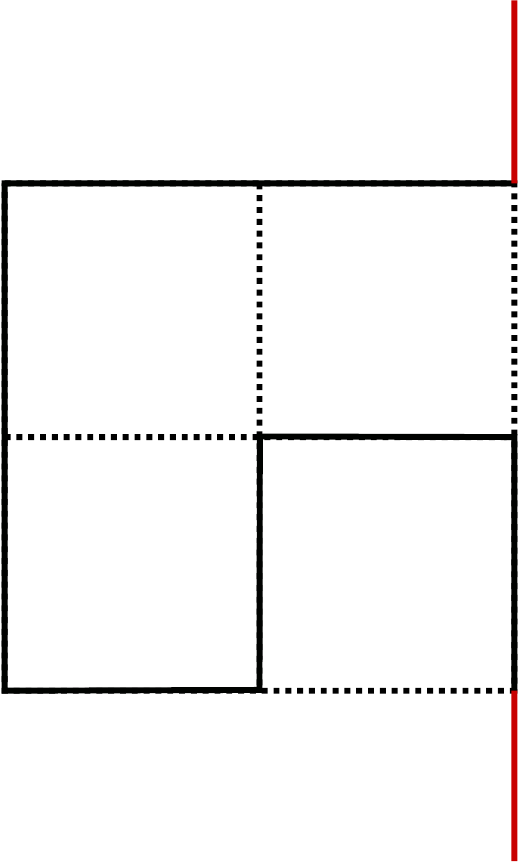}
  \quad\quad
  \includegraphics[scale=0.30]{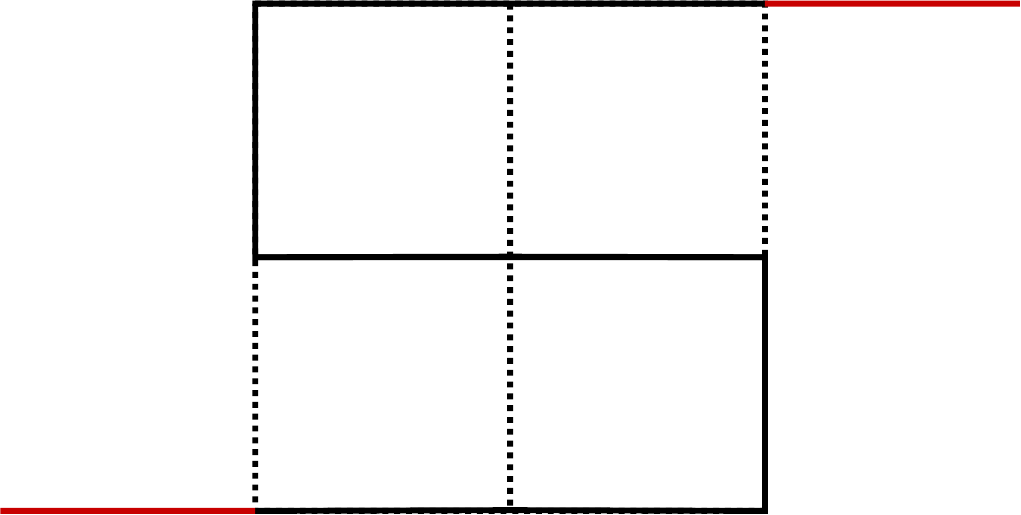}
  \caption{\label{fig:Turn_Cost}
    Left and Middle: Hamiltonian path (solid black) with start and end vertices on same side of $C_9$ with turns cost $5$.
    Total entry and exit (red) turn cost for $C_9$ is $1$.
    Right: Hamiltonian path (solid black) with start and end vertices at diagonally opposite corners of $C_9$, with turn cost $4$.
    Total entry and exit (red) turn cost for $C_9$ is $0$.
  }
\end{figure}

\begin{lem}\label{lem:sqblfirt}
  Any Hamiltonian tour of $R$ covers every $9$-cluster $C$ by a single path within $C$.  
\end{lem}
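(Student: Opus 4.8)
The plan is to argue that a Hamiltonian tour of $R$, restricted to the vertices of a single $9$-cluster $C$, must form a single connected path, by a parity/connectivity argument on the ``doorways'' into $C$. First I would catalogue how a tour can enter or leave $C$: the only pixels of $C$ adjacent to pixels outside $C$ are the corner pixels (each square block meets neighbouring blocks only along the overlapping corner unit square, as in Figures \ref{fig:Segment_to_Blockb}--\ref{fig:Segment_to_Blockd}). So the tour crosses the boundary of $C$ only through these corner vertices, and each corner vertex is incident to at most one ``external'' tour edge (a Hamiltonian tour uses exactly two edges at each vertex, and a corner vertex that is shared with an adjacent cluster has only one neighbour outside $C$, so at most one of its two tour edges leaves $C$).

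Next I would count crossings. Let $k$ be the number of tour edges with exactly one endpoint in $C$; since the tour is a single closed curve, $k$ is even, and the edges of the tour lying entirely inside $C$ split into exactly $k/2$ vertex-disjoint paths that together cover all $9$ vertices of $C$ (a standard fact: deleting the $k$ ``cut'' edges from the cycle leaves $k/2$ arcs on the $C$-side, assuming $k>0$; if $k=0$ the whole tour lies in $C$, impossible once $n>1$ since $R$ is connected with more than one block, so the tour must visit vertices outside $C$). It remains to rule out $k\ge 4$, i.e., to show $k=2$. This is where the corner structure is essential: the four corner vertices are the \emph{only} possible endpoints of these internal paths, so $k/2 \le 2$, giving $k\in\{2,4\}$.

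The main obstacle is therefore excluding $k=4$: this would force a Hamiltonian cover of the $9$ vertices of $C$ by exactly two disjoint paths whose four endpoints are precisely the four corner vertices. I would rule this out by a colouring argument. Two-colour the $9$ pixels of the square block like a checkerboard; the four corner pixels all receive the same colour (say black), so there are $5$ black and $4$ white vertices, and the centre is black. Any path in this bipartite graph alternates colours, so a path with both endpoints black has one more black vertex than white; two such paths covering $C$ would use $4$ more black vertices than white, but $C$ has only $5-4=1$ more black than white --- contradiction. (More carefully: two vertex-disjoint paths partitioning $V(C)$ with all four endpoints black contribute, over the two paths, a black-minus-white surplus of at least $2$, exceeding the surplus $1$ available in $C$.) Hence $k=4$ is impossible, so $k=2$, and the single resulting arc is a Hamiltonian path of $C$ --- which, as already noted in the text before the lemma, has turn cost $4$ or $5$ depending on whether its endpoints lie on the same side of $C$. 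I would close by remarking that this forces every cluster traversal to be of type-$1$ or type-$2$, which is exactly what the subsequent turn-cost accounting needs.
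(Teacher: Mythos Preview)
Your proof is correct and takes a genuinely different route from the paper. Both arguments open identically: the tour can cross the boundary of $C$ only at corner vertices, so its restriction to $C$ is a disjoint union of at most two paths whose endpoints lie among the four corners. The divergence is in how the two-path case is eliminated. The paper proceeds by geometric case analysis, fixing one sub-path $p$ with endpoints either diagonally opposite or along a common side of $C$ and arguing, with reference to Figure~\ref{fig:lemma33}, that in each case no disjoint second path $p'$ can reach the remaining vertices without crossing $p$. Your checkerboard parity argument replaces this picture-chasing with a one-line count: the $3\times 3$ grid has five vertices of one colour and four of the other, with all four corners in the majority class, so any path with both endpoints at corners carries black--white surplus exactly $+1$; two such disjoint paths covering $C$ would force surplus $+2$, contradicting the actual surplus of $+1$. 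In fact this count shows directly that the number of internal arcs must equal the surplus, so $k/2=1$ follows without first bounding $k\le 4$; your intermediate claim that each corner has only one neighbour outside $C$ (which mirrors the paper's implicit assumption) is therefore not strictly needed. The paper's argument is pictorial and tied to the specific $3\times 3$ geometry, while yours is figure-free and would extend verbatim to any $(2m{+}1)\times(2m{+}1)$ cluster.
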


\begin{proof}
  A Hamiltonian tour $H$ of $R$ enters and exits any $9$-cluster $C$ through corner vertices.
  We have two distinct pairs of entry--exit vertices in $C$ for $H$ (as we have four corner vertices).
  Hence we can have at most two paths $p, p'$ within $C$ that are part of $H$.
  Further, since $H$ is a Hamiltonian tour, $p \cap p' = \emptyset$. 

  We consider two possible ways in which $p$ enters and exits $C$.
  First, let $p$ enter and exit at diagonal corner vertices of $C$, e.g., at $a, d$ partially covering $C$ (Figure \ref{fig:lemma33}).
  Since $p$ does not contain $b$ and $c$, and since $p \cap p' = \emptyset$, $p'$ cannot cover all remaining vertices of $C$ without intersecting $p$.
  One such case is shown in Figure \ref{fig:lemma33a}.
  This contradicts the Hamiltonicity of $H$.
  A similar argument holds when $p$ used $b$ and $c$ as end points. 

  Second, let $p$ enter and exit along the same side of $C$, e.g., using $a$ and $b$ (Figure \ref{fig:lemma33b}).
  Then $p'$ with end points $c, d$ cannot cover rest of the vertices of $C$ without intersecting $p$, raising a contradiction.
\end{proof}

\begin{figure}[htp!] 
  \centering
  \begin{subfigure}[t]{2.5in}
    \centering
    \includegraphics[scale=0.32]{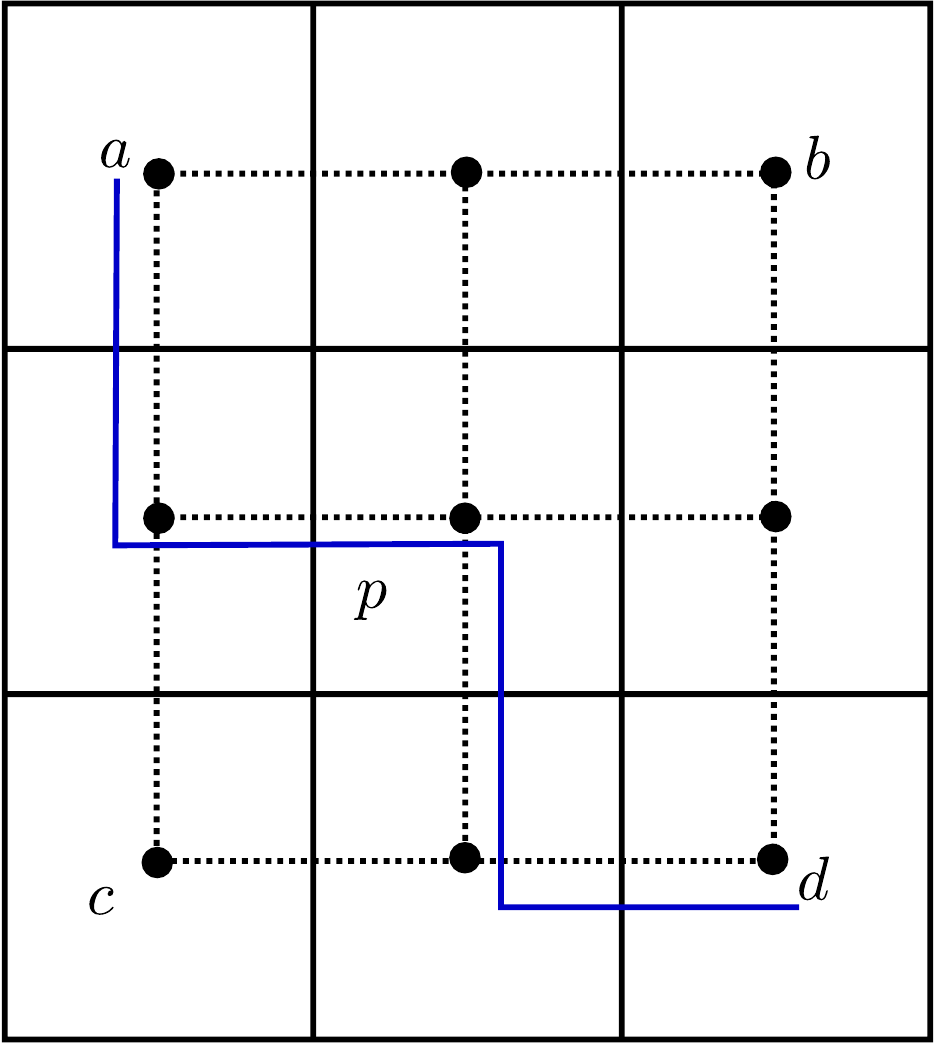}
    \caption{\label{fig:lemma33a}}
  \end{subfigure}
  \begin{subfigure}[t]{2.5in}
    \centering
    \includegraphics[scale=0.32]{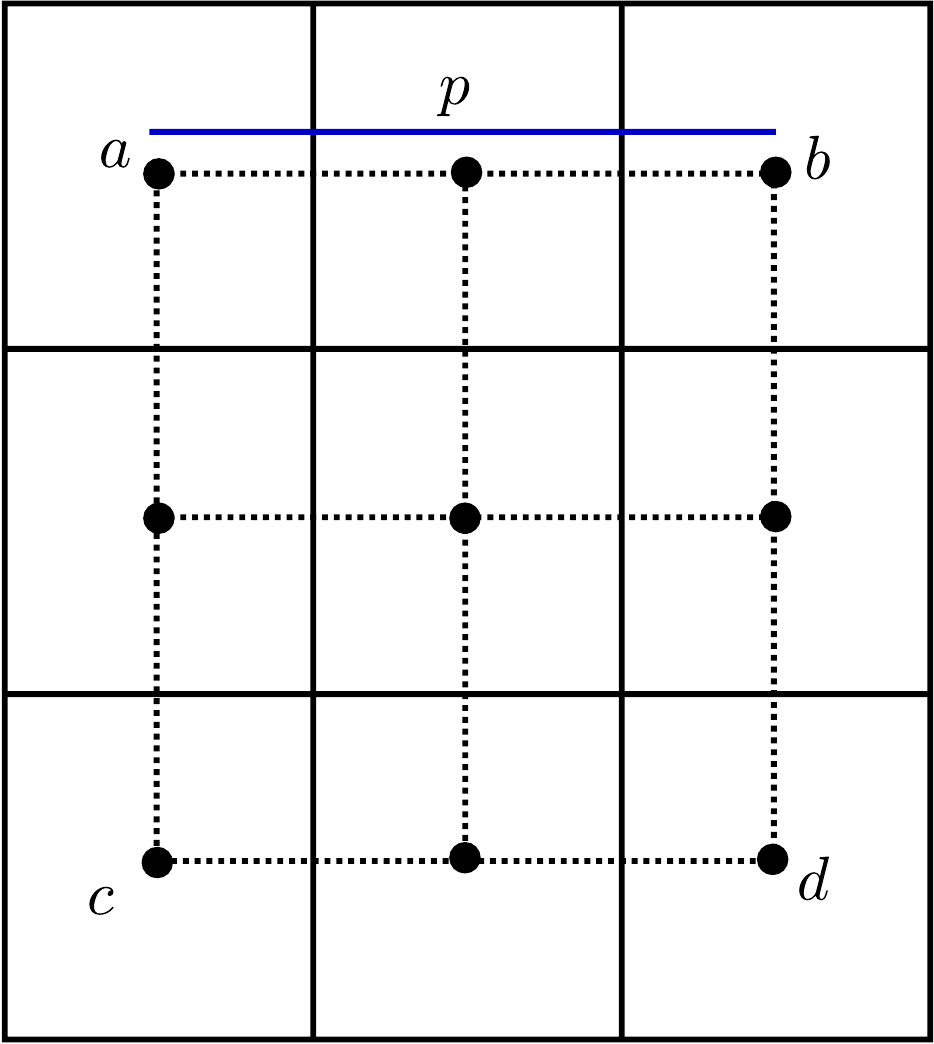}
    \caption{\label{fig:lemma33b}}
  \end{subfigure}	
  \caption{\label{fig:lemma33}
    Proof of Lemma \ref{lem:sqblfirt}.
  }
\end{figure}

\begin{thm}\label{thm:mtpnphard} 
  Minimum Turn 3dPP is NP-hard for any connected polygon $R$ with holes for axis-aligned unit square extruder. 	
\end{thm}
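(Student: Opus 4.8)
The plan is to complete the chain of reductions begun in the preceding lemmas: we have Hamiltonicity of square grid graphs $\le$ HUSPEPIG (Lemma 3.3), and we now need HUSPEPIG $\le$ Minimum Turn 3dPP. Given an instance of HUSPEPIG — a collection of axis-aligned unit segments meeting only at perpendicular endpoints, with intersection graph $G$ — I would construct the polygonal region $R$ by replacing each unit segment with its square block ($9$ unit-square pixels), as set up in the text and in Figures \ref{fig:Segment_to_Block} and \ref{fig:proofoutline}. The claim to prove is that $G$ has a Hamiltonian cycle if and only if $R$ admits a 3d printing tour whose total turn cost is at most a threshold $K$, where $K$ is computed by an exact bookkeeping over the $n$ clusters.

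The forward direction is the routine one: given a Hamiltonian cycle in $G$, I would walk it, and at each cluster $C$ enter and exit through corner vertices dictated by which neighboring blocks the cycle visits before and after, traversing $C$ by the corresponding type-$1$ or type-$2$ Hamiltonian path inside $C$ (turn cost $5$ or $4$, plus $1$ or $0$ for the entry/exit bends, as tabulated just before Lemma \ref{lem:sqblfirt}). Summing over all $n$ clusters and the connecting segments gives the target value $K$; since the cycle visits every block exactly once, every pixel of $R$ is covered, so this is a valid 3dPP tour. The slightly delicate point here is that for a type-$2$ (diagonal) traversal the chosen pair of corners must actually be compatible with the geometric placement of the adjacent blocks; I would argue that the three intersection types in Figures \ref{fig:Segment_to_Blockb}--\ref{fig:Segment_to_Blockd} always leave a free corner to route through, and in fact one can always use the cheaper diagonal traversal, which is what pins down $K$.

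The converse is where the real work is, and it is exactly what Lemma \ref{lem:sqblfirt} is for: any Hamiltonian tour $H$ of $R$ must cover each $9$-cluster $C$ by a \emph{single} path entering and exiting at corner vertices, not by two disjoint paths. Given that, each cluster contributes at least $4$ internal turns plus at least $0$ for entry/exit, and — crucially — the induced "contracted" structure on the clusters (contract each single traversal path to an edge between the two blocks it connects) is a Hamiltonian cycle on $G$: it visits every block once and the only way two blocks can be consecutive in $H$ is if their square blocks share the endpoint geometry, i.e., if there is an edge in $G$. So a turn-cost-$\le K$ tour forces enough of the traversals to be of the diagonal (cost-$4$) type, and extracting the block-adjacency cycle yields a Hamiltonian cycle of $G$.

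The main obstacle I anticipate is the exact constant $K$ and the tightness argument in the converse: I need to show that any saving below the "all type-$2$" count is impossible, i.e., that one cannot trade an expensive in-cluster traversal for cheap connections elsewhere, and that non-corner or repeated-visit behavior (e.g., a tour that re-enters a cluster, or that makes idle moves across already-printed pixels) cannot beat the Hamiltonian-cycle bound. This is handled by Lemma \ref{lem:sqblfirt} plus a careful accounting of the forced bends at block entrances/exits, but making the inequality come out to a clean threshold — and confirming the between-block connector segments force the turns claimed in Figure \ref{fig:Turn_Cost} — is the part that requires the most care. I would also note the standard caveat that the reduction is polynomial because $|R|$ is linear in $n$, and that holes in $R$ arise naturally from the gaps between non-touching blocks, matching the "$R$ with holes" hypothesis in the statement.
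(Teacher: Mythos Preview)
Your overall plan matches the paper's: replace segments by $3\times 3$ blocks, use Lemma~\ref{lem:sqblfirt} for the converse, and read off a Hamiltonian cycle in $G$ from the single-path traversals of the clusters. Two points diverge from the paper and one of them would not go through as written.

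First, the claim that ``one can always use the cheaper diagonal traversal'' is not correct and is not what the paper does. The entry and exit corners of a cluster $C_j$ are \emph{forced} by which two neighbors the Hamiltonian cycle of $G$ uses at $v_j$: if both cycle-neighbors of the underlying segment attach at the same endpoint (e.g., a horizontal segment whose two vertical cycle-neighbors both meet it at its left end), then the two connection corners of $C_j$ lie on the same side and a type-$1$ (cost $5{+}1$) traversal is unavoidable. The paper therefore does not fix $K=4n$; it lets the given Hamiltonian cycle determine $t_1,t_2$ with $t_1+t_2=n$ and takes the 3dPP tour to have turn cost exactly $6t_1+4t_2$ (so any threshold $K\ge 6n$ works uniformly).

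Second, and more importantly, your converse is over-engineered. You plan to use tightness of $K$ to ``force enough of the traversals to be of the diagonal type,'' but the paper needs no such accounting: once Lemma~\ref{lem:sqblfirt} says a Hamiltonian tour of $R$ enters and exits each $9$-cluster exactly once, contracting clusters immediately yields a length-$n$ closed walk in $G$ visiting each vertex once, i.e., a Hamiltonian cycle --- regardless of whether individual cluster traversals are type-$1$ or type-$2$. So the ``exact constant $K$ and tightness argument'' you flag as the main obstacle is not actually needed; the entire weight of the converse sits in Lemma~\ref{lem:sqblfirt}.
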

\begin{proof}
  Assume there exists a Hamiltonian tour $H$ in $G$.
  Any move in $H$ from vertex $v_i$ to $v_j$ is equivalent to moving from $9$-cluster $C_i$ to $C_j$ in $R$ by construction.
  By Lemma \ref{lem:sqblfirt}, $H$ covers each $9$-cluster $C_i$ by a single path within $C_i$.
  Further, $H$ uniquely determines the type of traversal (type $1$ or $2$) for each $9$-cluster $C_i$, and hence the numbers $t_1, t_2$ of these clusters such that $t_1 + t_2 = n$.
  This gives a 3dPP tour of $R$  with total turn cost $6t_1+4t_2$ (Figure \ref{fig:Turn_Cost}).

  Conversely, assume there is a 3dPP tour $T$ with turn cost $6t_1+4t_2$ for $t_1, t_2$ being the numbers of type-$1$ and type-$2$ traversals, and $t_1 + t_2 = n$.
  Since $T$ enters and exits each $9$-cluster exactly once (Lemma \ref{lem:sqblfirt}), we are guaranteed a tour of length $n$ in $G$ where each node is traversed exactly once.
  Thus $G$ has a Hamiltonian tour. 	
\end{proof}

\begin{figure}[ht!] 
  \begin{subfigure}[t]{3.22in}
    \centering
    \includegraphics[scale=0.387]{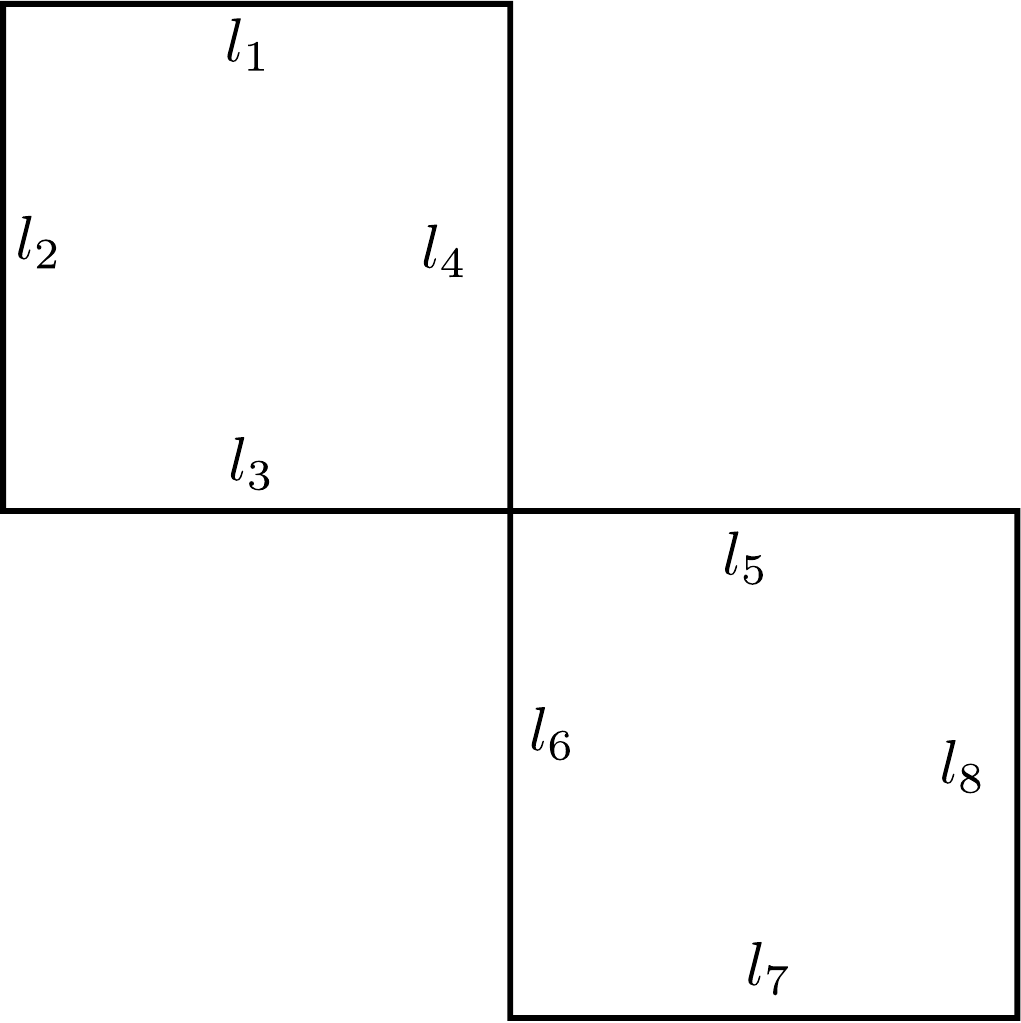}
    \caption{\label{fig:proofoutlinea}}
  \end{subfigure}
  \begin{subfigure}[t]{3.22in}
    \centering
    \includegraphics[scale=0.387]{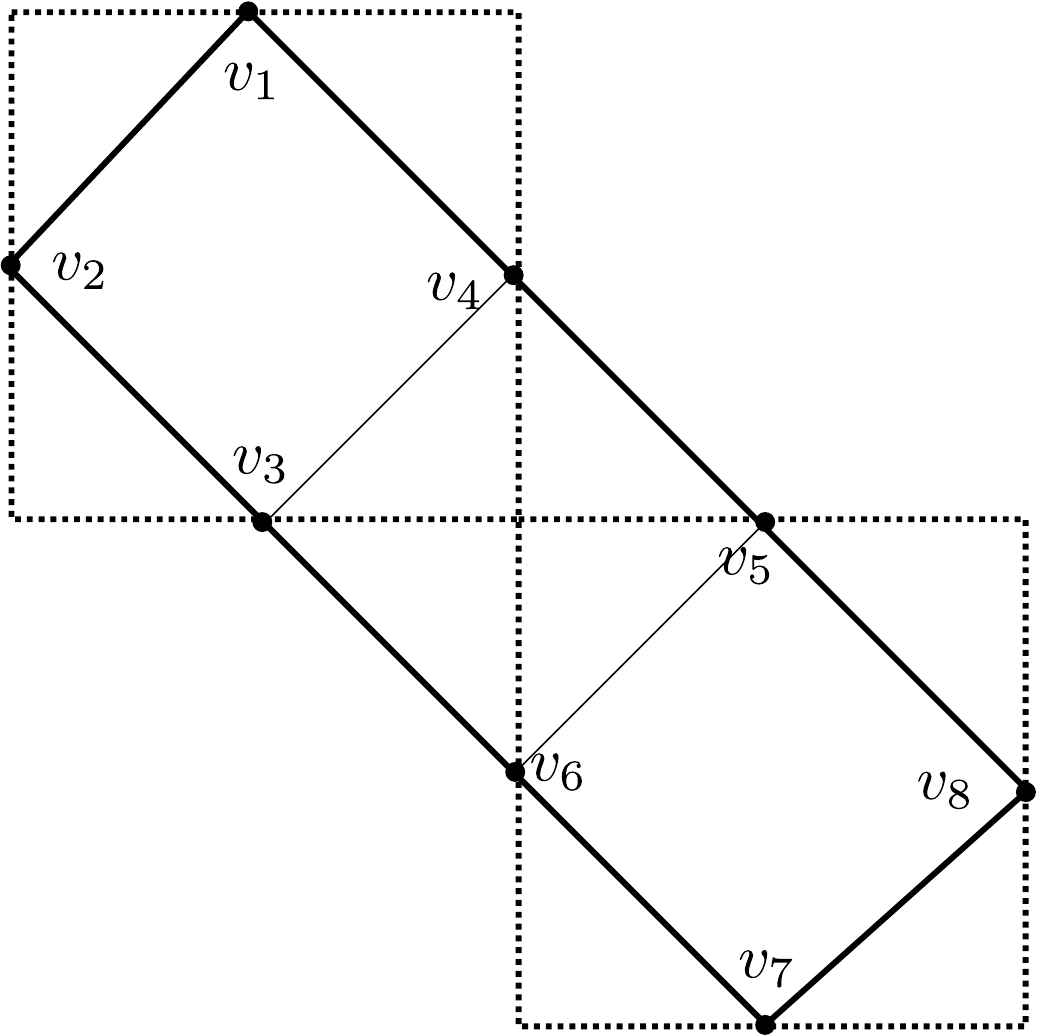}
    \caption{\label{fig:proofoutlineb}}
  \end{subfigure}
  \\
  \smallskip\\
  \begin{subfigure}[t]{3.22in}
    \centering
    \includegraphics[scale=0.387]{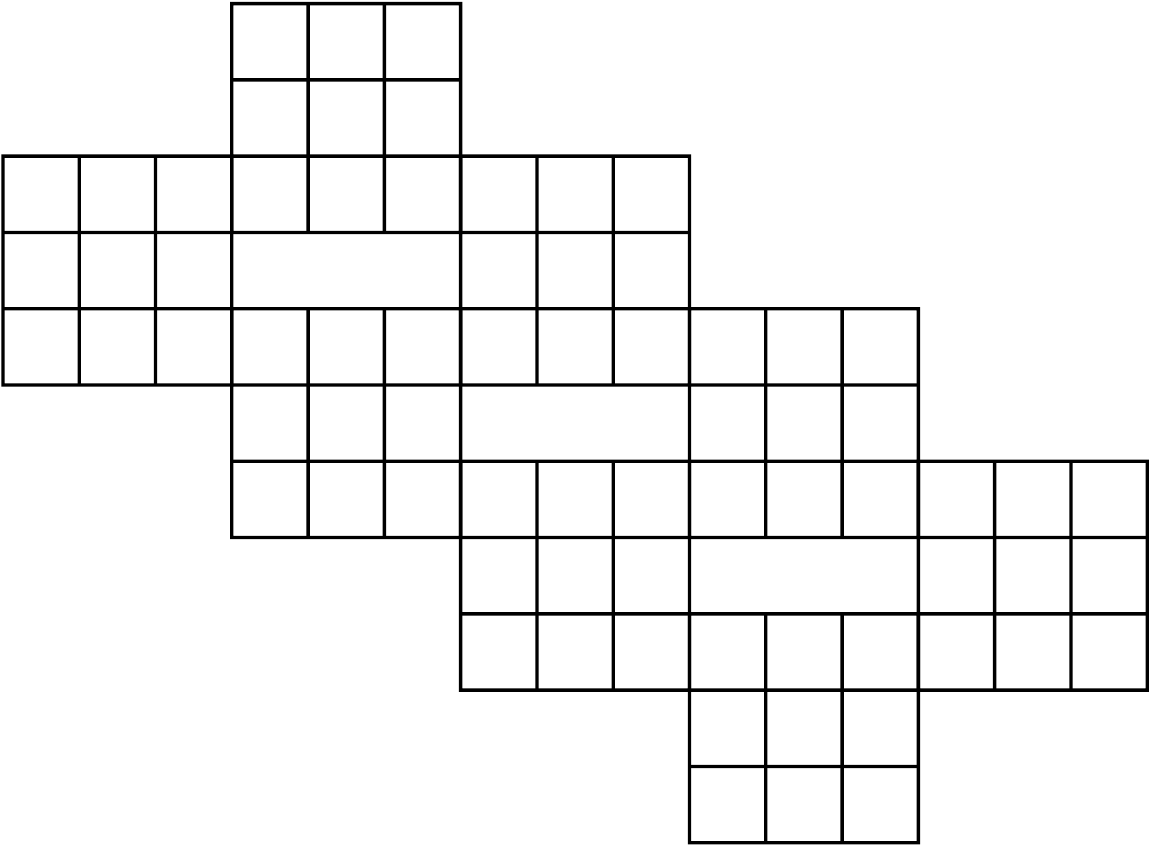}
    \caption{\label{fig:proofoutlinec}}
  \end{subfigure}
  \begin{subfigure}[t]{3.22in}
    \centering
    \includegraphics[scale=0.387]{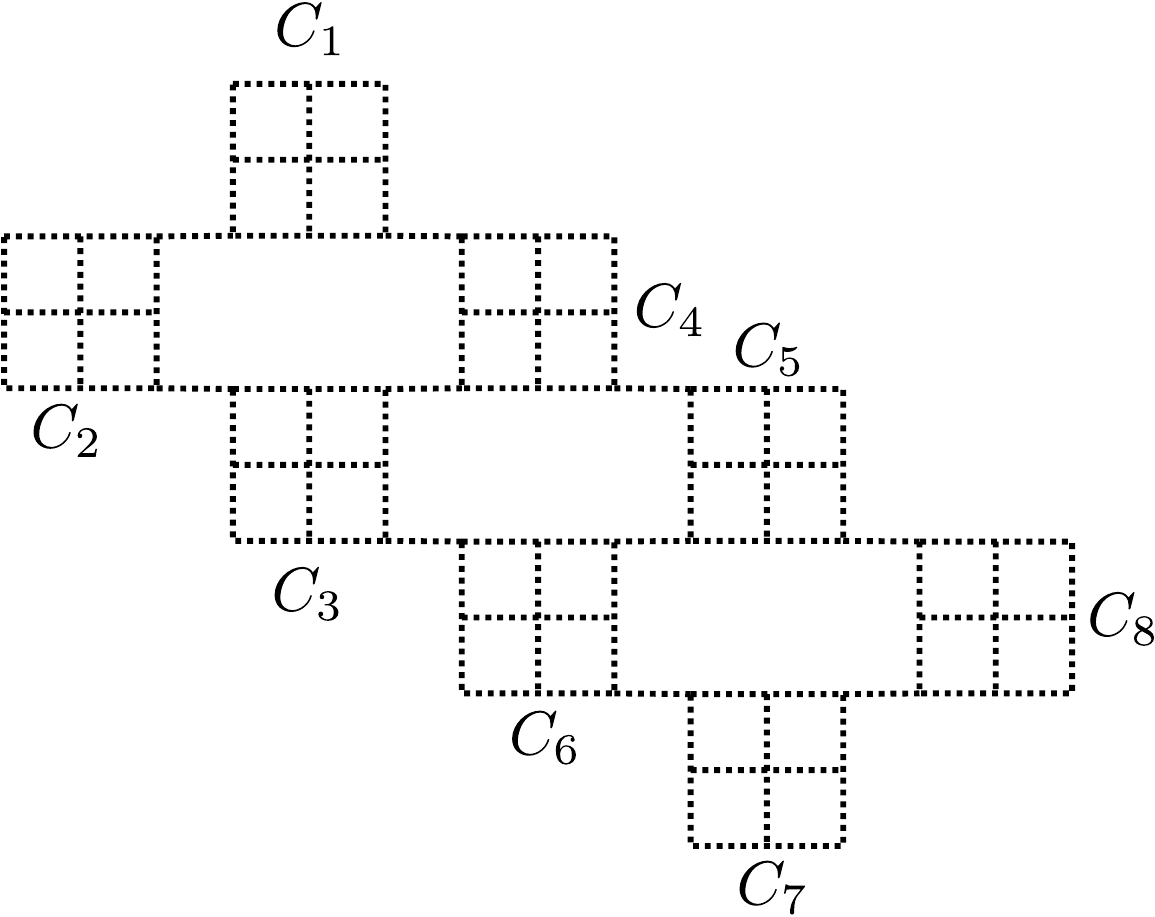}
    \caption{\label{fig:proofoutlined}}
  \end{subfigure}
  \\
  \smallskip  \\
  \begin{subfigure}[t]{3.22in}
    \centering
    \includegraphics[scale=0.387]{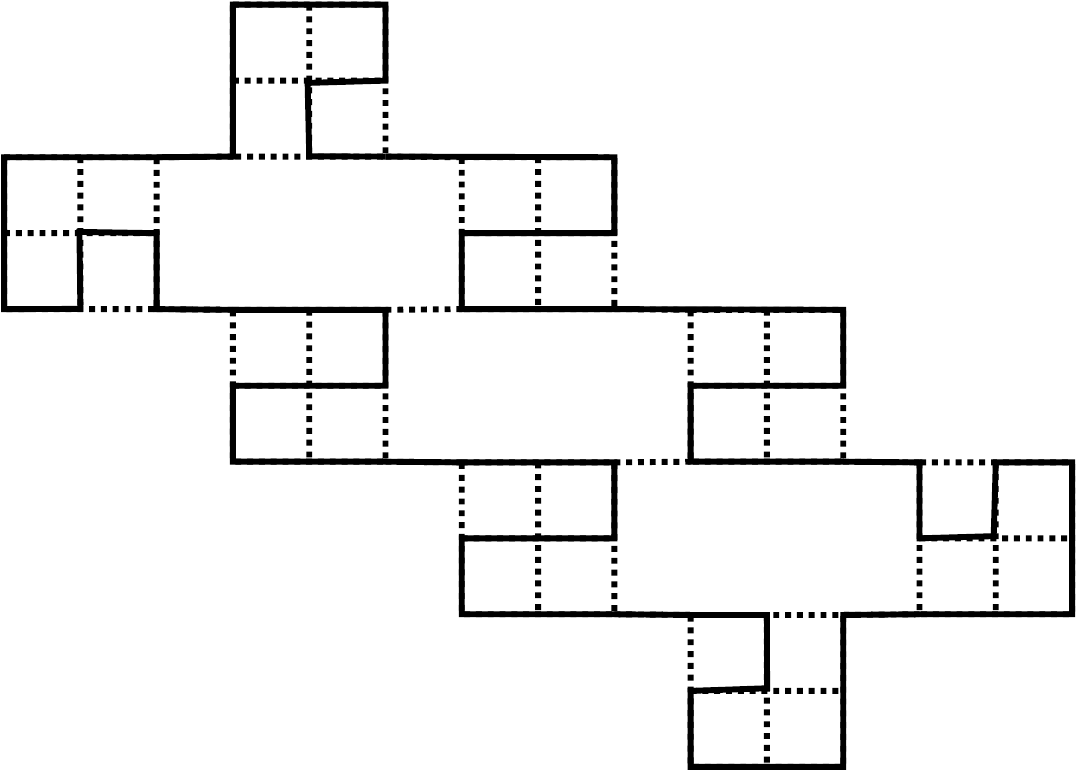}
    \caption{\label{fig:proofoutlinee}}
  \end{subfigure}
  \begin{subfigure}[t]{3.22in}
    \centering
    \includegraphics[scale=0.387]{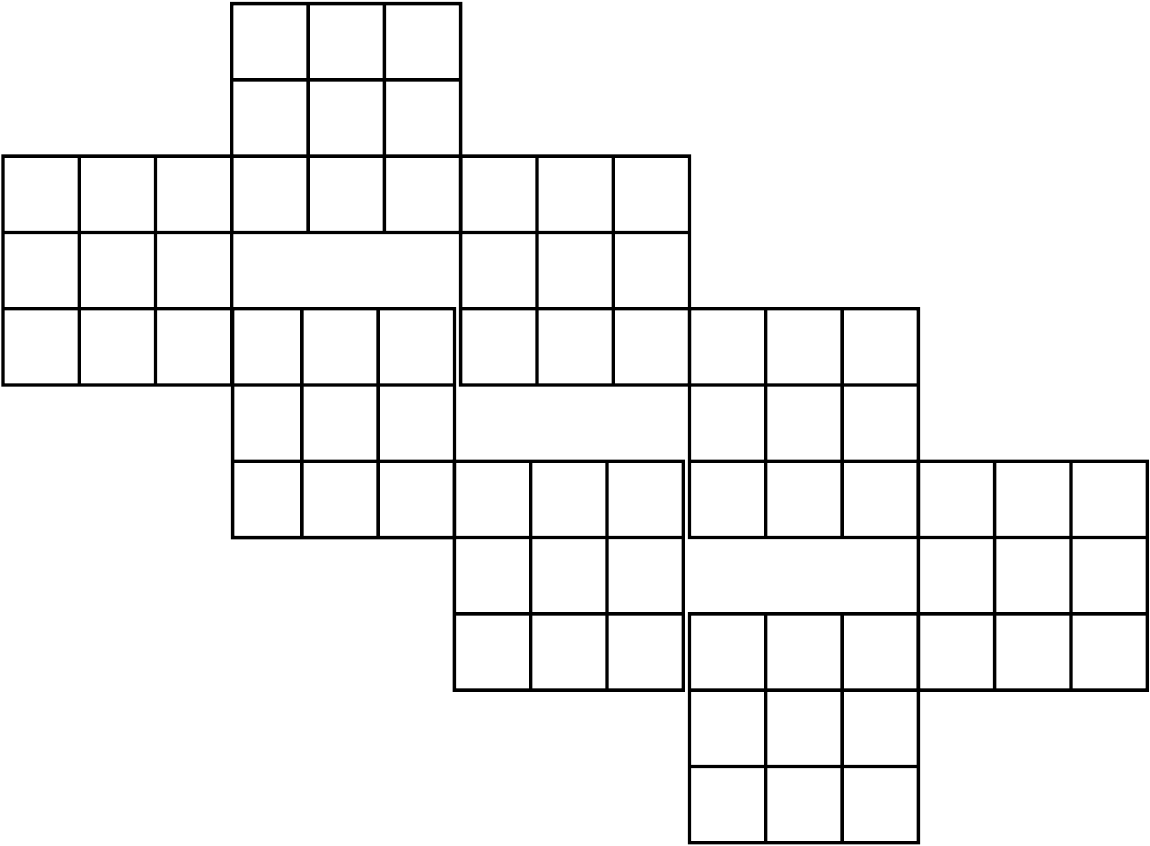}
    \caption{\label{fig:Square_Block_First_Simply_Connected}}
  \end{subfigure}
  \caption{\label{fig:proofoutline}
    See text above Lemma \ref{lem:sqblfirt} for details on Figures (\ref{fig:proofoutlinea})--(\ref{fig:proofoutlinee}).
    Proof of Corollary \ref{cor:nphardsplpoly} is illustrated in Figure (\ref{fig:Square_Block_First_Simply_Connected}).
  }
\end{figure}

\begin{cor} \label{cor:nphardsplpoly}
  Minimum Turn 3dPP is NP-hard for any simple polygon $R'$ when using an axis-aligned unit square extruder.
\end{cor}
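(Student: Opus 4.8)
The plan is to reduce from the minimum turn 3dPP on polygons with holes established in Theorem \ref{thm:mtpnphard} by showing that the region $R$ constructed there can be modified into a simple polygon $R'$ without affecting the reduction's correctness. The key observation is that the holes in $R$ arise only from the arrangement of $9$-clusters meeting at shared corner vertices; the "interior" faces of this arrangement are bounded regions not covered by any square block. I would connect each such bounded hole to the outer face by carving a thin axis-aligned corridor (of width one pixel, i.e., one extruder square) through the existing blocks, or equivalently by inserting a narrow channel between two adjacent clusters so that the complement becomes connected. Figure \ref{fig:Square_Block_First_Simply_Connected} already illustrates the intended construction, so the proof is mostly a matter of verifying that this local surgery preserves the cost accounting.

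The steps, in order, are as follows. First, I would describe precisely how to eliminate each hole: pick a face of the block arrangement that is a hole, and remove a $1$-pixel-wide strip of the extruder grid connecting it to the exterior, routing the strip so it does not disconnect $R'$ and does not merge two clusters into one (this can be arranged because every hole is bounded by block boundaries and the grid has enough room by the $\sqrt 2$-scaling from Lemma \ref{lem:sqblfirt}'s setup). Second, I would argue that the dual pixel graph of $R'$ still consists of the same $9$-clusters $C_i$ connected in the same pattern as $G$, since we only deleted pixels that were not part of any $9$-cluster — the clusters and their four corner vertices are untouched, and Lemma \ref{lem:sqblfirt} applies verbatim to any Hamiltonian tour of $R'$. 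Third, I would repeat the forward and reverse directions of Theorem \ref{thm:mtpnphard}: a Hamiltonian tour of $G$ yields a 3dPP tour of $R'$ with turn cost $6t_1 + 4t_2$, and conversely a 3dPP tour of $R'$ with that turn cost forces a single path through each cluster, hence a Hamiltonian tour of $G$. Since $R'$ is now simple (no holes) and the target turn-cost threshold $6t_1+4t_2$ with $t_1+t_2=n$ is computed exactly as before, NP-hardness transfers.

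The main obstacle I anticipate is the second step: ensuring that the corridor-carving does not inadvertently create new pixels belonging to some $9$-cluster, does not change the set of corner vertices through which a tour may enter or exit a cluster, and does not give the tour a "shortcut" that lowers the turn cost below $6t_1+4t_2$ or that lets it skip a cluster. One must check that any pixel added along the corridor is a "corridor pixel" that a minimum-turn tour either ignores or traverses straight through at zero extra turn cost, and that Lemma \ref{lem:sqblfirt}'s rigidity argument — which only used the four corners of a cluster and disjointness of sub-paths — is not disturbed. I expect this to be handled by keeping the corridors disjoint from all $9$-clusters and by noting that, in the hard instances, the optimal tour has no incentive to route through a dead-end corridor pixel; a short lemma or remark to this effect, together with the figure, should suffice.
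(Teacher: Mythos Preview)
Your proposal has a genuine gap at the construction step. You plan to carve one-pixel-wide corridors through $R$ while leaving every $9$-cluster intact, asserting that ``we only deleted pixels that were not part of any $9$-cluster.'' But by construction $R$ is \emph{exactly} the union of the $3\times 3$ square blocks, so every pixel of $R$ belongs to some $9$-cluster; there are no spare pixels to route a corridor through. Removing any unit pixel therefore damages a cluster, which invalidates Lemma~\ref{lem:sqblfirt} (the cluster no longer has nine vertices, so the single-path and corner-entry arguments collapse) and breaks the $6t_1+4t_2$ turn-cost accounting. Your appeal to the $\sqrt{2}$-scaling is misplaced: that scaling lives in the grid-graph-to-HUSPEPIG step and governs where the unit segments sit, not the spacing of the $3\times 3$ blocks in $R$, which abut along pixel edges so that adjacent cluster corners are one unit apart. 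There is simply no room at the pixel scale.

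The paper sidesteps this entirely by working below the pixel scale. It cuts slits of width $w/n'$ with $n'\geq n$ and $w\leq 1$, i.e., width strictly less than one extruder unit, from each hole to the exterior. Because the extruder is a unit square, any tour that covered $R$ still covers $R'\subset R$; the pixel grid, the $9$-clusters, and their four corner vertices are literally unchanged, so Lemma~\ref{lem:sqblfirt} and both directions of Theorem~\ref{thm:mtpnphard} carry over with no extra case analysis. The only new effect is that the tour's center momentarily exits $R'$ when crossing a slit; this is charged to the idle-movement budget, and with at most $n$ slits the total is $n(w/n')\leq 1=\epsilon$. No corridor routing, no ``dead-end pixel'' lemma, and no shortcut analysis is needed.
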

\begin{proof}
  $R$ in Theorem \ref{thm:mtpnphard} can be modified to a simple polygonal region $R'$ by adding narrow slits to connect holes in $R$ (Figure \ref{fig:Square_Block_First_Simply_Connected}).
  Let the width of each slit be $w/n'$ for $n' \geq n$ and $w \leq 1$.
  Addition of slits does not add any turn cost, since it is the same 3dPP with {\it total} idle movement $n(w/n') = \epsilon \leq 1$. 
  By Theorem \ref{thm:mtpnphard}, a Hamiltonian tour of length $n$ on $G$ gives a 3dPP tour with turn cost $6t_1 + 4t_2 $ for $R$. 
  This implies there is a 3dPP tour with turn cost $6t_1 + 4t_2$ on $R'$, since total idle movement is less than $1$.
  Conversely, let $R'$ have a tour with turn cost $6t_1 + 4t_2$.
  Then total turn cost of a tour in $R$ is also $6t_1 + 4t_2$, since idle movement in $R'$ is $< 1$. 
  Hence it corresponds to a Hamiltonian tour of length $n = t_1 + t_2$ in $G$ as implied by the arguments in Theorem \ref{thm:mtpnphard}.
\end{proof}

\section{\hspace*{-0.2in} SFCDecomp: \hspace*{-0.03in}Domain Decomposition and Space Filling Curves} \label{sec:SFCDecomp}

We develop \emph{SFCDecomp} as a framework that could handle large instances of 3dPP.
This framework uses a quadtree structure to decompose the integral orthogonal polygon (IOP) $P$ into square cells.
Some of the square cells can be joined to create bigger cells.
We then identify the traversal order of these cells using a Hilbert space filling curve.
The framework will work also with other rectilinear space filling curves such as Peano or Moore curves.

\subsection{Quadtree Decomposition and its Properties} \label{ssec:quadtree}
We create the quadtree decomposition of IOP $P$ as follows.
First, we find the smallest initial cell $[0, 2^q] \times [0, 2^q]$ that contains $P$ where $q>0$ is an integer.
Second, we apply adaptive subdivision of the initial cell until all cells are completely inside or outside of $P$.
We then remove cells that are completely outside of $P$ from the quadtree.
Third, we further subdivide cells if the area of the cell is more than $\delta > 0$, an integer.
An example of the quadtree is shown in Figure \ref{fig:quadtree_example_combine}.

We want to ensure that each leaf cell in the quadtree is traversed at least once by the toolpath algorithm.
One option is to do depth first traversal, where we recursively follow the left most unvisited branch until we reach a leaf cell.
Once done with the leaf cell, we backtrack until the first cell that has a left most unvisited child cell, and so on.
This gives us a sequential ordering of cells (see Example in Figure \ref{fig:quadtree_example_combinec}).
But jumps could be pretty large between neighboring cells in this sequence.
Instead, we create a sequence based on a Hilbert ordering such that neighboring cells are also adjacent (see Figure \ref{fig:quadtree_example_combined}). 

\begin{figure}[ht!]
  \centering
  \begin{subfigure}[t]{3.2in}
    \centering
    \includegraphics[scale=0.38]{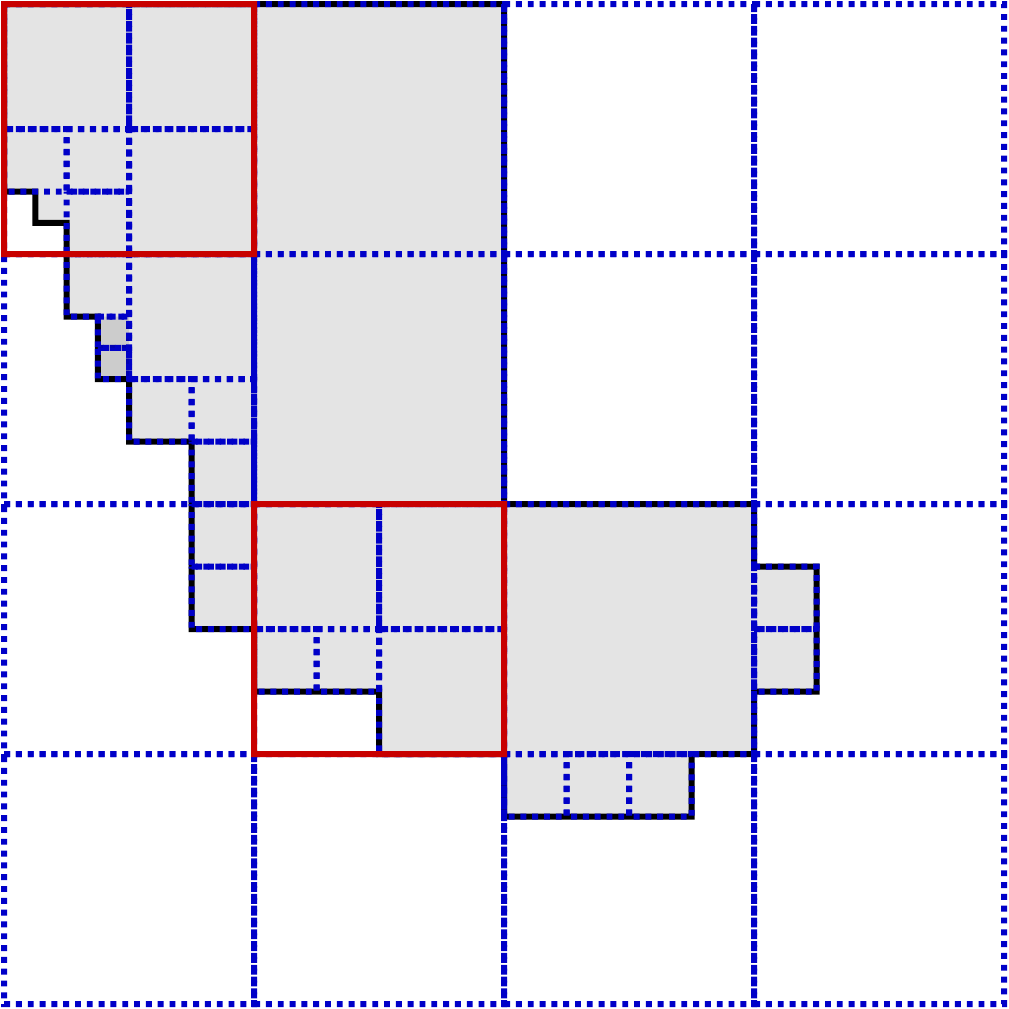}
    \caption{\label{fig:quadtree_example_combinea}}
    \medskip
  \end{subfigure}
  \begin{subfigure}[t]{3.2in}
    \centering
    \includegraphics[scale=0.38]{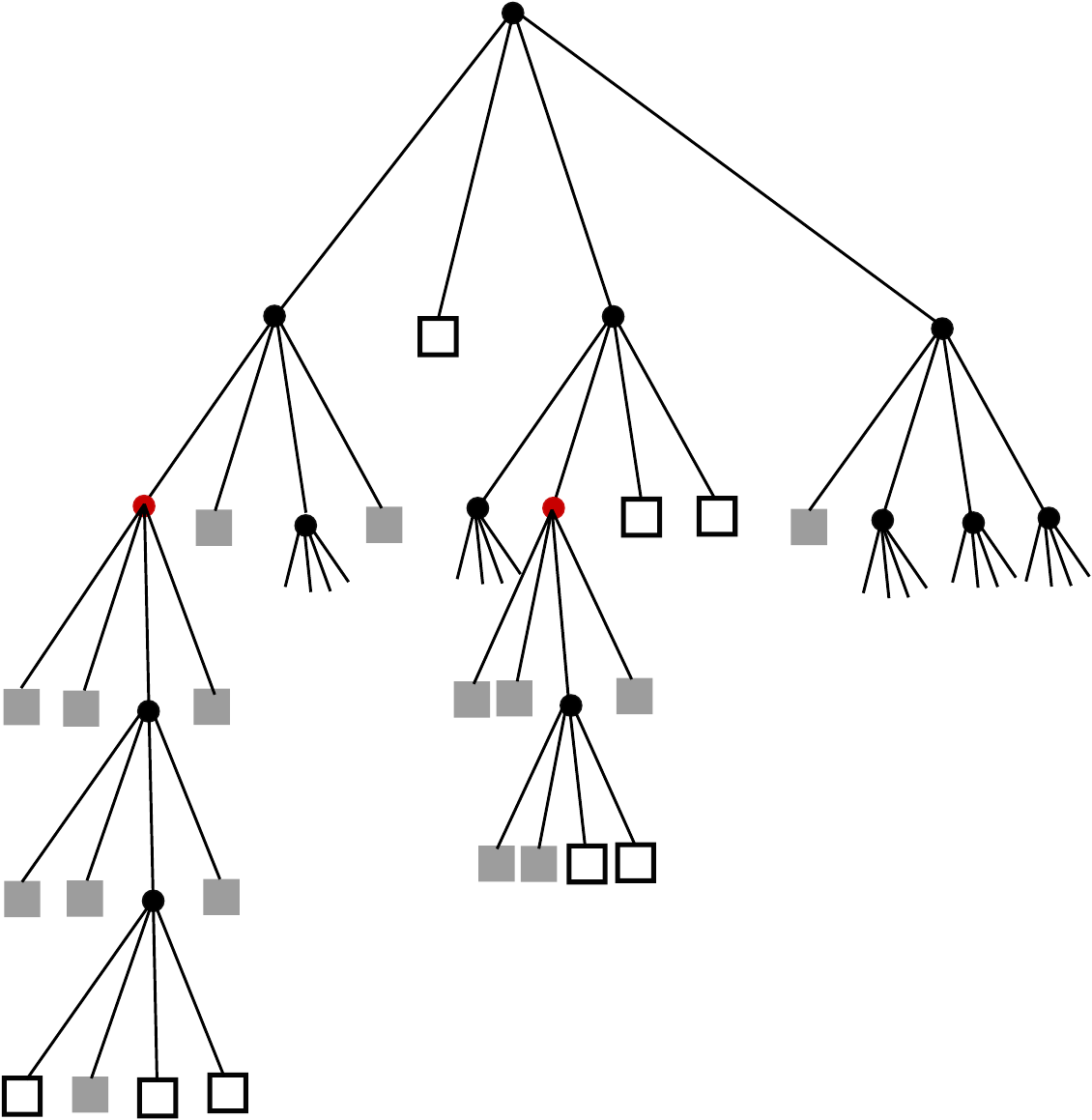}
    \caption{\label{fig:quadtree_example_combineb}}
    \medskip
  \end{subfigure}
  \\
  \bigskip
    \begin{subfigure}[t]{3.2in}
    \centering
    \includegraphics[scale=0.38]{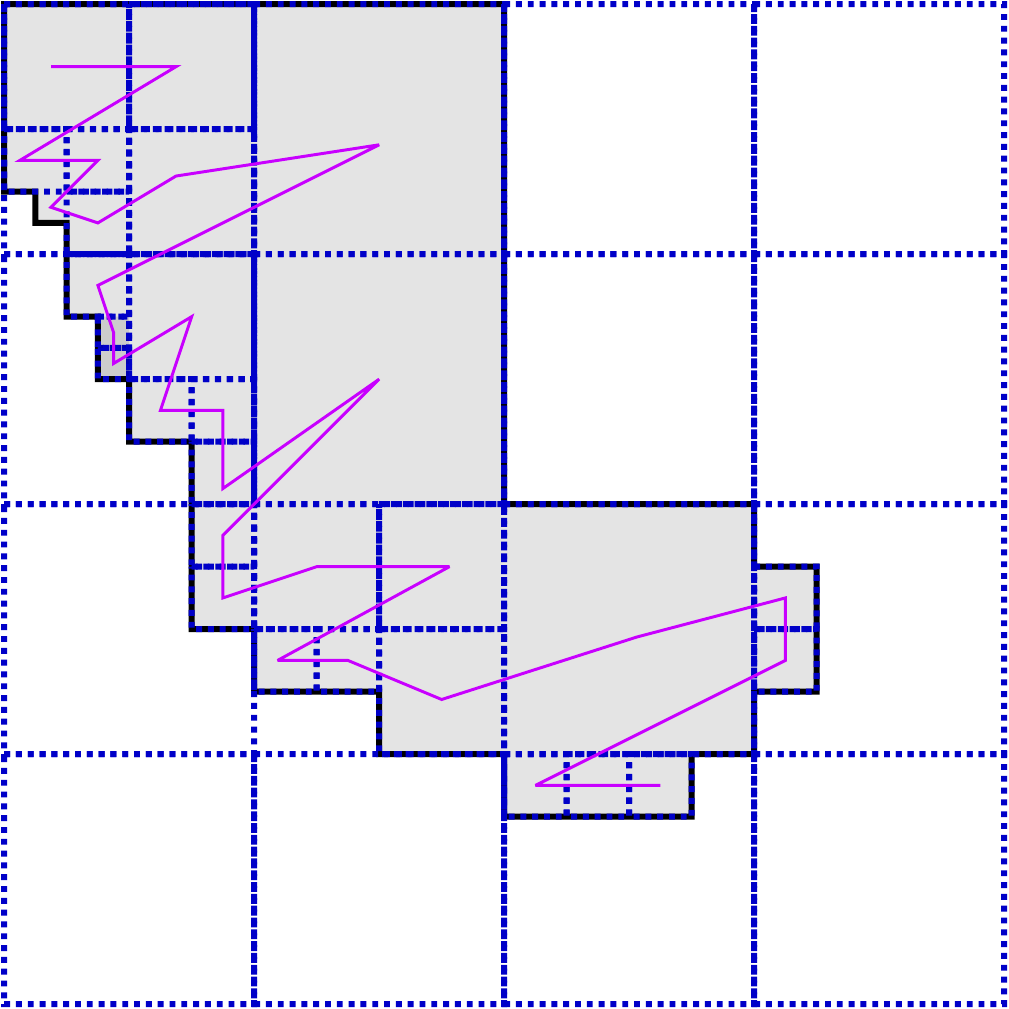}
    \caption{\label{fig:quadtree_example_combinec}}
  \end{subfigure}
  \begin{subfigure}[t]{3.2in}
    \centering
    \includegraphics[scale=0.38]{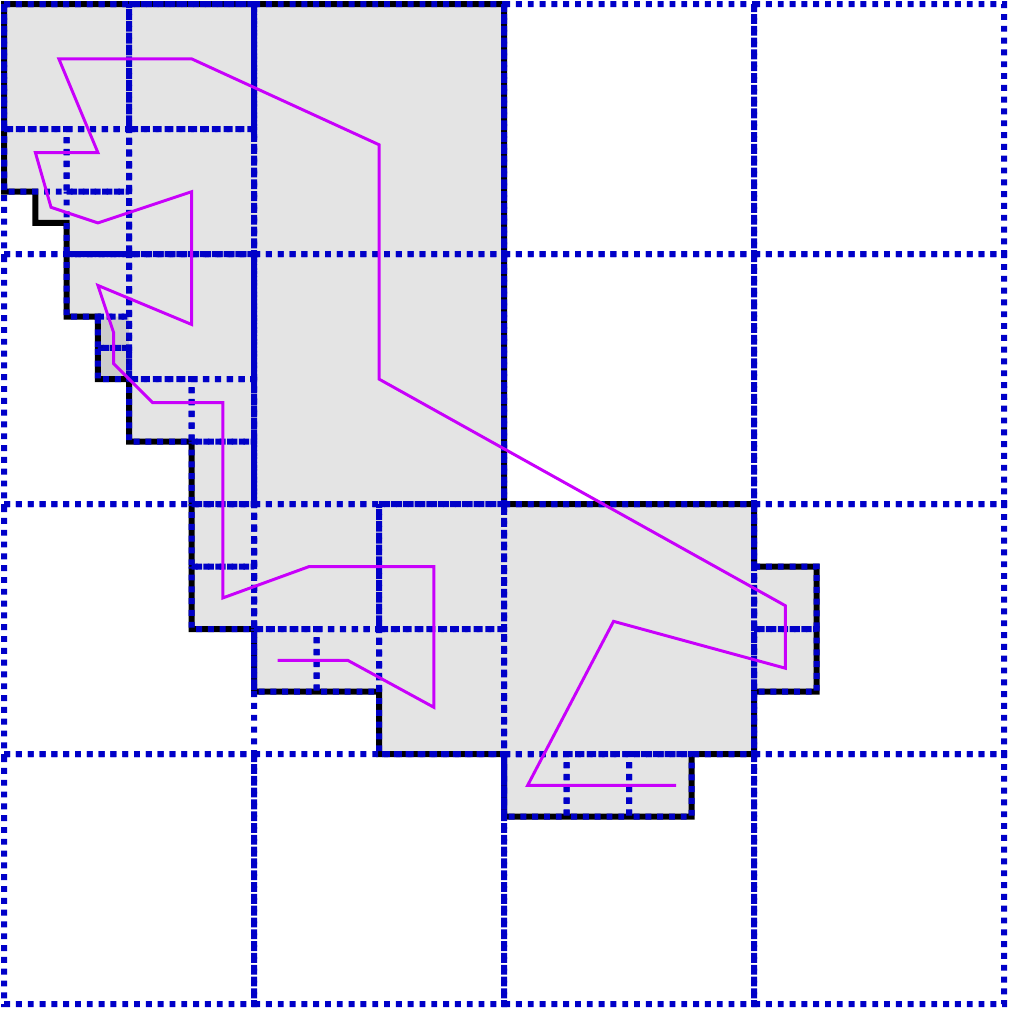}
    \caption{\label{fig:quadtree_example_combined}}
  \end{subfigure}
  \\
  \bigskip
  \caption{\label{fig:quadtree_example_combine}
    In Figure \ref{fig:quadtree_example_combinea}, red cells will be further subdivided since they are not completely inside the polygon.
    Figure (\ref{fig:quadtree_example_combineb}) shows the quadtree where red nodes represents red cells in Figure (\ref{fig:quadtree_example_combinea}) further subdivided, shaded square nodes represent cells completely inside polygon, and empty square nodes represent cells completely outside polygon and will be removed from the tree.
    Figures (\ref{fig:quadtree_example_combinec}) and (\ref{fig:quadtree_example_combined}) show sequential ordering of leaf node cells using depth first traversal and Hilbert ordering.
  }  	
\end{figure}


For illustration, consider an $8 \times 8$ square domain and its $2 \times 2$ cells generated from subdivision using quadtree with $\delta = 4$, and Hilbert ordering of its cells as shown in Figure \ref{fig:celltour}.
To generate a Hilbert curve we need to identify entry and exit corners for each cell.
Suppose the Hilbert curve enters at $(0, 0)$ and exits at $(8, 0)$.
Due to its properties, the Hilbert curve will enter and exit the cell at points that are along an edge of the cell (and not diagonally opposite), as shown in Figure \ref{fig:celltoura}.
Each cell $C$ (from now on, we refer to these cells as $C$ or $C_i$) in Figure \ref{fig:celltoura} is a union of pixels, as they are square with integer length.
Let $G$ be the dual graph of the pixel graph of cell $C$, and let entry ($s$) and exit ($t$) vertices in $G$ be the vertices closest to the Hilbert curve in $C$.
We ensure all the vertices in the dual graph of the pixel graph of cell $C$ are covered in the following way.
{\bfseries \textit{First}}, we find the path for each $G$ using the MIP model in Section \ref{sec:ipmodel} for a given choice of entry ($s$) and exit ($t$) vertices (Figure \ref{fig:celltourb}).
{\bfseries \textit{Second}}, we join these paths by a connecting path between exit and entry vertices of neighboring graphs in corresponding cells' Hilbert ordering (Figure \ref{fig:celltourc}).
If length of the connecting path is more than one unit, then it is set as idle movement of the extruder.
 
Note that the traversal of IOP can be a walk due to idle movements of the extruder.
We call a finite sequence of vertices and edges where both vertices and edges can be repeated a {\em walk}, and a {\it path} when vertices and edges are not repeated. 

\begin{figure*}[ht!] 
  \centering
  \begin{subfigure}[t]{2.1in}
    \includegraphics[scale=0.45]{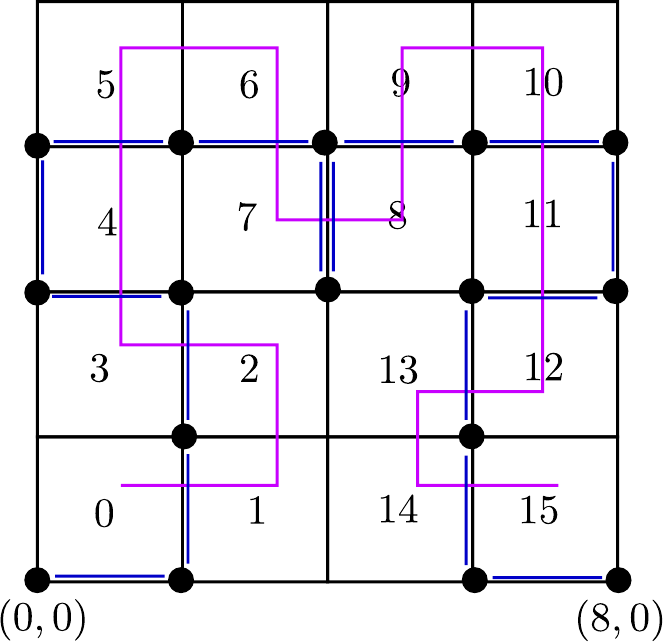}
    \caption{\label{fig:celltoura}}
  \end{subfigure}
  \begin{subfigure}[t]{2.1in}
    \centering
    \includegraphics[scale=0.45]{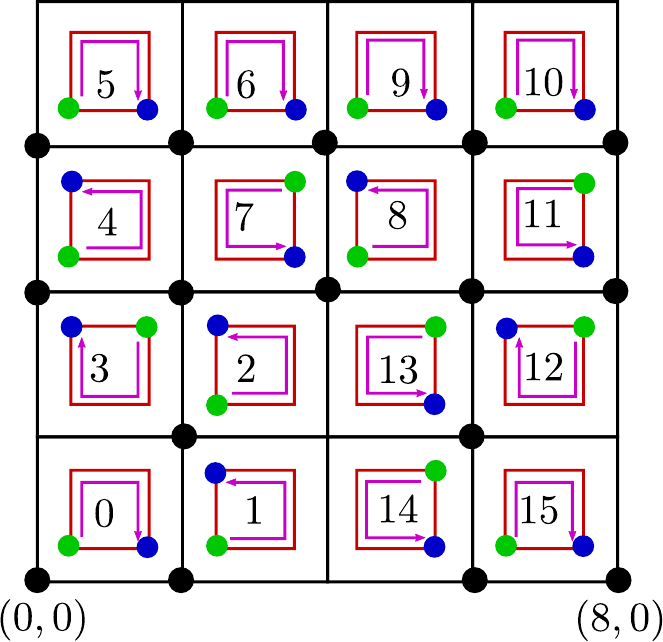}
    \caption{\label{fig:celltourb}}
  \end{subfigure}	
  \begin{subfigure}[t]{2.1in}
    \hspace*{0.1in}
    \includegraphics[scale=0.45]{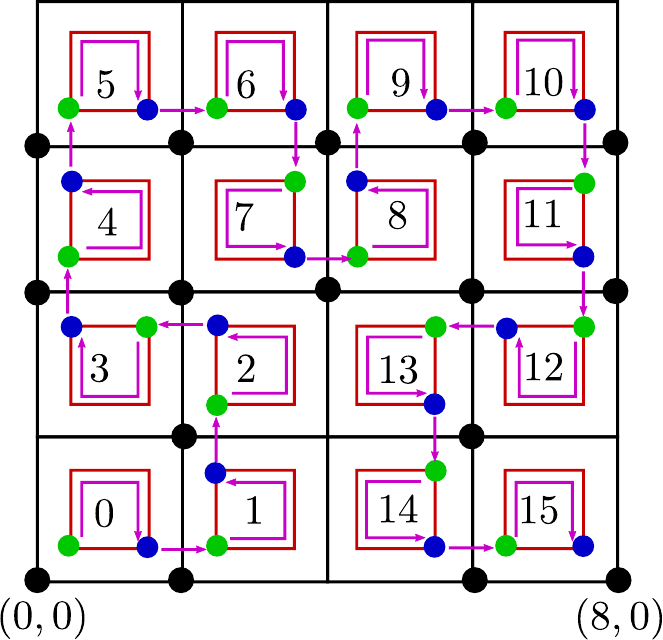}
    \caption{\label{fig:celltourc}}
  \end{subfigure}
  \caption{\label{fig:celltour} Black dots in Figure \ref{fig:celltoura} represent entry and exit vertices for each cell from $0$ to $15$, and Hilbert ordering of the cells is in pink.
    Figure \ref{fig:celltourb} shows entry vertex (green dot) and exit vertex (blue dot) for the dual graph (red) of the pixel graph of each cell.
    Paths shown in pink covers all vertices of dual graphs.
    Figure \ref{fig:celltourc} shows connecting paths in pink which connect paths on dual graphs based on Hilbert ordering of corresponding cells.} 
\end{figure*}

Entry and exit vertices of any $G$ identified by this approach are corner vertices in $G$ that can be joined by a straight path.
Let $(G, s, t)$ be a Hamiltonian $s$-$t$ path problem for  distinct vertices $s$ and $t$.
We show that $(G, s, t)$ always has a solution.
First we review certain properties of grid graphs.
A grid graph is a finite graph whose vertices are points with integer coordinates, and two vertices are connected if they are unit distance apart.
Let $(v_x, v_y)$ be the coordinates of vertex $v$.
Then $v$ is even if $v_x + v_y = 0 \mod 2$, else $v$ is odd.
This implies that grid graphs are bipartite, with edges connecting even and odd vertices.
Let $R(m ,n)$ be the grid graph whose vertex set is $\{v: 1 \leq v_x \leq m, ~ 1\leq v_y \leq n \}$.
A rectangular graph is a grid graph isomorphic to $R(m, n)$.
For the sake of completeness of presentation, we review the necessary conditions for existence of a Hamiltonian path in rectangular graphs presented by Itai et al.~\cite{ItPaSz1982}.
We also denote $R(m,n)$ by $B = (V^0 \cup V^1, E)$, the bipartite graph.
Since $B$ is two colorable, let all vertices in $V^0$ be of one color and all vertices in $V^1$ be of a second color.

\subsubsection{The Hamiltonian path problem $(B, s, t)$} \label{list:hamipathexist}
A solution to $(B, s,t)$ exists if at least one of the following conditions is {\it not} satisfied. 
\begin{enumerate}
  \item $B$ is even $(|V^0| = |V^1|)$ and $s$ and $t$ are of the same color, or $B$ is odd, say, with $|V^0| = |V^1| + 1$, and $s, t \in V^1$. 
  \item $n = 1$ and either $s$ or $t$ is not a corner vertex (Figure \ref{fig:hamiltonianconda}).
  \item $n = 2$ and edge $st$ is not a boundary edge (Figure \ref{fig:hamiltoniancondb}). 
  \item $n = 3$ and it satisfies following conditions (Figures \ref{fig:hamiltoniancondc}, \ref{fig:hamiltoniancondd}):
    \begin{enumerate}
      \item $s$ is different color from $t$, and $t$ is different color from the top corner vertices; and 
      \item $s_x \leq t_x - 1$ or $(s_y = 2$ and $s_x \leq t_x)$. 
    \end{enumerate}
\end{enumerate}

\begin{figure}[ht!] 
  \centering
  \begin{subfigure}[t]{3in}
    \centering
    \vspace*{-0.5in}
    \includegraphics[scale=0.47]{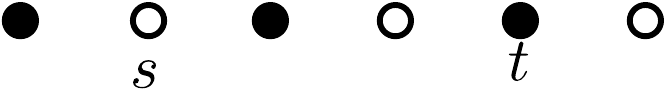}
    \vspace*{0.2in}
    \caption{\label{fig:hamiltonianconda}}
  \end{subfigure}
  \hfill
  \begin{subfigure}[t]{3in}
    \centering
    \includegraphics[scale=0.47]{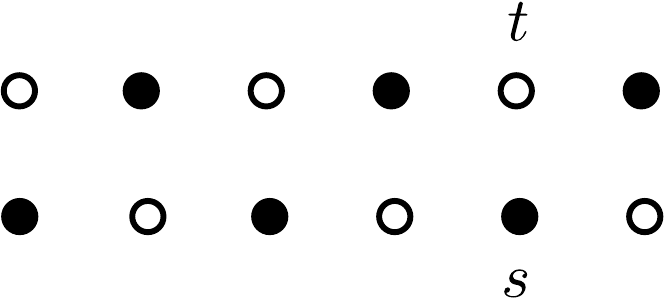}
    \caption{\label{fig:hamiltoniancondb}}
  \end{subfigure}
  \\
  \medskip
  \begin{subfigure}[t]{3in}
    \centering
    \vspace*{0.2in}
    \includegraphics[scale=0.47]{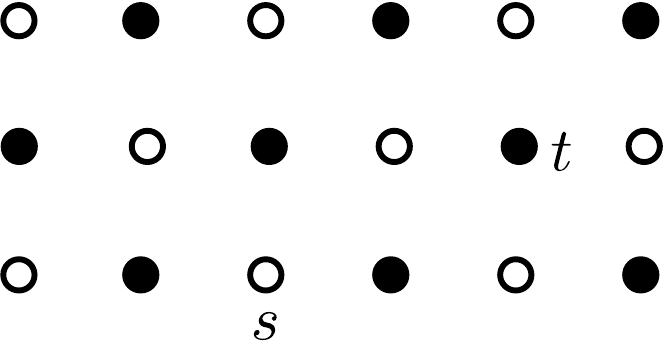}
    \caption{\label{fig:hamiltoniancondc}}
  \end{subfigure}
  \hfill
  \begin{subfigure}[t]{3in}
    \centering
    \vspace*{0.2in}
    \includegraphics[scale=0.47]{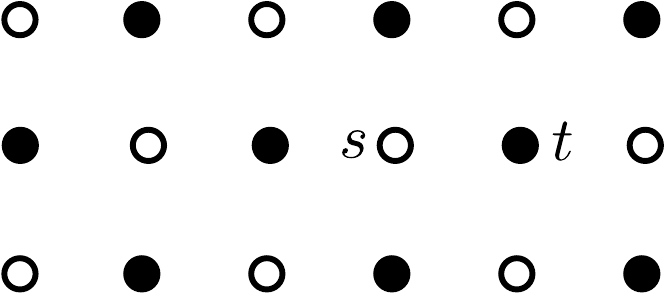}
    \vspace*{0.15in}
    \caption{\label{fig:hamiltoniancondd}}
  \end{subfigure}
  \caption{\label{fig:hamiltoniancond}
    Cases that prevent Hamiltonian path problem $(B,s,t)$ from having a solution.
    This figure is similar, but not identical, to Figure 3.1 in the work of Itai et al.~\cite{ItPaSz1982}.
  }
\end{figure}

\begin{lem}\label{lem:hamiltonianpathexist}
  Let $(G, s, t)$ be a Hamiltonian path problem on $G$, the dual graph of pixel graph of a cell $C$, and $s$ and $t$ be the entry and exit vertices of the Hamiltonian path.
  Then $(G, s, t)$ has a solution.
\end{lem}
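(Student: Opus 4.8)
The plan is to recognize the dual graph $G$ as a full square grid graph and then check that the entry/exit pair $(s,t)$ avoids every obstruction in the list of Section~\ref{list:hamipathexist}. First I would pin down the structure of $G$: since the quadtree subdivision is continued until every surviving leaf cell lies \emph{completely inside} $P$, the cell $C$ is a full axis-aligned block of pixels, and since all cells come from an initial $2^q\times 2^q$ square by quadrisection (with possible merging of four siblings into one larger square), $C$ is a $k\times k$ square with $k$ a power of two. Consequently the dual graph of the pixel graph of $C$ is exactly the rectangular graph $R(k,k)$, with no missing vertices. If $k=1$ the assertion is vacuous, so I assume $k\ge 2$, and then $k$ is even. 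By the discussion just before the lemma, the Hilbert-curve rule places $s$ and $t$ at corner vertices of $G$ joinable by a straight path; the only corner pairs with that property are the four pairs of endpoints of a side of the $k\times k$ grid, and these are equivalent under the symmetries of the square, so I may take $s=(1,1)$ and $t=(k,1)$.

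Next I would check, one by one, that each of conditions 1--4 of Section~\ref{list:hamipathexist} fails for $(R(k,k),s,t)$. Coloring $(x,y)$ by the parity of $x+y$, vertex $s$ has color $0$ while $t$ has color $k+1\equiv 1$ because $k$ is even, so $s$ and $t$ have different colors; also $|V^0|=|V^1|=k^2/2$, so $R(k,k)$ is even. Hence neither clause of condition 1 holds. Condition 2 would require a side of length $1$ and condition 4 a side of length $3$, both impossible for a power-of-two $k\ge 2$. The one remaining possibility is condition 3, which applies only if $k=2$; but then $s=(1,1)$ and $t=(2,1)$ are adjacent and both lie on the bottom row, so $st$ is a boundary edge and condition 3 is not satisfied either. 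Since every obstruction in the list fails, the characterization recalled in Section~\ref{list:hamipathexist} guarantees a Hamiltonian $s$-$t$ path in $G$, which is the required solution of $(G,s,t)$.

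This is essentially bookkeeping once the two structural facts are in hand --- that $G$ is a complete $k\times k$ grid graph and that $s$ and $t$ are adjacent corners; the only mildly delicate point is the smallest case $k=2$, where the parity argument for condition 1 must be supplemented by the boundary-edge check for condition 3. If one prefers not to rely on cell sides being powers of two, the same analysis covers a general $m\times n$ block of pixels: the extra case $\min(m,n)=3$ is handled because two adjacent corners on a side of a width-$3$ grid share a color, so part (a) of condition 4 fails.
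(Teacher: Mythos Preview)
Your proof is correct and follows essentially the same route as the paper's: both identify $G$ as a square grid, observe that $s$ and $t$ are side-adjacent corners and hence of opposite color (because the side has even length), note that $|V^0|=|V^1|$, and then appeal to the Itai--Papadimitriou--Szwarcfiter criterion. Your version is more explicit in checking conditions 2--4 individually (in particular the boundary-edge verification for $k=2$), whereas the paper simply asserts that none of the obstructions hold.
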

\begin{proof}
  If area of $C$ is one unit then it is a trivial case, since $G$ has one vertex.
  More generally, the entry and exit vertices $s$ and $t$ are corner vertices in $G$, and are joined by a straight path.
  This straight path has even number of vertices.
  Hence $s$ and $t$ have different colors.
  Further, we have equal number of even and odd vertices in any $G$.
  Hence a Hamiltonian path always exists, as none of the conditions listed in Section \ref{list:hamipathexist} is satisfied.   
\end{proof}

\subsubsection{Joining square cells}\label{sssec:joincell}

We now try to join certain square cells into bigger cells so that we can solve larger instances of the subproblems to reduce turn costs.
Let $G_i$ be the dual graph corresponding to square cell $C_i$.
Let $S = \{C_1, \dots, C_h\}$ be the sequence of square cells based on Hilbert ordering for an IOP $P$, and $G(S) = \{G_1, \dots, G_h\}$ is the  corresponding sequence of dual graphs.
Let $(G_k, s_k, t_k)$ be the Hamiltonian path problems for $k \in [h] \coloneqq \{1, \dots, h\}$ and $s_k, t_k$ are chosen as in Section \ref{ssec:quadtree}.
Rectilinear distance between any two ordered neighbors $G_i, G_{i+1}$ in sequence $G(S)$ is defined as $d(G_i,G_{i + 1}) = |t_i^x - s_{i+1}^x| + |t_i^y - s_{i+1}^y|$.

We join square cells as follows.
{\bfseries{\textit{First}}}, let $G(S_i)= \{G_l, \dots, G_{l+k}\}$ where $d(G_{l-1},G_{l}) > 1$ or $l=1$, and $d(G_{l+k},G_{l+k+1})$ $>1$ or $l+k=h$ and $d(G_{i},G_{i+1})= 1 ~\forall i \in \{l, \dots, l+k-1\}$.
Find subsequence set $\{G(S_i)\}$ from $G(S)$.
{\bfseries{\textit{Second}}}, let $\tilde{C}_j$ be union of all the cells in subsequence $\tilde{S}_j$ of $S_i$.
Consider $\tilde{C}_j$ as an IOP.
Partition $S_i$ into a set of subsequences $\{\tilde{S}_j\}$ such that total area of $\tilde{C}_j$ is $\leq \Delta$, where $\Delta$ is maximum area allowed in any $\tilde{S}_j$ (see Figures \ref{fig:quadtreejoina}, \ref{fig:quadtreejoinb}).

\begin{lem}\label{lem:joinedgraphhamiltonianpathexist}
  Let $\{G_1, \dots, G_h\}$ be a sequence of dual graphs corresponding to sequence $\{C_1, \dots,$ $C_h\}$ whose union is $\tilde{C}_j$, $(G_k, s_k, t_k)$, $(\tilde{G}_i, s, t)$ be Hamiltonian path problems where $k \in [h]$ and $s_k, t_k$ $\forall k$ is chosen as described in Section \ref{ssec:quadtree}, and $\tilde{G}_i$ is the dual graph of the pixel graph of $\tilde{C}_i$.
  If $s= s_1$, $t = t_h$ then $(\tilde{G}_i, s, t)$ has a solution.
\end{lem}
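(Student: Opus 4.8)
The plan is to exhibit an explicit Hamiltonian $s_1$-$t_h$ path in $\tilde{G}_i$ by concatenating the Hamiltonian paths guaranteed inside the individual cells, using the distance-one hypothesis to supply the edges that splice consecutive cells together. The facts I would rely on are: (i) the square leaf cells $C_1,\dots,C_h$ partition $\tilde{C}_i$ as sets of pixels, so the vertex sets $V(G_1),\dots,V(G_h)$ are pairwise disjoint and $\bigcup_k V(G_k) = V(\tilde{G}_i)$; (ii) pixel adjacency is a local condition, so each $G_k$ is exactly the subgraph of $\tilde{G}_i$ induced on $V(G_k)$, and in particular every edge of any $G_k$ is an edge of $\tilde{G}_i$; and (iii) $d(G_k,G_{k+1})=1$ means the pixels $t_k$ and $s_{k+1}$ differ by $1$ in exactly one coordinate, hence are distinct and adjacent vertices of $\tilde{G}_i$.

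First, for each $k \in [h]$ I would invoke Lemma \ref{lem:hamiltonianpathexist} on $(G_k, s_k, t_k)$ — its hypotheses hold because $C_k$ is a square cell and $s_k,t_k$ are the entry/exit vertices chosen as in Section \ref{ssec:quadtree} — to obtain a Hamiltonian $s_k$-$t_k$ path $P_k$ inside $G_k$. Second, I would form the walk $P$ that traverses $P_1$, then the edge $t_1 s_2$, then $P_2$, then the edge $t_2 s_3$, and so on, finishing with $P_h$; by facts (ii) and (iii) every edge used in $P$ lies in $\tilde{G}_i$. Third, I would check that $P$ is in fact a path: the subpaths $P_k$ are vertex-disjoint by fact (i), so no vertex can recur, and no splice edge can cause repetition since its two endpoints lie in distinct cells. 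As $P$ runs from $s_1 = s$ to $t_h = t$ and covers $\bigcup_k V(G_k) = V(\tilde{G}_i)$, it solves $(\tilde{G}_i, s, t)$. The same argument can be phrased as an induction on $h$: the base case $h=1$ is Lemma \ref{lem:hamiltonianpathexist}, and the inductive step appends $P_h$ to a Hamiltonian $s_1$-$t_{h-1}$ path of the dual graph of $C_1 \cup \cdots \cup C_{h-1}$ via the edge $t_{h-1} s_h$.

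I do not see a substantial obstacle here; the step that deserves care rather than a bare assertion is the justification of the splice edges. Concretely: $t_k$ is a pixel of $C_k \subseteq \tilde{C}_i$ and $s_{k+1}$ a pixel of $C_{k+1} \subseteq \tilde{C}_i$, so both are vertices of $\tilde{G}_i$; from $|t_k^x - s_{k+1}^x| + |t_k^y - s_{k+1}^y| = 1$ these two unit pixels share an edge and are therefore adjacent in $\tilde{G}_i$ (the splice is a single unit segment that passes through no other pixel), and $1 \neq 0$ gives $t_k \neq s_{k+1}$, which in any case already follows from $V(G_k) \cap V(G_{k+1}) = \emptyset$. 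No property of $\tilde{C}_i$ beyond being the disjoint union of the $C_k$ is needed: its being an IOP is used only to make ``dual graph of its pixel graph'' meaningful, and the area cap $\Delta$ plays no role in existence (it bounds only the size of the MIP instance solved later in the framework).
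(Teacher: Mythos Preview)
Your proposal is correct and follows exactly the paper's approach: invoke Lemma~\ref{lem:hamiltonianpathexist} to obtain a Hamiltonian $s_k$-$t_k$ path in each $G_k$, then splice consecutive paths using the edge $\{t_k,s_{k+1}\}$ in $\tilde G_i$. Your write-up is considerably more careful than the paper's three-line proof---you make explicit the partition of $V(\tilde G_i)$, the fact that the splice edges exist because $d(G_k,G_{k+1})=1$, and that the concatenation is a simple path---but the underlying argument is identical.
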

\begin{proof}
  By Lemma \ref{lem:hamiltonianpathexist}, a Hamiltonian path exists for all $(G_i, s_i, t_i)$.
  The Hamiltonian path from $G_i$ is joined with the one from $G_{i+1}$ by an edge $\{t_i, s_{i+1}\}$ in $\tG_i$.
  The result follows once we set $s = s_1$, $t = t_m$.
\end{proof}

\begin{figure}[hb!]
  \medskip
  \centering
  \begin{subfigure}[t]{2.12in}
    \includegraphics[scale=0.42]{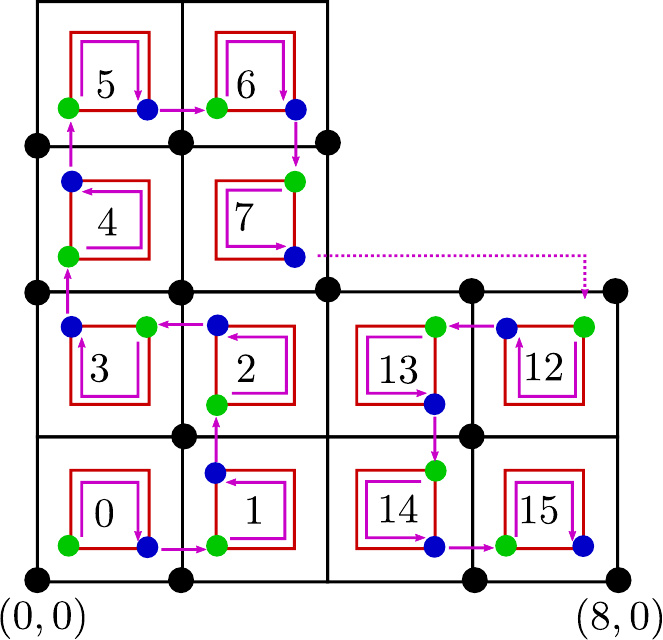}
    \caption{\label{fig:quadtreejoina}}
  \end{subfigure}
  \begin{subfigure}[t]{2.12in}
    \hspace*{0.07in}
    \includegraphics[scale=0.42]{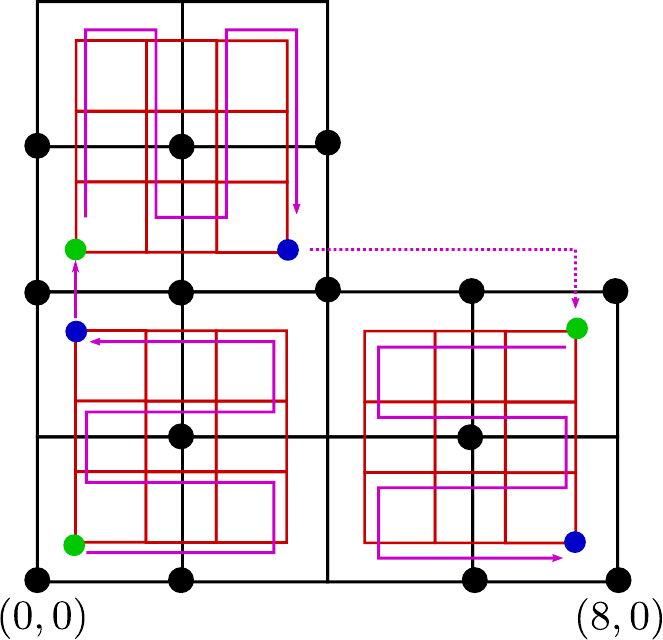}
    \caption{\label{fig:quadtreejoinb}}
  \end{subfigure}
  \begin{subfigure}[t]{2.12in}
    \hspace*{0.17in}
    \includegraphics[scale=0.42]{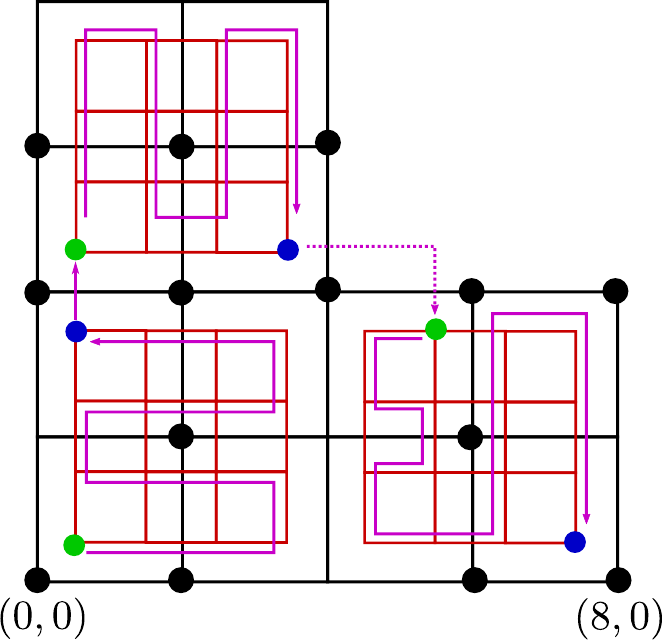}
    \caption{\label{fig:quadtreejoinc}}
  \end{subfigure}
  \caption{\label{fig:quadtreejoin}
    Path (pink) in Figure \ref{fig:quadtreejoina} covers the IOP.
    Idle movement of path is shown in pink dots.
    Further, Hilbert ordering of the cells is $[0, 1, 2, 3, 4, 5, 6, 7, 12, 13, 14, 15]$.
    Figure \ref{fig:quadtreejoinb} shows $3$ joined square cells based on Hilbert ordering into $[0, 1, 2, 3], [4, 5, 6, 7], [12, 13, 14, 15]$ with $\Delta = 16$ (Section \ref{sssec:joincell}).
    Figure \ref{fig:quadtreejoinc} shows updated entry vertex of the last joined square cell to reduce idle movement in the path (Section \ref{sssec:enterexitvertexupdate}).
  }  	
\end{figure}

\subsubsection{Update entry and exit vertices}\label{sssec:enterexitvertexupdate}

If vertices $v$ and $v'$ have both even or both odd coordinates, then we say that $v, v'$ have the same parity. 

\begin{lem}\label{lem:updatedjoinedgraphhamiltonianpathexist}
  Consider the same set up as in Lemma \ref{lem:joinedgraphhamiltonianpathexist}.
  If $s, t$ have same parity as $s_1, t_m$, respectively, and $s \in G_1$, $t \in G_m$, then $(\tilde{G}_i, s, t)$ has a solution.
\end{lem}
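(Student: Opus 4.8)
The plan is to reduce this to Lemma~\ref{lem:joinedgraphhamiltonianpathexist} by the following observation: changing the entry vertex $s_1$ of the first cell $G_1$ to another vertex $s$ of $G_1$ of the \emph{same parity} does not destroy the existence of a Hamiltonian $s$-$t$ path on the dual graph $G_1$ (with $t$ still a corner vertex of $G_1$), and symmetrically for the exit vertex $t$ in $G_m$. Once I know $(G_1, s, t_1)$ and $(G_m, s_m, t)$ each have a Hamiltonian path — with the internal junctions $\{t_i,s_{i+1}\}$ for $1 < i < m$ unchanged — I can stitch the $m$ pieces together exactly as in the proof of Lemma~\ref{lem:joinedgraphhamiltonianpathexist}: the edge $\{t_i, s_{i+1}\}$ lies in $\tilde G_i$ for each $i$, and concatenating the $m$ Hamiltonian subpaths yields a Hamiltonian $s$-$t$ path on $\tilde G_i$.

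So the real content is the single-cell claim: \emph{if $G_1$ is the dual graph of a square cell, $t_1$ is a corner vertex, $s_1$ is a corner vertex joined to $t_1$ by a straight path, and $s$ is any vertex of $G_1$ with the same parity as $s_1$, then $(G_1, s, t_1)$ has a Hamiltonian path.} First I would record that $G_1$ is a rectangular grid graph $R(k,k)$ with $k$ even (a square cell of integer side $\geq 2$ has an odd number of pixels per side... — more carefully, a cell of side $2^a$ has dual graph $R(2^a,2^a)$, so $k=2^a$ is even), hence $G_1$ is balanced bipartite, $|V^0| = |V^1|$. Since $s_1, t_1$ are joined by a straight path of even length they have different colors; as $s$ has the same parity as $s_1$ it has the same color as $s_1$, hence still a different color from $t_1$. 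Now I run through the forbidden list of Section~\ref{list:hamipathexist}: condition~1 fails because $s$ and $t_1$ have different colors; conditions 2, 3, 4 all require $n \leq 3$, but here $n = k \geq 4$... — if $k=2$ I would handle it directly, noting $s$ having the same color as the corner $s_1$ on a $2\times 2$ grid forces $s$ to be the diagonally opposite corner, and $(G_1,s,t_1)$ with $s,t_1$ adjacent corners is a boundary edge so condition~3 does not apply. Thus none of the forbidden conditions holds and a Hamiltonian path exists. The same argument applied to $G_m$ with the exit vertex gives a Hamiltonian $s_m$-$t$ path.

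The main obstacle — and the step I would be most careful about — is making sure the parity hypothesis genuinely rules out the only dangerous forbidden condition (condition~1, the color-balance obstruction) in \emph{every} admissible cell size, including the small cases $k \in \{1,2\}$ where the $n=2$ and $n=3$ clauses of the forbidden list have teeth; for $k=1$ the cell is a single vertex and the statement is vacuous or trivial, and for $k=2$ the geometry of "same parity" pins $s$ down completely, so these need a one-line check each rather than the generic argument. A secondary point to get right is the bookkeeping in the concatenation: I must verify that replacing $s_1 \rightsquigarrow s$ and $t_m \rightsquigarrow t$ leaves the connecting edges $\{t_i, s_{i+1}\}$, $1 \le i \le m-1$, intact and still present in $\tilde G_i$, which is immediate since those vertices and the induced subgraphs $G_2,\dots,G_{m-1}$ are untouched, but it is worth stating explicitly so the reduction to Lemma~\ref{lem:joinedgraphhamiltonianpathexist} is clean.
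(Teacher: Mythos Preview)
Your proposal is correct and follows essentially the same route as the paper's proof: use the parity hypothesis to conclude that $\{s,t_1\}$ and $\{s_m,t\}$ are opposite-color pairs, invoke the Itai--Papadimitriou--Szwarcfiter conditions of Section~\ref{list:hamipathexist} to get Hamiltonian paths in $G_1$ and $G_m$ with the new endpoints, and then concatenate via the unchanged connecting edges $\{t_i,s_{i+1}\}$. The paper's argument is terser (it simply asserts ``no condition in Section~\ref{list:hamipathexist} is satisfied'' without the case analysis for small $k$), so your extra care with $k\in\{1,2\}$ is a welcome elaboration rather than a departure.
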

\begin{proof}
  By Lemma \ref{lem:joinedgraphhamiltonianpathexist}, $(\tilde{G}_i, s_1, t_m)$ has a solution.
  We know the pairs $\{s, t_1\}$ and $\{s_m, t\}$ have different parities.
  Then $(G_1, s, t_1)$ and $(G_m, s_m, t)$ have solutions since no condition in Section \ref{list:hamipathexist} is satisfied.
  Hence $(\tilde{G}_i, s, t)$ has a solution.  
\end{proof}

Let $\tilde{G}(S) = \{\tilde{G}_1, \dots, \tilde{G}_h\}$ be a sequence of dual graphs after joining square cells and let $(\tilde{G}_i, s_i, t_i)$ be the Hamiltonian path problems for each $\tilde{C}_i$.
Let subsequence $\tilde{S}_i$ have cells whose union is $\tilde{C}_j$ and $G(\tilde{S}_i) = \{G_1, \dots, G_k\}$.
We can reduce idle movement of the extruder by defining new Hamiltonian path problems $(\tilde{G}_i, s_{i}', t_{i}')$ for each $\tilde{C}_i$ such that $d(\tilde{G}_i, \tilde{G}_{i+1})$ is smallest compared to all possible choices of entry and exit vertices, where pairs $\{s_{i}, s_{i}'\}$ and $\{t_{i}, t_{i}'\}$ have same parity $\forall i$.
Further, we can have a case where $(\tilde{G}_i, s_{i}', t_{i}')$ has no solution if any condition in Section \ref{list:hamipathexist} is satisfied.
But based on Lemma \ref{lem:updatedjoinedgraphhamiltonianpathexist}, we can further add the restriction such that $s_{i}' \in G_1$ and $t_{i}' \in G_k$ to guarantee existence of a Hamiltonian path.
Note that the total idle movement in Figure \ref{fig:quadtreejoinb} is reduced in Figure \ref{fig:quadtreejoinc}. 

\textit{Correctness:} Based on our approach, Lemmas \ref{lem:hamiltonianpathexist}, \ref{lem:joinedgraphhamiltonianpathexist}, and \ref{lem:updatedjoinedgraphhamiltonianpathexist} guarantee existence of a Hamiltonian path in the dual graph of any cell even after implementing steps in Sections \ref{sssec:joincell} and \ref{sssec:enterexitvertexupdate}.   

\textit{Complexity:} Let $T$ be the maximum time to solve any problem $(\tilde{G}_i, s', t')$ and $N$ the total number of vertices in the dual graph of pixel graph of IOP.
Then time complexity is $O(NT) = O(N)$ when $T$ is small and fixed.
In practice, we observed $T$ in tens of seconds.

\section{Mixed Integer Programming Model and Relaxation}\label{sec:ipmodel}
Let $\tG$ be the dual graph for a cell in the decomposition of the region (similar to the illustration in Figure \ref{fig:inteortho3dprintprob}).
For a given pair of nodes $s,t$, we want to find a Hamiltonian $s$-$t$ path that minimizes a combination of total edge weights and turn costs.
Based on the Miller-Tucker-Zemlin (MTZ) formulation for TSP \cite{MiTuZe1960}, we present a mixed integer program (MIP) that also models turn costs.
With $x_{ij} \in \{0,1\}, i,j \in [n] \coloneqq \{1,\dots,n\}, i \neq j$, we let $x_{ij}=1$ if edge $(i,j)$ is included in the Hamiltonian path.
To avoid subtours, we let $u_i$ be the index of node $i \in [n]$ in the path (e.g., $u_i=4 \Rightarrow$ node $i$ is the $4$th node), and add the subtour elimination constraints in Equation (\ref{eq:IPsubtr}).
We let $c_{ij} \geq 0$ be the weight of edge $(i,j)$ that is part of the data, and use $c_i \geq 0$ to model the turn cost at node $i$.
To capture $c_i$, we set the binary parameter $A_{ijk} = 0$ if edges $(i,j)$ and $(j,k)$ form $180^\circ$ angle at $j$, and $A_{ijk}=1$ when this angle is $90^\circ$.
We then add the constraints in Equation (\ref{eq:IPturncost}).
The relative importance of edge and turn costs is captured by the scaling parameter $\alpha \in [0, 1]$, such that $\alpha=1$ gives the minimum edge cost problem and $\alpha=0$ gives the minimum turn cost problem.

\begin{align}
  \min_{x_{ij}}  \hspace*{0.1in} & \alpha \sum\nolimits_{(i,j) \in \tG} \, c_{ij} x_{ij} ~+~ (1-\alpha) \sum\nolimits_{i \in [n]} c_i \hspace*{0.1in}  \label{eq:IPobj} \\
  \begin{split}
    \mbox{s.t.} \hspace*{0.1in} & c_j ~\geq~A_{ijk}(x_{ij}+x_{jk}-1) \, \forall (i,j), (j,k) \in \tG;  \\
                & c_j ~\geq~0~ \forall j \in [n];~~ c_s = c_t = 0;
  \end{split} \label{eq:IPturncost}
  \\
  \begin{split}
    & \sum\nolimits_{j} x_{ij} = \sum\nolimits_{j} x_{ji} \geq 1, i \in [n], i \neq s,t; \\
    & \sum\nolimits_{j} x_{sj} = 1,\sum\nolimits_{j} x_{js} = 0;
      \sum\nolimits_{j} x_{tj} = 0,\sum\nolimits_{j} x_{jt} = 1;
  \end{split} \label{eq:IPnodeblnce}
  \\
  & u_i - u_j + 1 \leq n(1-x_{ij})~\forall (i,j) \in \tG, i \notin{s,t};~~ u_s = 1; \label{eq:IPsubtr} \\
  & x_{ij} \in \{0,1\} ~\forall (i,j)  \in \tG. \nonumber
\end{align}

To solve relatively large instances of this IP efficiently, we remove the subtour constraints in Equation (\ref{eq:IPsubtr}) and develop the following heuristic.
\begin{enumerate}
  \item\label{itm:joincycles}
    {\it {\bfseries Join Cycles:}}
    \begin{enumerate}
     \item \label{step:iprelax} Solve the relaxed MIP without constraints (\ref{eq:IPsubtr})) for $(\tilde{G}_i, s, t)$.
       Generally, We obtain a set of cycles and an $s$-$t$ path covering all the vertices in $\tG_i$.
       
     \item Create a new undirected graph $G_i^c$ where each vertex represents a cycle obtained in Step (\ref{step:iprelax}) above, and there is an edge between vertices if the corresponding cycles can be joined into a bigger cycle by performing a $2$-opt exchange \cite{ApBiChCo2007} at a unit square between them.
       If there are multiple such unit squares, we pick one that adds the minimum weight to the joined cycle.
       Solve a {\it minimum spanning forest} (MSF) problem on $G_i^c$, and join cycles based on the edges in each tree in the MSF.
       Repeat this step until the solution does not change.
    \end{enumerate}

  \item \label{itm:joincyclesandpath}
    {\it {\bfseries Join Cycles and $s$-$t$ Path:}}
    Examine the cycles created by Step (\ref{itm:joincycles}) in the increasing order of numbers of squares available for $2$-opt exchange with the $s$-$t$ path.
    Update the $s$-$t$ path by merging cycles in this order, choosing the minimum cost square for $2$-opt exchanges in each step.
\end{enumerate}

Note that this heuristic is not guaranteed to identify a Hamiltonian $s$-$t$ path for every $(\tG_i, s, t)$.
But in over 10,000 such instances over all layers of the Buddha and the Bunny models in our computations, it failed to identify a Hamiltonian path in only \emph{one} instance.

\textit{Complexity:} The heuristic runs in $O((n^c_i)^2 T_{\mbox{\scriptsize IP}}) = O(n^2 T_{\mbox{\scriptsize IP}})$ time, where $n^c_i = O(n)$ is the number of nodes in $G_i^c$ and $T_{\mbox{\scriptsize IP}}$ is the time for solving the relaxed IP instance.
  Even though we relax the problem by removing constraints (\ref{eq:IPsubtr}), it is still an MIP and hence $T_{\mbox{\scriptsize IP}}$ could vary exponentially with problem size and data in the worst case \cite{S86}.
  While most coefficients in the constraint matrix of the relaxed IP are $0$ or $1$, the $A_{ijk}$ terms in constraint (\ref{eq:IPturncost}) could take any nonnegative value depending on the problem instance.
But in practice, $T_{\mbox{\scriptsize IP}}$ was usually in tens of seconds and the MSF computation ran in $O(n^c_i \log n^c_i)$ time for $n^c_i \ll n$.
  In a more general SFCDecomp framework, one could consider using an approximation algorithm, e.g., a modification of the Christofides--Serdyukov algorithm \cite{Ch1976,Se1978,vBeSl2020}, to identify a Hamiltonian cycle for each subdomain rather than solve the MIP to optimality.
  Another option may be to consider standard approaches such as zigzag patterns to design the toolpath within the subdomain.
  In such cases, $T_{\mbox{\scriptsize IP}}$ will vary polynomially with problem size.

\subsection{General Geometry}\label{ssec:gengeometryextension}

We describe how to extend the SFCDecomp framework to handle general geometries that are not integral orthogonal polygons.
Let $\mathbb{V}$ be the set of vertices of print edges in the tool path (Section \ref{sec:SFCDecomp}) closest to the boundary of general polygon $\mathbb{P}$ containing the IOP $P$ (i.e., $P \subset \mathbb{P}$ in the nontrivial case).
Then we can project vertices in $\mathbb{V}$ to the boundary of $\mathbb{P}$.
We project each vertex in $\mathbb{V}$ \emph{at most once} (see Figure \ref{fig:gengeomtryext}).

\begin{figure}[ht!] 
  \centering
  \includegraphics[width=4.5in]{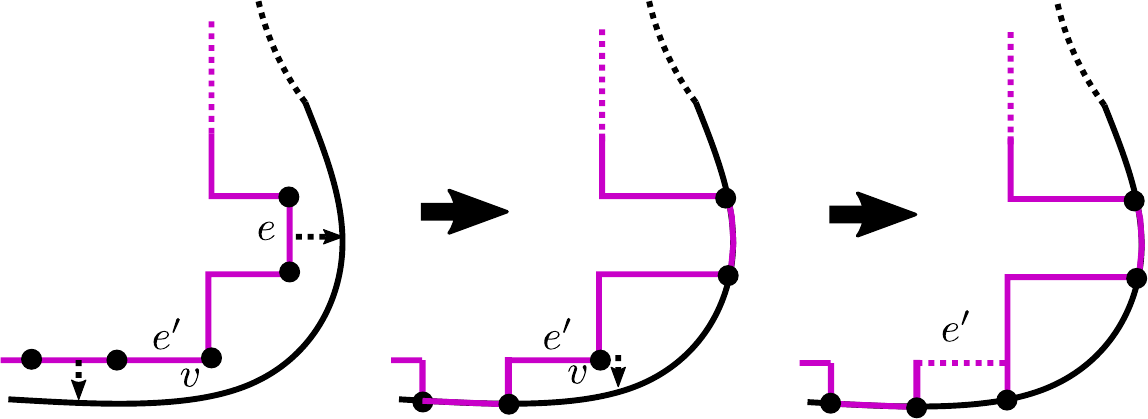}
  \caption{\label{fig:gengeomtryext}
    We project edge $e$ with vertices in $\mathbb{V}$ orthogonal to $e$ onto the boundary of $\mathbb{P}$.
    Project all such edges.
    Some vertices in $\mathbb{V}$ may not get projected.
    Suppose $v \in \mathbb{V}$  in print edge $e'$ is not projected (middle).
    Then convert $e'$ to idle movement in the tool path, and project $v$ to the boundary of $\mathbb{P}$ orthogonal to $e'$ (right).
    Here, edge $e'$ could be idle to start with.
  }  	
\end{figure}

Results from an implementation of the complete pipeline with the MIP model (Section \ref{sec:ipmodel}) for a single layer is shown in Figure \ref{fig:singlebunnylayer}.
We can handle disconnected polygons, or ones with holes.
Figure \ref{fig:introimage} (in Page \pageref{fig:introimage}) and Figure \ref{fig:bunnybuddha} show sample layers from implementations on the Buddha and the Bunny models. 

\begin{figure}[ht!] 
  \begin{subfigure}[t]{2.1in}
    \includegraphics[scale=0.45]{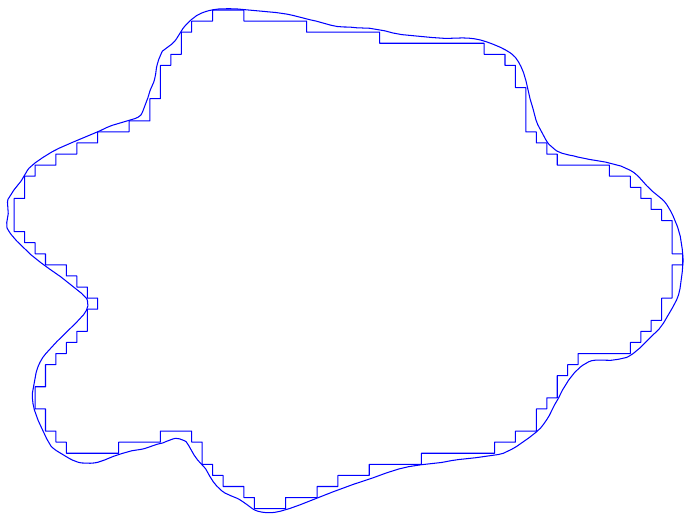}
    \caption{\label{fig:singlebunnylayera}}
  \end{subfigure}
  \begin{subfigure}[t]{2.1in}
    \includegraphics[scale=0.45]{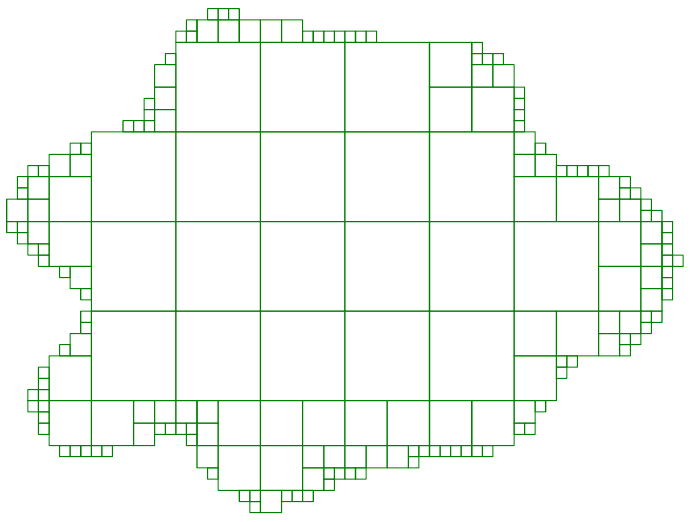}
    \caption{\label{fig:singlebunnylayerb}}
  \end{subfigure}
  \begin{subfigure}[t]{2.1in}
    \includegraphics[scale=0.45]{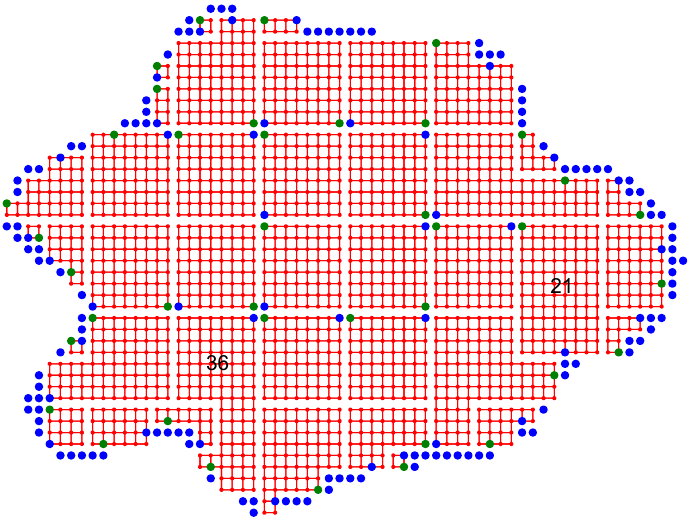}
    \caption{\label{fig:singlebunnylayerc}}
  \end{subfigure}
  \\
  \begin{subfigure}[t]{2.1in}
    \includegraphics[scale=0.45]{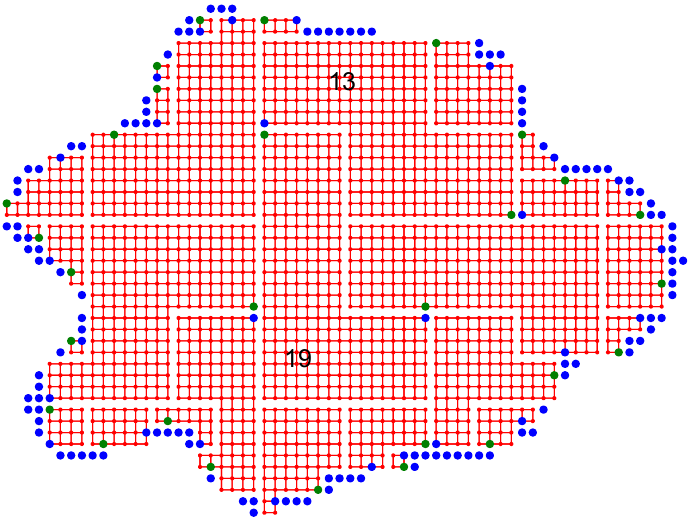}
    \caption{\label{fig:singlebunnylayerd}}
  \end{subfigure}
  \begin{subfigure}[t]{2.1in}
    \includegraphics[scale=0.45]{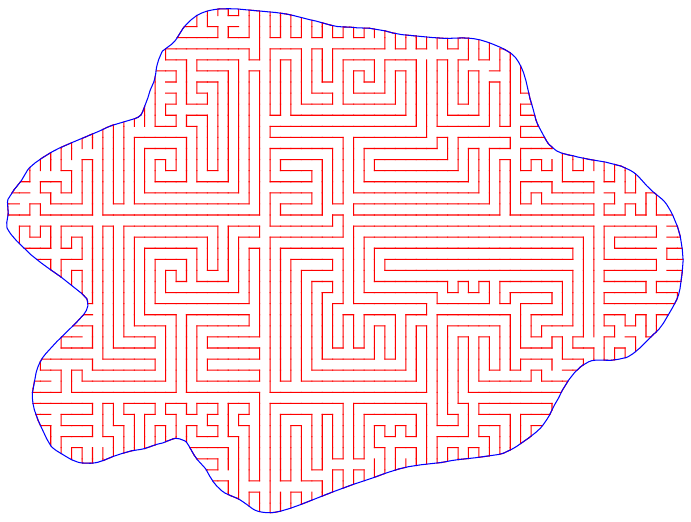}
    \caption{\label{fig:singlebunnylayere}}
  \end{subfigure}
  \begin{subfigure}[t]{2.1in}
    \includegraphics[scale=0.45]{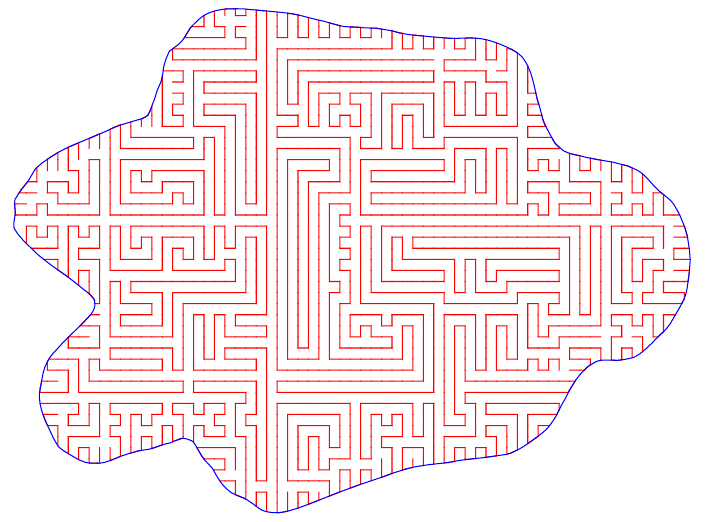}
    \caption{\label{fig:singlebunnylayerf}}
  \end{subfigure}	
  \caption{\label{fig:singlebunnylayer}
    Figure \ref{fig:singlebunnylayera} shows polygon $\mathbb{P}$ and its largest IOP $P$ in blue.
    Figure \ref{fig:singlebunnylayerb} shows decomposition of $P$ with $\delta = 64$.
    Figures \ref{fig:singlebunnylayerc} and \ref{fig:singlebunnylayerd} show dual graphs (red) after implementing Sections \ref{sssec:joincell}, \ref{sssec:enterexitvertexupdate} with $\Delta=120, 256$ (resp.).
    Enter and exit vertices are shown in green and blue dots (also isolated vertices).
    Print paths (red) with uniform and random edge weights are in Figures \ref{fig:singlebunnylayere} and \ref{fig:singlebunnylayerf} (both combined with turn costs).
    } 
\end{figure}

\section{Maximizing or Minimizing Edge Overlaps} \label{sec:maxminovlp}

As a direct application of the SFCDecomp framework, we consider maximizing or minimizing print edge overlaps across adjacent layers. 
We consider the overlap of two edges from adjacent layers is maximum when they coincide, i.e., one edge is printed entirely over the other.
The edge overlap is minimum when the two edges intersect at most in a single point.
See Figure \ref{fig:mechtest}c for an illustration.
The premise we want to investigate is that the extent of edge overlap across adjacent layers affects the mechanical strength of the printed object.

To maximize (resp.~minimize) overlap, the weight of all edges in the dual graph $\tilde{G}_i$ of Layer $i$ is set to $0.5$ (resp.~$1.5$) instead of $1$ (default) if they overlap with edges in the print path in Layer $(i-1)$.
Figure \ref{fig:maxminoverlapturnplot} shows variations of edge overlap ratios across the initial $148$ layers of the Stanford Bunny. 
We make the following observations.

\begin{enumerate}
  \item \label{itm:maxminoverlapa}
    {\bfseries Figure \ref{fig:maxminoverlapturnplota}:}
    Original (uniform edge weights) and maximum overlap problems have similar overlap costs.
    For same $(\tilde{G}_i, s, t)$ between adjacent layers, both maximum overlap and original problems should return same solution.
    We can observe large changes in edge overlap at a few places in the original problem.
    It is due to significant variation in decompositions between adjacent layers.
    For instance, layers $33, 34, 35$ have significant differences in decomposition as shown in Figures \ref{fig:maxorivariationandminvariationa}, \ref{fig:maxorivariationandminvariationb}, \ref{fig:maxorivariationandminvariationc}.
    Further, the desired general trend of increase in overlap for maximum overlap and original problems, as well as decrease in overlap for minimum overlap, are observed due to increase in cross section area of the layers as we move up in $z$-axis.    

  \item\label{itm:maxminoverlapb}
    {\bf Figure \ref{fig:maxminoverlapturnplotb}:}
    In general there are sharp changes in number of turns across any $3$ adjacent layers in the minimum overlap problem since it forces adjacent layers to have different tours.
    It further forces Layer $i+2$ to have a similar toolpath as Layer $i$.
    This is illustrated for the same Block $5$ in layers $117, 118, 119$ in Figures \ref{fig:maxorivariationandminvariatione}, \ref{fig:maxorivariationandminvariationf}, \ref{fig:maxorivariationandminvariationg}. 
\end{enumerate}

\begin{figure*}[ht!] 
  \centering
  \begin{subfigure}[t]{3.2in}
    \centering
    \includegraphics[scale=0.60]{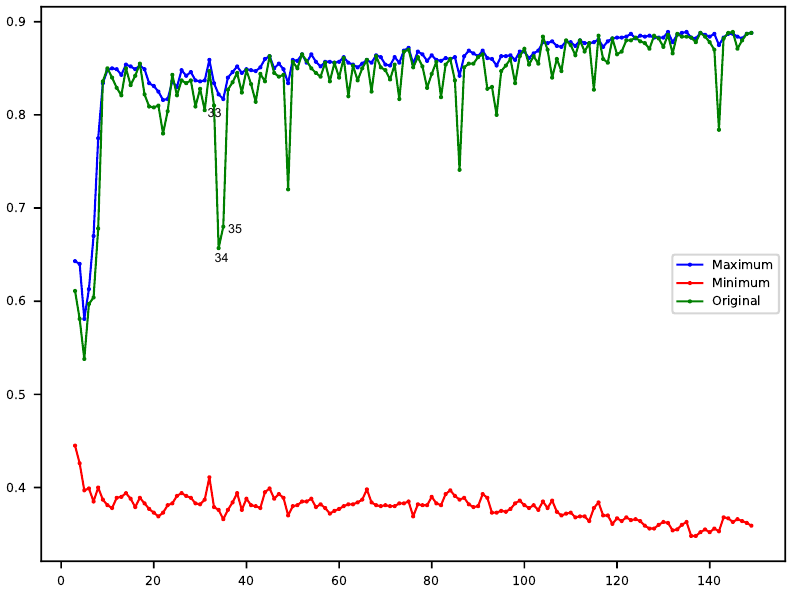}
    \caption{\label{fig:maxminoverlapturnplota}}
  \end{subfigure}
  \begin{subfigure}[t]{3.2in}
    \centering
    \includegraphics[scale=0.60]{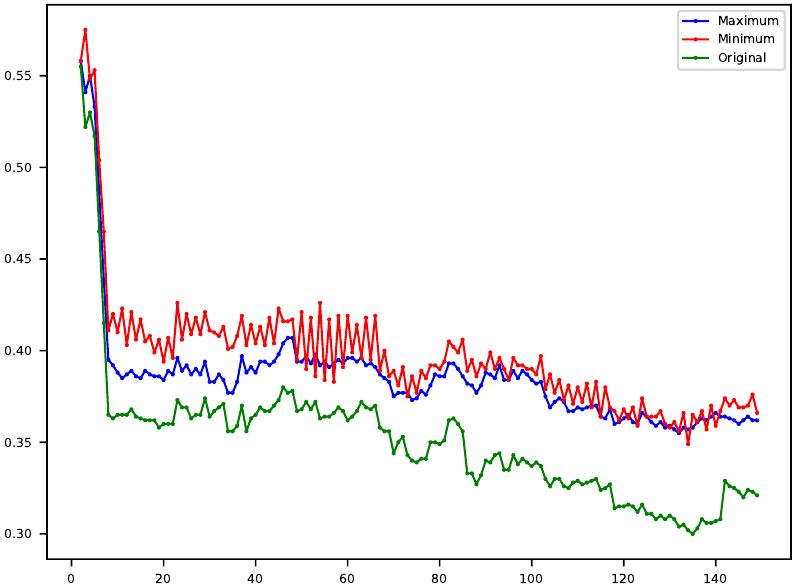}
    \caption{\label{fig:maxminoverlapturnplotb}}
  \end{subfigure}
  \caption{\label{fig:maxminoverlapturnplot}
    Figure \ref{fig:maxminoverlapturnplota} shows the Ratio (Total Overlapped printed edges / Total printed edges) for each Layer and Figure \ref{fig:maxminoverlapturnplotb} shows the Ratio (Total 90 degree Turns / Total 90 and 180 degrees Turns) for each Layer for maximization, minimization, and using uniform edge weights (original) problems on the initial 148 layers of the Bunny model.
    Note that the horizontal axes in both plots list the layers numbered from 1--148.
  }
\end{figure*}

\begin{figure*}[hb!] 
  \centering
  \begin{subfigure}[t]{2.12in}
    \includegraphics[scale=0.45]{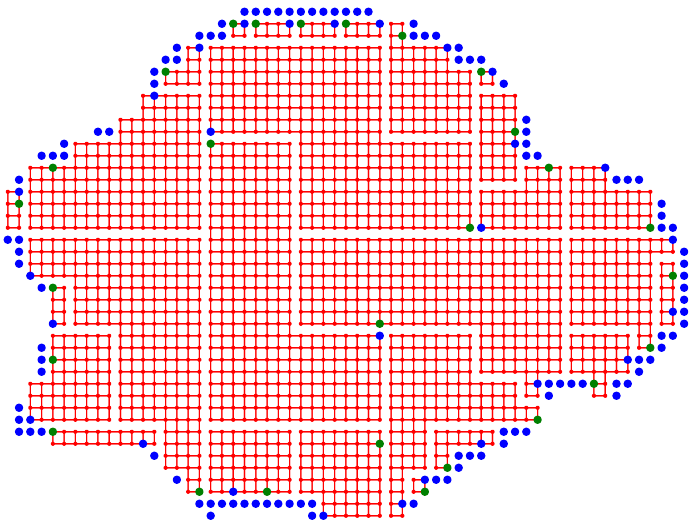}
    \caption{\label{fig:maxorivariationandminvariationa}}
  \end{subfigure}
  \begin{subfigure}[t]{2.12in}
    \centering
    \includegraphics[scale=0.45]{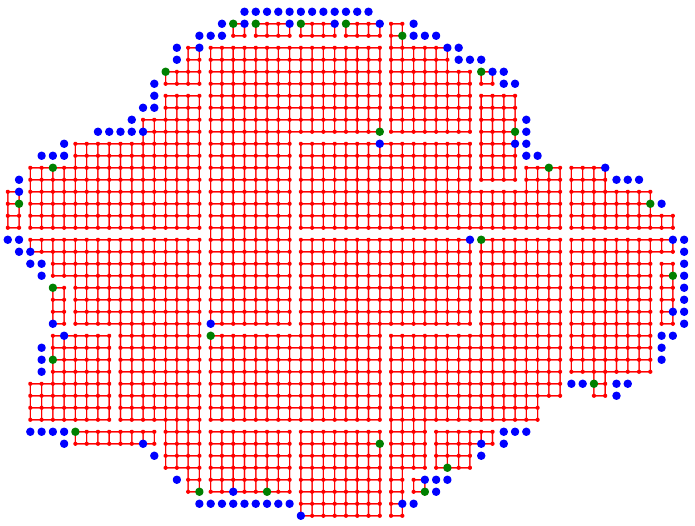}
    \caption{\label{fig:maxorivariationandminvariationb}}
  \end{subfigure}
  \begin{subfigure}[t]{2.12in}
    \includegraphics[scale=0.45]{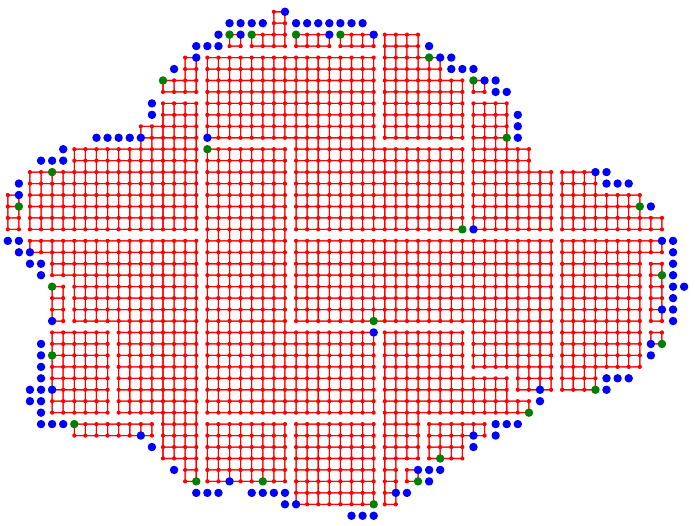}
    \caption{\label{fig:maxorivariationandminvariationc}}
  \end{subfigure}
  \begin{subfigure}[t]{2.12in}
    \centering
    \includegraphics[scale=0.45]{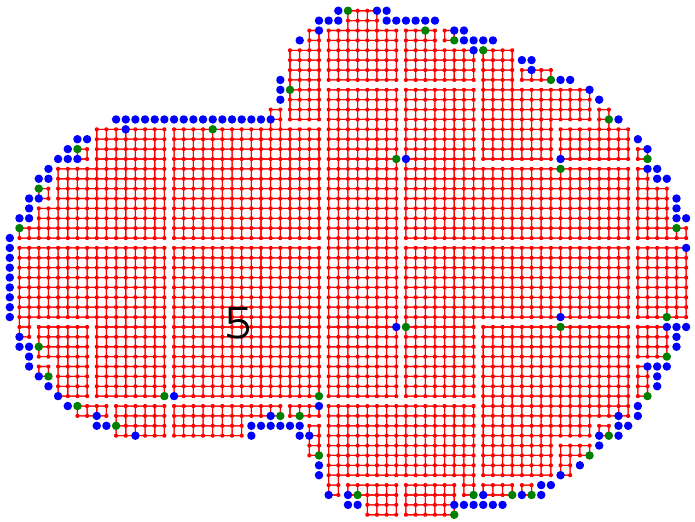}
    \caption{\label{fig:maxorivariationandminvariationd}}
  \end{subfigure}
  \begin{subfigure}[t]{1.4in}
    \centering
    \includegraphics[scale=0.65]{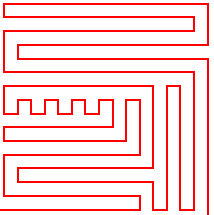}
    \caption{\label{fig:maxorivariationandminvariatione}}
  \end{subfigure}
  \begin{subfigure}[t]{1.4in}
    \quad
    \includegraphics[scale=0.65]{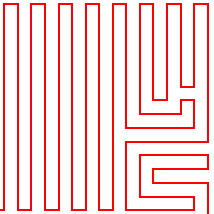}
    \caption{\label{fig:maxorivariationandminvariationf}}
  \end{subfigure}
  \begin{subfigure}[t]{1.4in}
    \quad
    \includegraphics[scale=0.65]{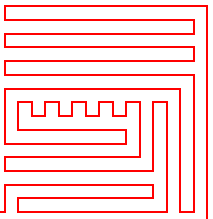}
    \caption{\label{fig:maxorivariationandminvariationg}}
  \end{subfigure}
  \caption{\label{fig:maxorivariationandminvariation}
    Figures \ref{fig:maxorivariationandminvariationa}, \ref{fig:maxorivariationandminvariationb}, \ref{fig:maxorivariationandminvariationc} show decompositions for layers 33, 34, and 35 of the Bunny, respectively.
    Figure \ref{fig:maxorivariationandminvariationd} shows dual graph of layer $117$ with Block $5$ (present in layers $117, 118, 119$).
    Figures \ref{fig:maxorivariationandminvariatione}, \ref{fig:maxorivariationandminvariationf}, \ref{fig:maxorivariationandminvariationg} show tool path in those layers with minimum overlap on Block $5$.
  }
\end{figure*}

\begin{figure*}[ht!] 
  \centering
  \includegraphics[width=3.2in]{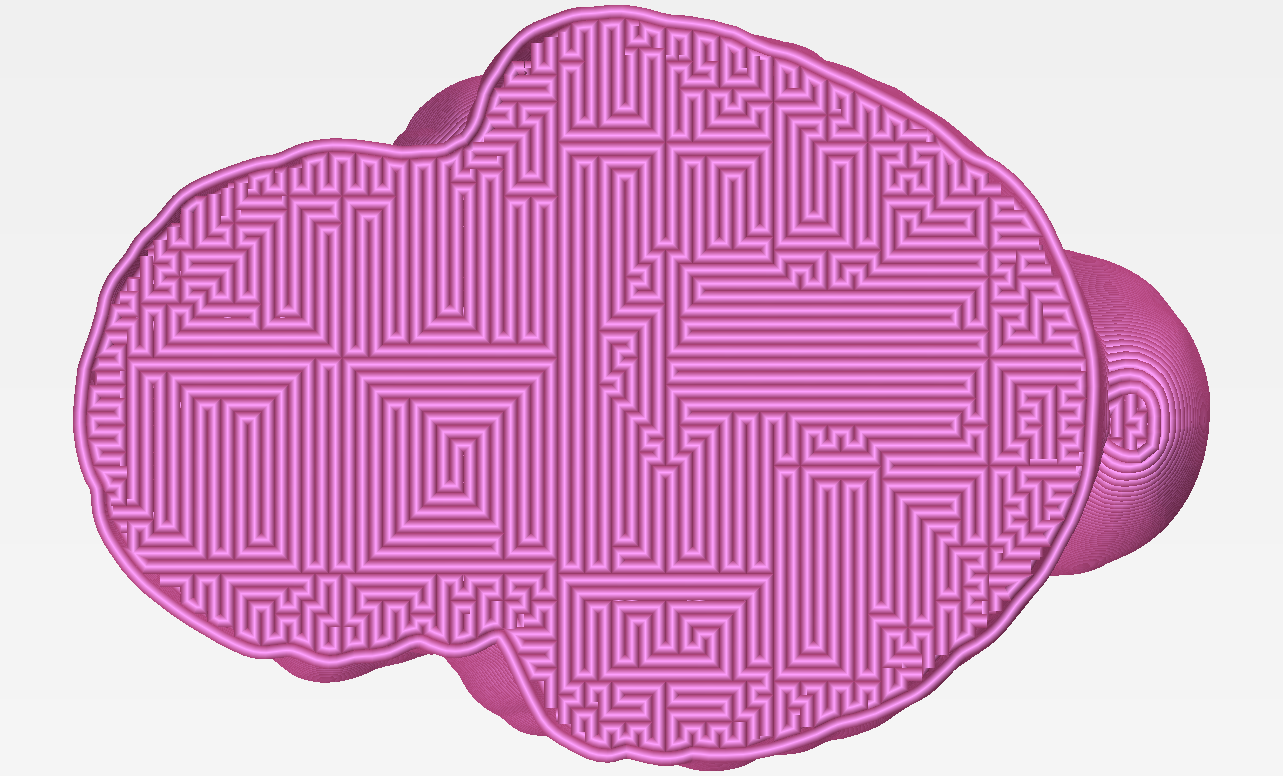}
  \includegraphics[width=3.2in]{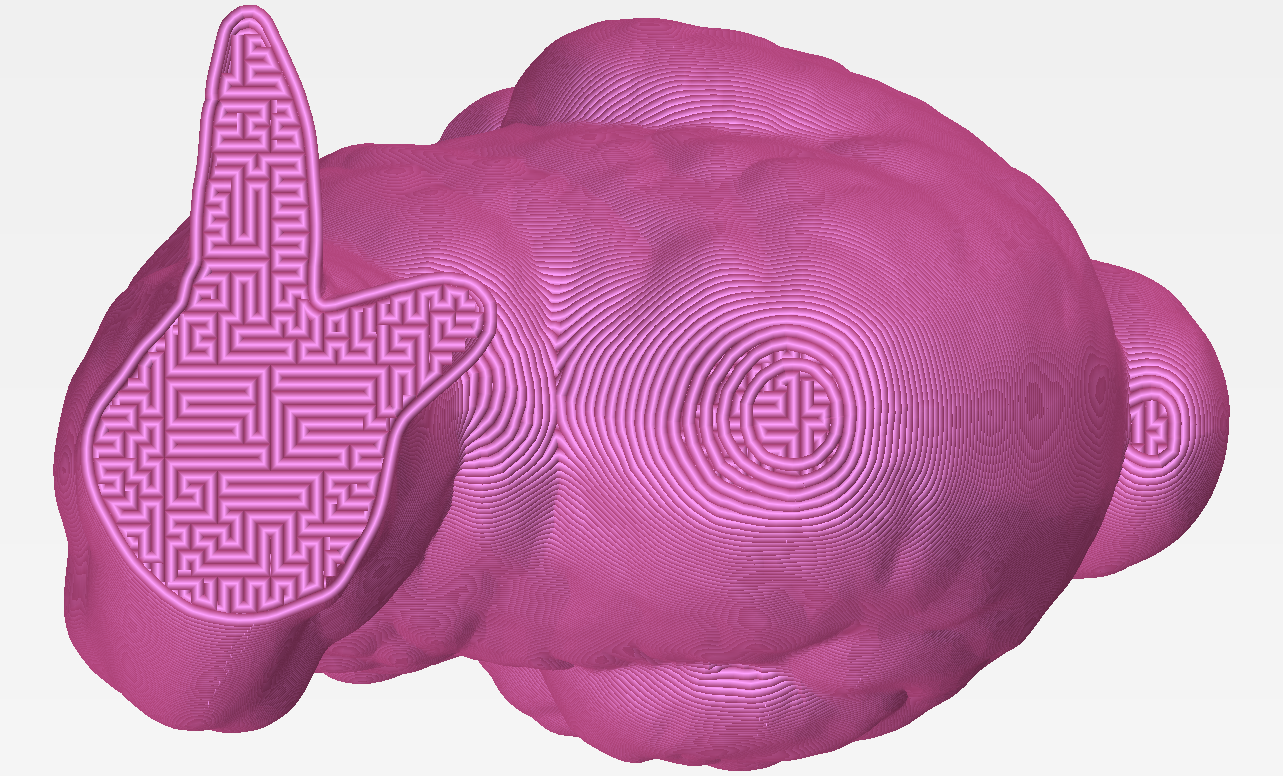}\\
  \includegraphics[width=3.2in]{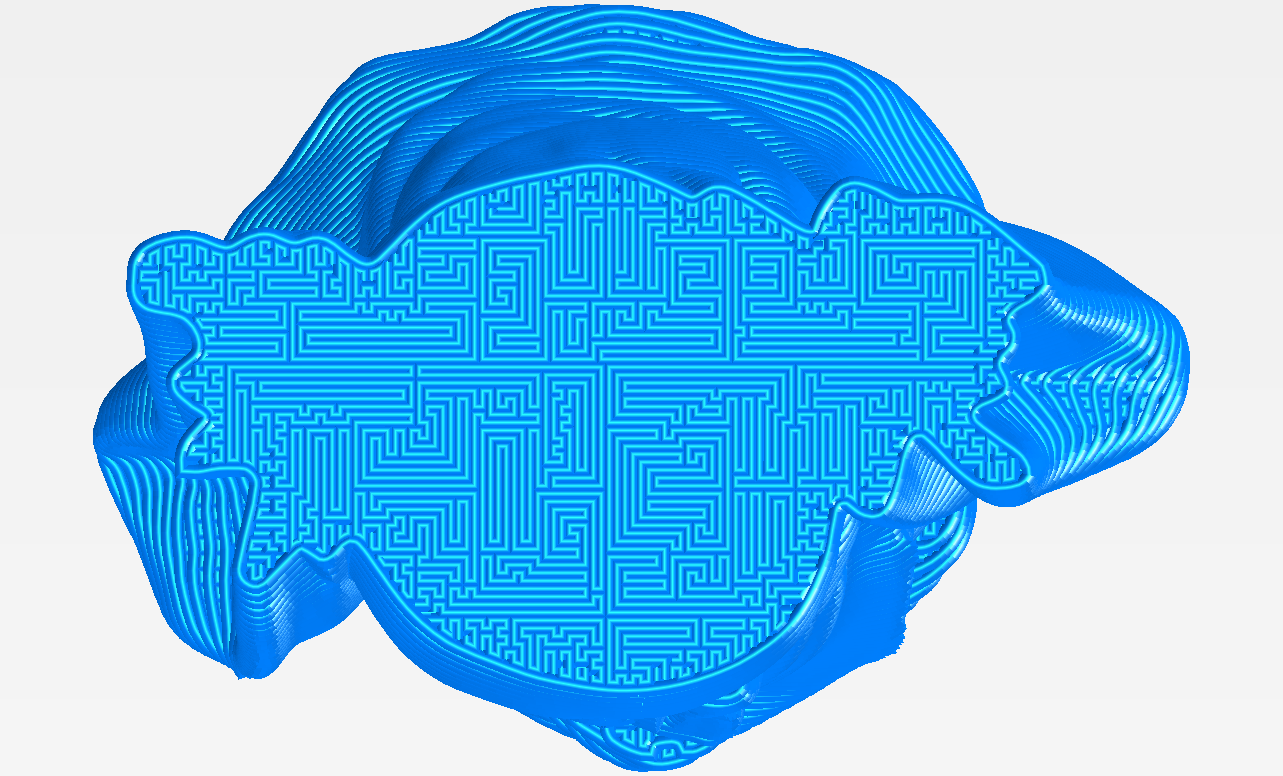}
  \includegraphics[width=3.2in]{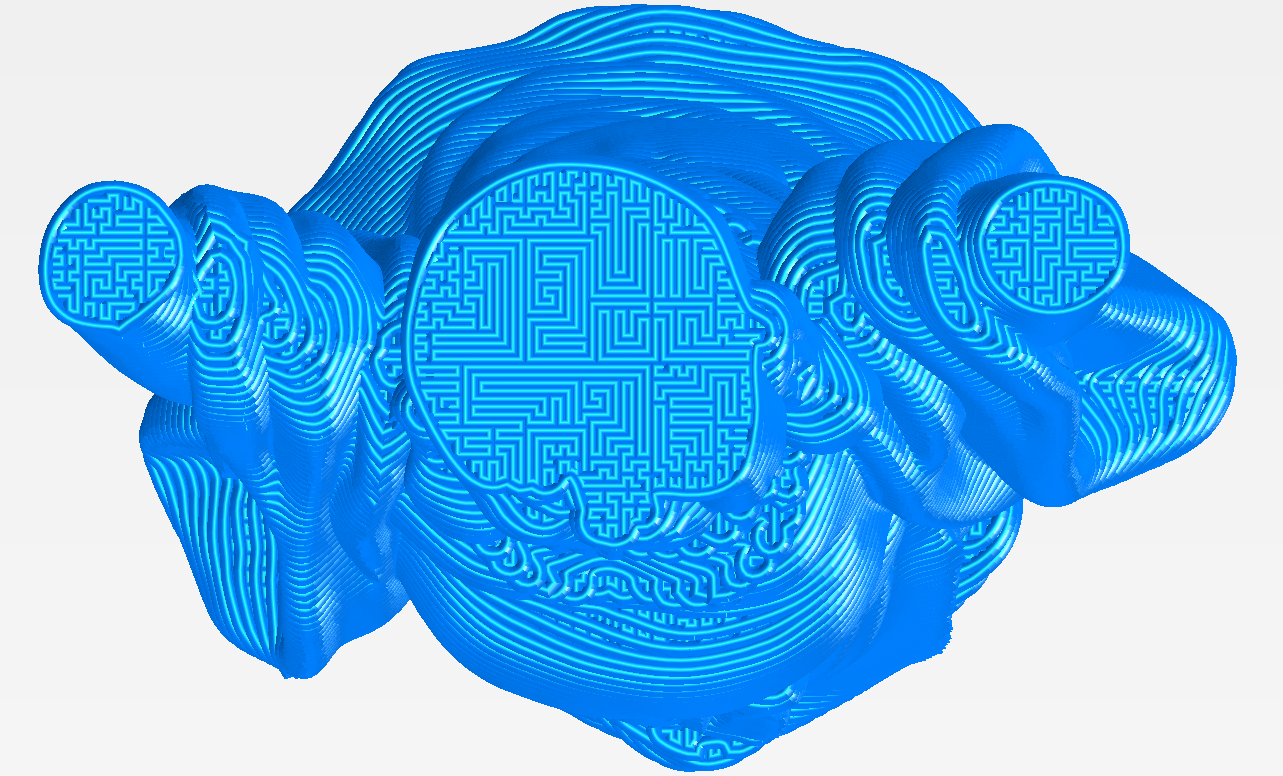}  
  \caption{\label{fig:bunnybuddha} Layers 125 and 312 of Bunny model (top) and layers 90 and 138 of Buddha model (bottom).}
\end{figure*}

\section{Mechanical Testing} \label{sec:mechtest}

We experimentally investigated the effects of using two space filling curve designs generated by SFCDecomp that provide maximum (Max) and minimum (Min) interlayer edge overlap (Figure \ref{fig:mechtest}c).
  We hypothesized that the contact surface area between adjacent layers is larger in a Max pattern as compared to a Min pattern.
  Therefore a Max pattern should have better interlayer bonding and strength along the layer stacking direction.
We found that specimens with minimum and maximum edge overlaps have significantly different tensile behavior.

\subsection{Experimental Design and Fabrication}

We created tensile specimens shaped as a rectangular cuboid with a hole in the center (Figure \ref{fig:mechtest}a).
The layers were stacked along the $z$-axis and the tensile test was conducted with the load applied along the $x$-axis (Figure \ref{fig:mechtest}c).
Ideally, the tensile load should be applied along the layer stacking direction to directly investigate the effect of different interlayer edge overlap patterns on interlayer adhesion.
But due to the limitation in equipment and material, a specimen with a large enough geometry to include enough subdomains was not feasible.
Hence a flat rectangular cuboid specimen was implemented.
For uniform placement on the print platform, the first layer for all samples was printed using the same zigzag design at 0.2 mm height.
The specimens had 19 layers at 0.2 mm height on the base layer.
We used a polylactic acid (PLA) filament with 1.75 mm diameter in a Cartesian desktop fused filament fabrication (FFF) device with a nozzle of 0.4 mm in diameter (from MatterHackers Inc., USA).
We measured tensile mechanical properties using a universal tensile testing machine (INSTRON 600DX) along with an extensometer (Epsilon 3542-0200-50-ST) of 2'' gauge length and a load cell rated for 1000 lbs.
We performed the tests with a constant strain rate of 0.8 mm/minute.

Printing many orthogonal line segments with 90$^\circ$ turns (with extrusion width of 0.4 mm) caused several holes to appear at intersections of multiple extrusion corners.
Hence we extruded 10\% more material near a 90$^\circ$ turn so as to reduce the size and number of these holes.
We fabricated the specimens at a hot-end temperature of 200$^\circ$C, print-bed temperature of 45$^\circ$C, and extrusion speed of 30 mm/s and acceleration limit of 500 m/s$^2$.
We tested six samples each with maximum (Max) and minimum (Min) edge overlaps across adjacent layers.
We repeated two distinct pairs of print patterns for adjacent layers that satisfied the criteria for maximum and minimum edge overlaps.

\begin{figure}[ht!]
  \includegraphics[width=\textwidth]{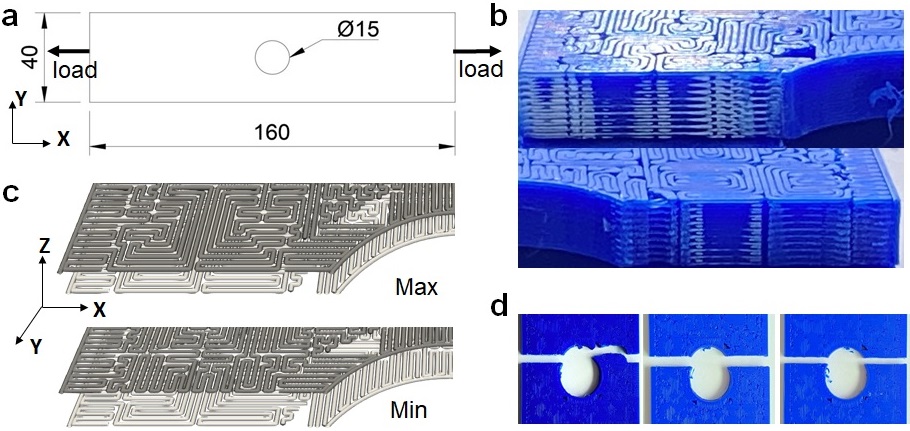}
  \caption{\label{fig:mechtest}
    Mechanical testing.
    a) Dimensions of specimen (in mm) with a hole in the middle.
    b) Cross sections of failure.
    c) Sample adjacent layers for maximum (``Max'' at top) and minimum (``Min'' at bottom) edge overlap designs.
    d) Failure lines were similar across most specimens.
  }
\end{figure}

\subsection{Results and Discussion}

Results from the tensile tests are shown in Table \ref{tab:mechtestResults} and the stress--strain curves shown in Figure \ref{fig:StrStrCrvs}.
The fracture surfaces suggested the specimens failed primarily due to localized delamination \cite{DoJu2019} between the extrusion filaments.
These failures are along sections where a large number of extrusions were printed in the direction perpendicular to the load direction (Figure \ref{fig:mechtest}b) and often coincide with cell boundaries.
Within each set of the Max and Min samples, the fractures occurred mostly at the same site (Figure \ref{fig:mechtest}d).

\begin{table}[ht!]
  \caption{  \label{tab:mechtestResults}
    Test results of specimens with maximum edge overlap (Table \ref{tab:MaxResults}, left) and with minimum edge overlap (Table \ref{tab:MinResults}, right).
    Mdls gives the tensile modulus in GPa, Strgth gives the tensile strength in MPa, and Elngtn gives the elongation at break (as \%).
    The last two rows in each table list the averages (Max/Min\_Mean) and standard deviations (Max/Min\_Stdev) of the measurements.
  }
  \centering
  \begin{subtable}[t]{0.42\textwidth}
    \caption{\label{tab:MaxResults} Maximum overlap samples.}
    \begin{tabular}{|l|c|c|c|} \hline
      Sample &  Mdls  & Strgth & Elngtn  \\ \hline \hline
      Max\_1 &  2.67 &  21.67 & 1.03 \\ \hline
      Max\_2 &  2.48 &  13.43 & 0.70 \\ \hline
      Max\_3 &  2.63 &  12.22 & 0.57 \\ \hline
      Max\_4 &  3.16 &  17.39 & 0.74 \\ \hline
      Max\_5 &  2.56 &  13.24 & 0.70 \\ \hline
      Max\_6 &  2.55 &  12.93 & 0.79 \\ \hline \hline
   Max\_Mean &  2.68 &  15.15 & 0.76 \\ \hline 
  Max\_Stdev &  0.25 &   3.68 & 0.15 \\ \hline
    \end{tabular}
  \end{subtable}
  \hspace*{0.4in}
  \begin{subtable}[t]{0.42\textwidth}
    \caption{\label{tab:MinResults} Minimum overlap samples.}
    \begin{tabular}{|l|c|c|c|} \hline
      Sample &  Mdls  & Strgth & Elngtn  \\ \hline \hline
      Min\_1 &  2.53 &  17.46 & 0.76 \\ \hline
      Min\_2 &  2.42 &  15.14 & 0.69 \\ \hline
      Min\_3 &  2.60 &  17.94 & 0.80 \\ \hline
      Min\_4 &  2.54 &  17.45 & 0.76 \\ \hline
      Min\_5 &  2.63 &  18.63 & 0.81 \\ \hline
      Min\_6 &  2.73 &  18.95 & 0.79 \\ \hline \hline
      Min\_Mean &  2.57 &  17.60 & 0.77 \\ \hline
      Min\_Stdev &  0.10 &   1.35 & 0.04 \\ \hline 
    \end{tabular}
  \end{subtable}
\end{table}

The first Max sample (Max\_1) exhibited significantly higher tensile strength and elongation at break compared to other Max specimens (Table \ref{tab:MaxResults} and left plot in Figure \ref{fig:StrStrCrvs}).
This outlier may be due to manufacturing variation that resulted in overall lower porosity in this sample, thus improving the tensile properties.
Removing this outlier decreases further the tensile strength of the Max samples (mean: 13.84, standard deviation: 2.04).
All Min samples exhibited similar tensile behavior and properties (Table \ref{tab:MinResults} and right plot in Figure \ref{fig:StrStrCrvs}).
Overall, the minimum edge overlap samples exhibited better tensile strength (mean: 17.60, standard deviation: 1.35) and modulus than the maximum edge overlap ones while the elongation at break is similar for both sets.

The stress-strain curves (Figure \ref{fig:StrStrCrvs}) reveal similarity in tensile moduli of the maximum and minimum overlap patterns.
All specimens experienced elongation at break and tensile strength less than the material specification.
The reduced elasticity could be due to the combined effect of intralayer adhesion arising from extrusion segment orientation \cite{ToRo2016}, and the type of dominant defects \cite{FaMoKa2019}  present as the result of different interlayer pattern generation strategy.
 Although the two pattern generating strategies maximized and minimized interlayer edge overlap, the effect of interlayer adhesion was not investigated in detail due to the tensile load direction being orthogonal to the layer stacking direction. 

\begin{figure}[ht!]
  \centering
  \includegraphics[width=3.2in]{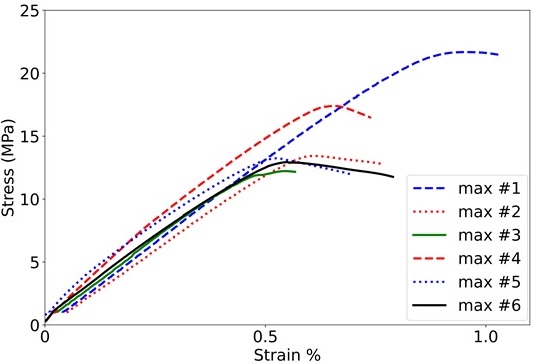}
  \hspace*{0.02in}
  \includegraphics[width=3.2in]{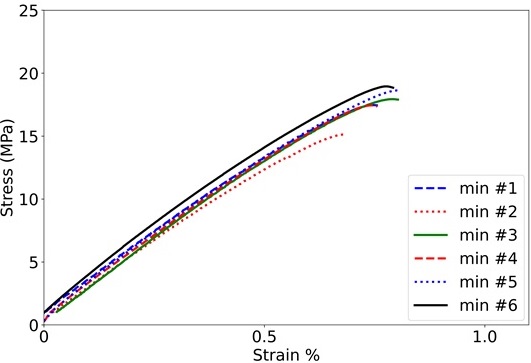}
  \caption{\label{fig:StrStrCrvs}
    Stress-Strain curves for maximum edge overlap specimens (left) and minimum edge overlap specimens (right).
  }
\end{figure}

 In a typical Cartesian-based additive manufacturing device, the toolhead motion is decomposed into components along the $x,y$, and $z$ axes and are actuated independently in each axis \cite{YoLi2016}.
 Any turn incurs a time cost of the actuator having to accelerate to overcome the momentum of the toolhead to change direction.
 On the other hand, many of the conventional infill patterns such as contour-parallel and spiral suffer from directional bias.
 In applications that purposely require a slow print speed such as geometrically accurate hydrogel bioprinting \cite{WeDo2017} or thermoplastic with enhanced mechanical properties \cite{ChChVe2016}, the actuators could accelerate to the target speed within a very short amount of time and employing SFCDecomp may be particularly beneficial.
 Other cases that prioritize directional uniformity more than turn cost may also utilize SFCDecomp by choosing parameters appropriately in the framework.

\section{Discussion}
\vspace*{-0.1in}

We have developed a decomposition approach to solve large instances of optimized path planning problem in 3D printing where each sub-polygon is guaranteed to have a dual graph with a feasible tool path. 
Our framework guarantees that discontinuities in the tool path, if any, are located only at the boundary of the original (input) polygon.
Further, we can change the Hilbert ordering of the cells by changing the enter and exit corner vertices of the initial cell in the quadtree.
We can also create various decompositions for the same IOP by changing the values of parameters $\delta$ and $\Delta$.

The edge weights in our graph framework could model multiple quality factors including turn costs, edge overlap across adjacent layers, tool path length, and others.
Our mechanical testing has shown that changing the extent of overlap across layers could impact the mechanical strength of the printed object.
Another potential application of our framework is the optimization of internal microstructure and thermal management by choosing appropriately defined edge weights derived from physical models and/or experiments, which in turn could result in increased strength of the printed objects.

For the Buddha and Bunny models together, our framework solved more than 10,000 MIP subproblems across more than 500 layers.
These MIP instances could be solved independently, and hence in an embarrassingly parallel fashion.
Alternatively, our framework allows the use of approximation algorithms or heuristics to solve the subproblems instead of solving the MIP model to optimality \cite{ApBiChCo2007}.
We could also reuse optimal solutions for cells that reoccur across multiple layers.
Using uniform edge weights, and varying the relative importance parameter $\alpha$ (Equation \ref{eq:IPobj}), we could obtain fractal-like patterns for the toolpath.  

If we solve the full MIP model including subtour constraints (\ref{eq:IPsubtr}), any discontinuities in the tool path will be located at the boundary of the original polygon. 
This setting could be of concern when the polygon is relatively thin as compared to size of extruder.
We could consider reducing the extruder size to handle such situations, or consider alternative methods (e.g., spiral or zigzag patterns).
In extreme cases where the polygon in a given layer has many sharp curvature regions, we could have many small sized sub-polygons near the boundary.
This setting could create several discontinuities in the tool path at the boundary.
We will explore methods to handle such extreme cases in our future work.

Our mechanical testing experiments (Section \ref{sec:mechtest}), while preliminary, already demonstrate that the amount of edge overlap across adjacent layers could significantly affect the strength of the print.
We plan to employ the SFCDecomp framework to study in detail the effects of tool path design as well as edge overlap across layers on various mechanical properties.

\medskip
\noindent {\bfseries Acknowledgment:}
Gupta and Krishnamoorthy acknowledge partial funding from the US National Science Foundation through grants 1661348 and 1819229.

\vspace*{-0.2in}


\input{SFCDecomp.bbltex}
\end{document}